  \newtheorem{theorem}{Theorem}
  \newtheorem{lemma}{Lemma}
    \newtheorem{corollary}{Corollary}
  \newtheorem{definition}{Definition}
  \newtheorem{observation}{Observation}
  \newtheorem{claim}{Claim}
\begin{document}
%

\newcommand{\tudu}[1]{{\bf{TODO: #1}}}
\newcommand{\transname}[2]{$#1 \slash #2$-{\sc{transversal}}}
\newcommand{\himmersion}{{$\mathbf{H}$\sc{-Immersion}}\xspace}
\newcommand{\dpath}{{\sc{Edge Disjoint Paths}}\xspace}
\newcommand{\toursection}{{\sc{Semi-complete Digraph Balanced Cut}}\xspace}
\newcommand{\vertexpath}{{\sc{Vertex Disjoint Paths}}\xspace}
\newcommand{\SAT}{{\sc{SAT}}\xspace}
\newcommand{\tcnfsat}{{\sc{3-CNF-SAT}}\xspace}
\newcommand{\compass}{\ensuremath{\textrm{NP} \subseteq \textrm{coNP}/\textrm{poly}}}
\newcommand{\cw}{\mathbf{cw}}
\newcommand{\ctw}{\mathbf{ctw}}
\newcommand{\pw}{\mathbf{pw}}
\newcommand{\N}{\ensuremath{\mathbb{N}}}
\newcommand{\Aa}{\ensuremath{\mathcal{A}}}
\newcommand{\Bb}{\ensuremath{\mathcal{B}}}
\newcommand{\Jj}{\ensuremath{\mathcal{J}}}
\newcommand{\Pp}{\ensuremath{\mathcal{P}}}
\newcommand{\Qq}{\ensuremath{\mathcal{Q}}}
\newcommand{\Ss}{\ensuremath{\mathcal{S}}}
\newcommand{\Ii}{\ensuremath{\mathcal{I}}}
\newcommand{\randfamily}{\ensuremath{\mathcal{F}}}
\newcommand{\app}{$\star$}

\newcommand{\WO}{{\sc{W[1]}}\xspace}
\newcommand{\FPT}{{\sc{FPT}}\xspace}
\newcommand{\NP}{{\sc{NP}}\xspace}
\newcommand{\XP}{{\sc{XP}}\xspace}
\newcommand{\MSO}{\ensuremath{\mathbf{MSO}}}
\newcommand{\ff}{\ensuremath{\varphi}}

\newcommand{\FF}{\ensuremath{\mathfrak{F}}}
\newcommand{\Ff}{\ensuremath{\mathcal{F}}}
\newcommand{\Gg}{\ensuremath{\mathcal{G}}}
\newcommand{\UU}{\ensuremath{\mathfrak{U}}}
\newcommand{\ptrace}{trace}
\newcommand{\ptraces}{traces}

\newcommand{\vf}[2]{\mathcal{V}_{(#1,#2)}}
\newcommand{\ef}[2]{\mathcal{E}_{(#1,#2)}}
\newcommand{\nequiv}{\not\equiv}

\newcommand{\defproblemu}[3]{
  \vspace{1mm}
\noindent\fbox{
  \begin{minipage}{0.95\textwidth}
  #1 \\
  {\bf{Input:}} #2  \\
  {\bf{Question:}} #3
  \end{minipage}
  }
  \vspace{1mm}
}
\newcommand{\defparproblem}[4]{
  \vspace{1mm}
\noindent\fbox{
  \begin{minipage}{0.97\textwidth}
  \begin{tabular*}{\textwidth}{@{\extracolsep{\fill}}lr} #1  & {\bf{Parameter:}} #3 \\ \end{tabular*}
  {\bf{Input:}} #2  \\
  {\bf{Question:}} #4
  \end{minipage}
  }
  \vspace{1mm}
}

\title{Jungles, bundles, and fixed parameter tractability}

\author{Fedor V. Fomin
\thanks{Department of Informatics, University of Bergen,  Bergen, Norway,  \{fomin,michal.pilipczuk\}@ii.uib.no} \and  Micha\l{} Pilipczuk~$^*$
}

\begin{titlepage}
\def\thepage{}
\thispagestyle{empty}
\maketitle

\begin{abstract}
We give a fixed-parameter tractable (\FPT) approximation algorithm computing the pathwidth of a tournament, and more generally, of a semi-complete digraph.
Based on this result, we prove the following.
\begin{itemize}
\item The {\sc{Topological Containment}} problem is \FPT on semi-complete digraphs. More precisely, given a semi-complete $n$-vertex digraph $T$ and a digraph $H$, one can check in time  $f(|H|)n^3\log n$, where $f$ is some elementary function, whether $T$ contains a subdivision of $H$ as a subgraph. The previous known algorithm for this problem was due to Fradkin and Seymour and was of running time $n^{m(|H|)}$, where $m$ is a quadruple-exponential function.
\item The {\sc{Rooted Immersion}} problem is \FPT on semi-complete digraphs. The complexity of this problem was left open by Fradkin and Seymour. Our algorithm solves it in time $g(|H|)n^4\log n$, for some elementary function $g$.
\item Vertex deletion distance to every immersion-closed class of semi-complete digraphs is fixed-parameter tractable. More precisely, for every immersion-closed class $\Pi$ of semi-complete digraphs, there exists an algorithm with running time $h(k)n^3\log n$ that checks, whether one can remove at most $k$ vertices from a semi-complete $n$-vertex digraph to obtain a digraph from class $\Pi$. Here, $h$ is some function depending on the class~$\Pi$.
\end{itemize}


\end{abstract}



\end{titlepage}

\section{Introduction}\label{sec:introduction}

\medskip
\noindent\textbf{Graph Minors Theory} of Robertson and Seymour~\cite{RobertsonS_GM} significantly reshaped the modern way of thinking about graphs. Apart from being a beautiful mathematical concept, the theory offers a number of fundamental algorithmic applications~\cite{RS95,RobertsonS2}. Arguably, the most fruitful of these is based on obstructions to admitting tree decompositions of small width. The structure of obstructions allows one to use the following WIN/WIN approach. If the treewidth of the input graph turns out to be small, it is possible to solve the problem efficiently, e.g., by applying dynamic programming. However, if the treewidth is large, we can find a sufficiently large obstruction, whose structure can be used either to immediately provide an answer, or to identify a vertex or an edge that is irrelevant to the solution, and hence can be safely removed from the graph. This approach is the corner stone of Robertson and Seymour seminal  fixed-parameter tractable (\FPT) algorithm checking, whether the input (undirected) graph $G$ contains some fixed graph $H$ as a minor~\cite{RS95}. As a consequence, various graph problems have been shown to have polynomial-time algorithms, some of which were previously not even known to be decidable~\cite{FellowsL88}.

In combination with algorithms for computing the treewidth of a graph \cite{Amir10,Bodlaender96} and powerful logic tools formalizing graph properties \cite{ArnborgLS91,Courcelle90,Courcelle92a}, the WIN/WIN paradigm became ubiquitous in (undirected) graph algorithms. Applications of this approach include the algorithm of Robertson and Seymour for the disjoint paths problem \cite{RS95}, decidability of Hadwiger's conjecture \cite{KawarabayashiR09}, model-checking \cite{DawarGK07},  bidimensionality \cite{DFHT05,FominLRS10,F.V.Fomin:2010oq}, graph embeddings \cite{Grohe01,KawarabayashiMR08}, minor ideals \cite{AGK08}, and topological minor containment \cite{GroheKMW11} among many others.

\medskip
\noindent\textbf{Directed graphs.} It is a very natural and important question, whether techniques and results of Graph Minors can be applied in the world of \emph{directed} graphs or digraphs. In spite of many attempts, we are still very far from the right answer. Even to capture a good analogue for treewidth in digraphs is a non-trivial task and several notions like directed treewidth~\cite{JohnsonRST01-Di}, DAG-width~\cite{BerwangerDHK06,Obdrzalek06-DA} or Kelly-width~\cite{KreutzerH07-Di} can be found in the literature. However, none of them shares all the ``nice" properties of undirected treewidth. In fact this claim can be formalized and proved; Ganian et al. argued that ``{any reasonable} algorithmically useful and structurally nice digraph measure cannot be substantially different from the treewidth of the underlying undirected graph" \cite{GanianHKMORS10}.

The notion of graph minor is crucial in defining obstructions to small treewidth in an undirected graph. There are several ways to generalize this definition to digraphs and, as in case of treewidth, it is unclear which of them is the most natural. One approach is to consider topological embeddings or immersions. A graph $H$ is a \emph{topological subgraph} (or \emph{topological minor}) of graph $G$ if $H$ can be obtained from a subgraph of $G$ by subdividing some of its edges. In other words, graph $H$ can be embedded into graph $G$ in such a way that edges of $H$ correspond to vertex-disjoint paths in $G$. An \emph{immersion} of a graph $H$ into a graph $G$ is defined like a topological embedding, except that edges of $H$ correspond to edge-disjoint paths in $G$. Long standing open questions were to decide if a graph $H$ can be topologically embedded (immersed) into $G$ is fixed-parameter tractable, when parameterized by the size of $H$. Both questions were answered positively only very recently by Grohe, Kawarabayashi, Marx, and Wollan~\cite{GroheKMW11}. Unfortunately, the work of Grohe et al. cannot be extended to directed graphs. By the classical result of Fortune, Hopcroft, and  Wyllie~\cite{FortuneHW80} the problem of testing whether a given digraph $G$ contains $H$ as a (directed) topological subgraph is \NP-complete even for very simple digraphs $H$ of constant size. Similar results can be easily obtained for immersions. Therefore, the most important two special cases of \emph{rooted} topological embedding and immersions, namely {\sc{Vertex (Edge) Disjoint Paths}} problems, were studied intensively on different classes of directed graphs. For example if we constrain the input digraphs to be acyclic, then both variants still remain \NP-complete~\cite{EvenIS76}, but are polynomial-time solvable for every constant number of terminal pairs $k$~\cite{FortuneHW80}, which was not the case in the general setting~\cite{FortuneHW80}. Slivkins showed that both problems are in fact \WO-hard when parameterized by $k$~\cite{Slivkins10}, thus completing the picture of their parameterized complexity in this restricted case.

\medskip
\noindent\textbf{Tournaments} form an interesting and mathematically rich subclass of digraphs. Many algorithmic problems were studied on tournaments, including problems strongly related to our work: {\sc{Vertex (Edge) Disjoint Paths}} and {\sc{Feedback Arc (Vertex) Set}} problems. In particular, {\sc{Vertex (Edge) Disjoint Paths}} problems remain NP-complete on tournaments when $k$ is a part of the input.
However, they  were known to be solvable in polynomial time for the case $k=2$ on a superclass of tournaments called semi-complete graphs \cite{Bang-Jensen91,Bang-JensenT92}. Different algorithmic aspects of the {\sc{Feedback Arc (Vertex) Set}} problem are discussed in~\cite{AilonCN08,Alon06,AlonLS09,BessyFGPPST09,CharbitTHY07,Moon:1971xz,RamanS06}.

Recent work of Chudnovsky, Fradkin, Scott, and  Seymour \cite{ChudnovskyFS2011,ChudnovskySS2011,ChudnovskyS11,fradkin-seymourEDP,fradkin-seymour} drastically advanced the study of minor-related problems in tournaments. One of the main tools in this novel approach is the cutwidth of a digraph, which in tournament setting plays a similar role to the treewidth of undirected graphs. The main algorithmic advantage of cutwidth is that there is an \FPT-approximation algorithm that in \FPT time either finds a vertex ordering of width at most $f(k)$, or constructs an obstruction for cutwidth $k$, for some function $f$~\cite{fradkin-seymourEDP}. By making use of cutwidth, Fradkin and Seymour gave an \FPT algorithm for the {\sc{Edge Disjoint Paths}} problem in tournaments. Similarly, for every digraph $H$ of size $k$ and an $n$-vertex tournament $T$, deciding if $H$ has immersion in $T$ can be done in time $f(k)n^{4}$, for some function $f$~\cite{fradkin-seymourEDP}.

The main limitation of cutwidth is that this parameter is not robust subject to even small changes in the digraph. For example, the cutwidth of an acyclic tournament with $n$ vertices is zero, but adding just one vertex to this tournament can increase its cutwidth up to $\lfloor n/2 \rfloor$. In order to circumvent this limitation, Fradkin and Seymour in \cite{fradkin-seymour} obtained an obstruction theorem for pathwidth of a tournament, by showing that tournaments of large pathwidth always contain structures called jungles, and more refined objects called triples. 
Fradkin and Seymour also gave an  algorithm that either computes a path decomposition of a tournament of width at most $f(k)$, or outputs a triple of size $k$, for some function $f$.
The running time of their algorithm is $n^{m(k)}$ for  some function $m$; algorithms with such running times are often referred as \XP algorithms, see  the book of Flum and Grohe \cite{FlumGrohebook} for the formal definition of class \XP. 
 Pathwidth is a more general width parameter than cutwidth, so the Fradkin-Seymour obstruction theorem can be used to tackle more general problems. For example,
this brings Fradkin and Seymour to an \XP algorithm that for every digraph $H$ of size $k$ and $n$-vertex semi-complete digraph $T$, decides if $H$ is topologically contained in $T$ in time $n^{m(k)}$, which is polynomial for fixed $k$~\cite{fradkin-seymour}. The only obstacle for using pathwidth as a tool for obtaining \FPT algorithms on tournaments was that there were no algorithms known constructing or approximating the pathwidth of a tournament in \FPT time.


\medskip
\noindent\textbf{Our results.} We eliminate the existing limitations of pathwidth and show that it is possible in FPT time   either to construct a path decomposition of ``small" width, or to find a ``large" triple, which topologically contains all ``small" digraphs and is an obstruction for pathwidth $k$. Our algorithm is based on structural theorems of Fradkin and Seymour \cite{fradkin-seymour} but requires several new ideas.
More precisely, we prove the following theorem. (In Section~\ref{sec:prelim} we define semi-complete digraphs --- a superclass of tournaments, pathwidth, jungles and triples)

\begin{theorem}\label{thm:bundle}
There is an algorithm that for a given
semi-complete graph $T=(V,E)$ and an integer $k$, in time
$2^{O(k\log{k})}|V|^3 \log{|V|}$ outputs
\begin{itemize}
\item[(1)] either a path decomposition $W$ of width at most $4k^2+7k$,
\item[(2)] or a $k$-jungle $\Jj$.
\end{itemize}
\end{theorem}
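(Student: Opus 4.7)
The plan is to build on the Fradkin--Seymour structural theorem, which guarantees that a semi-complete digraph of large pathwidth contains a large jungle, but to replace their exhaustive XP-time search with a more careful iterative procedure. Their algorithm is slow because, at every step, it essentially enumerates ``candidate'' structures of size polynomial in $n$; our gain should come from proving that only $2^{O(k\log k)}$ truly different local configurations can occur, so we can enumerate them in FPT time and either extend a partial path decomposition or accumulate an obstruction.

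First, I would design a single-step subroutine with the following guarantee: given the current ``active frontier'' of the decomposition, in time $2^{O(k\log k)} |V|^2$ it either (a) produces the next bag of a path decomposition, keeping the frontier at size at most $4k^2+7k$, or (b) identifies a local certificate --- a small vertex set together with a system of disjoint paths through it --- that witnesses high local connectivity and cannot be ``passed over'' by any narrow decomposition. The width bound $4k^2+7k$ should come out of the combinatorics of how a $k$-jungle witness is embedded into the boundary: roughly $k$ vertices for each of $O(k)$ ``strands'' of a candidate triple, plus $O(k)$ additional vertices for bookkeeping.

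Second, I would iterate this subroutine from left to right along the graph. If each step succeeds in mode (a), we obtain the desired path decomposition outright. If failures occur, we collect the local certificates; the key combinatorial claim is that after $O(1)$ failures of a consistent ``type'', the certificates can be amalgamated into a genuine $k$-jungle. The $2^{O(k\log k)}$ factor reappears here because we classify certificates by their type (the relative order and orientation of $O(k)$ distinguished vertices), and a pigeonhole/Ramsey-style argument on $2^{O(k\log k)}$ types guarantees that enough compatible copies must appear before the potential accumulates too much. A divide-and-conquer recursion of depth $O(\log |V|)$, with $O(|V|^2)$ amortized work per recursion node, yields the claimed $2^{O(k\log k)} |V|^3 \log |V|$ total time.

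The main obstacle I anticipate is the jungle extraction step: even if we collect many local certificates, turning them into a single jungle requires matching their individual witness paths into a globally consistent system, which in semi-complete digraphs reduces to solving many small linkage problems. Here I expect to use the semi-completeness property crucially --- between any two frontier vertices an arc exists in at least one direction, so candidate witness paths can usually be rerouted through the frontier itself rather than through the bulk of the graph. Making this rerouting algorithmic, and bounding the loss in quality by the claimed multiplicative constants $4$ and $7$, is the part of the argument where most of the technical care will be needed; the rest of the proof should reduce, essentially by bookkeeping, to iterating the one-step subroutine along a divide-and-conquer scheme.
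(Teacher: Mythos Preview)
Your proposal is not a proof but a speculative sketch, and its mechanisms do not match what actually makes the algorithm work. The paper's argument hinges on two concrete ingredients that are absent from your outline.

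First, the path decomposition is obtained by constructing a \emph{bundle} of order $k$ in the sense of Fradkin and Seymour: a maximal cross-free family of separations of order $<k$, no two of which are $k$-close. The bound $4k^2+7k$ is not a byproduct of ``$k$ vertices for each of $O(k)$ strands of a triple''; it is exactly the bound from Lemma~\ref{lem:maximality_bundle}, which says that any bag larger than this in the decomposition induced by a bundle contains a set that is not $(k,k^2)$-separable. The algorithmic bottleneck is inserting new separations into the bundle: between two consecutive separations one must decide whether a separation of order $i$ exists that is not $k$-close to either neighbour. This is the \textsc{Semi-complete Digraph Balanced Cut} problem, and the paper solves it in $2^{O(k\log k)}|V|^2\log|V|$ time by \emph{color coding} (derandomized via splitters). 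That is where both the $2^{O(k\log k)}$ factor and the $\log|V|$ factor come from --- not from a Ramsey-type classification of ``certificate types'' and not from a divide-and-conquer recursion. Your outline never identifies this subproblem, and without an FPT routine for it the bundle cannot be built in the stated time.

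Second, the jungle is not assembled by ``amalgamating local certificates''. Once a bag exceeds $4k^2+7k$, Lemma~\ref{lem:maximality_bundle} hands you a single set $W$ of size $>4k^2+5k$ that is not $(k,k^2)$-separable. From such a $W$ the jungle is extracted in $O(|W|^2)$ time by a direct degree argument: take $Z=\{w\in W: d^+_{T[W]}(w),\,d^-_{T[W]}(w)\ge k+k^2\}$; semi-completeness forces $|Z|\ge k$, and non-separability of $W$ shows any two vertices of $Z$ are $k$-connected. No rerouting or linkage-solving is involved. Your ``main obstacle'' --- matching witness paths into a globally consistent system --- is therefore not the obstacle at all; the real missing idea is the FPT balanced-cut subroutine via color coding.
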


Fradkin and Seymour prove that if a semi-complete digraph contains a $f(k)$-jungle, then it contains also a $k$-triple, in which every digraph of size $k$ can be topologically embedded. From a close examination of their proof one can extract an algorithm with running time $O(|V|^3 \log |V|)$. Thus, if the digraph has large pathwidth, then every small enough digraph can be topologically embedded into it. On the other hand, it easy to show that the cliquewidth of a semi-complete digraph is always at most its pathwidth plus $2$. Since topological minor containment problem is expressible in $\MSO_1$, the results of Courcelle, Makowsky and Rotics \cite{CourcelleMR00} immediately give an \FPT algorithm parameterized by the width of the obtained path decomposition and the size of the embedded digraph. This implies that \textsc{Topological containment} (we give precise definition of the parameterized problem \textsc{Topological containment} in Section~\ref{sec:prelim}) in semi-complete digraphs is \FPT, which substantially improves the algorithm of Fradkin and Seymour~\cite{fradkin-seymour}. We remark that it is also possible to construct an explicit dynamic programming algorithm working on path decomposition, thus circumventing the usage of tools from logic.
\begin{theorem}\label{thm:main}
There exists an algorithm, which given a digraph $H$ and a semi-complete digraph $T=(V,E)$, checks whether $H$ is topologically contained in $T$ in time $g(|H|)|V|^3 \log{|V|}$ for some elementary function $g$.
\end{theorem}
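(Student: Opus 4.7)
The proof I would give follows the win/win dichotomy that Theorem~\ref{thm:bundle} was designed to enable. Let $h=|H|$ and choose $k=k(h)$ large enough that every $k$-jungle in a semi-complete digraph contains a structure (a triple) into which every $h$-vertex digraph topologically embeds; the required $k$ is supplied by the structural theorem of Fradkin and Seymour relating jungles to triples. Running the algorithm of Theorem~\ref{thm:bundle} on $T$ with this $k$ costs $2^{O(k\log k)}|V|^3\log|V|$, which is of the target form $g_0(h)\cdot|V|^3\log|V|$, and outputs either a $k$-jungle $\Jj$ or a path decomposition $W$ of width $w:=4k^2+7k$.

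In the jungle case, I would extract an algorithmic version of the Fradkin and Seymour argument: given a $k$-jungle $\Jj$, in time $O(|V|^3\log|V|)$ locate inside $\Jj$ an $h$-triple and then route the arcs of $H$ as internally vertex-disjoint directed paths through the triple, thereby exhibiting a subdivision of $H$ in $T$ and answering YES. The non-constructive existence steps in the structural proof must be replaced by greedy or flow-type searches, but the combinatorial ingredients (high-indegree/outdegree vertices, directed paths, matchings) admit near-linear implementations, and the recursion depth is bounded in terms of $h$.

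In the decomposition case I exploit the fact that the cliquewidth of a semi-complete digraph is always at most its pathwidth plus $2$, so from $W$ one constructs in linear time a cliquewidth expression of width $w+2$. The property ``$H$ is a topological subgraph of $T$'' is expressible in $\MSO_1$ by a formula whose size depends only on $h$: existentially guess the images of the branch vertices of $H$, together with, for every arc of $H$, a vertex set meant to carry an internally disjoint directed path realising that arc, and verify connectivity and disjointness with a fixed $\MSO$ formula. Invoking the Courcelle, Makowsky and Rotics meta-theorem on graphs of bounded cliquewidth then decides the property in time $g_1(h,w)\cdot|V|=g(h)\cdot|V|$ for some elementary $g_1$. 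Alternatively, as the excerpt notes, one can run an explicit dynamic program over $W$ whose state records the partial image of $H$ together with the unmatched path endpoints stored in the current bag, bypassing any logical machinery.

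The chief obstacle is the jungle branch: the Fradkin and Seymour proof guarantees the existence of the triple and of the topological embedding of every small digraph into it, but is written structurally. The bulk of the work is therefore an algorithmic rewriting of that argument, ensuring both the triple extraction and the edge-routing phase run within the target $O(|V|^3\log|V|)$ budget. Combining the two branches with the run of Theorem~\ref{thm:bundle} yields the claimed bound $g(|H|)\cdot|V|^3\log|V|$.
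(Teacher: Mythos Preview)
Your proposal is correct and follows essentially the same win/win route as the paper: run Theorem~\ref{thm:bundle}, answer YES in the jungle branch via the jungle$\to$triple$\to$topological-containment chain, and in the decomposition branch express topological containment in $\MSO_1$ and invoke Theorem~\ref{thm:MC} (or an explicit dynamic program).

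The one place you make life harder for yourself than the paper does is the jungle branch. You flag ``an algorithmic rewriting'' of the Fradkin--Seymour jungle$\to$triple argument as the chief obstacle, but this is precisely Lemma~\ref{lem:jungle_triple_algo}, already stated in the preliminaries with the $O(|V|^3\log|V|)$ bound. Moreover, since the problem is a decision problem, you do not need to route the arcs of $H$ through the triple at all: once Theorem~\ref{thm:bundle} outputs a $k$-jungle with $k=f(|H|)$, Lemmas~\ref{lem:jungle_triple_algo} and~\ref{lem:triple-model} together guarantee that $H$ is topologically contained in $T$, so the algorithm may simply return YES. The paper does exactly this, and so the jungle branch contributes nothing beyond the cost of Theorem~\ref{thm:bundle} itself.
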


Next application of Theorem~\ref{thm:bundle} shows that the \emph{rooted} immersion, which is a common generalization of immersion and the Edge Disjoint Paths problem, is \FPT on semi-complete digraphs.
 \begin{theorem}\label{thm:immersion}
There exists an algorithm, which given a rooted digraph $(H;v_1,v_2,\ldots,v_t)$ and a semi-complete rooted digraph $(T;w_1,w_2,\ldots,w_t)$, checks whether $H$ can be immersed into $T$ while preserving roots in time $g(|H|)|V(T)|^4\log |V(T)|$ for some elementary function $g$.
\end{theorem}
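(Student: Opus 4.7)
The plan is to combine Theorem~\ref{thm:bundle} with an iterative irrelevant-vertex argument, following the classical Robertson--Seymour WIN/WIN scheme. Fix a sufficiently large $k=K(|H|)$ (to be determined) and invoke Theorem~\ref{thm:bundle} on $T$ in time $2^{O(k\log k)}|V(T)|^3\log|V(T)|$. One of two things happens.

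Case (1): we obtain a path decomposition of $T$ of width $w=O(k^2)$. In this case I would solve rooted \himmersion by explicit dynamic programming along the decomposition. A state at a bag $B$ records, first, which edges of $H$ already have their images fully routed, partially routed, or not yet started, and for each partially routed edge, the endpoint(s) in $B$ through which its path currently exits; second, how many times each edge of $T$ with at least one endpoint in $B$ has been used so far (capacities are respected because paths are edge-disjoint); and third, the current assignment of the root symbols $v_1,\dots,v_t$ that lie in $B$ to the prescribed vertices $w_1,\dots,w_t$. Since $|E(H)|$, $t$ and $w$ are all bounded by a function of $|H|$, the state space has size $h(|H|)$ for an elementary $h$, and the DP sweeps the decomposition in time $h(|H|)\cdot|V(T)|$. (We cannot simply appeal to the Courcelle--Makowsky--Rotics approach used for Theorem~\ref{thm:main}, because immersion is not expressible in $\MSO_1$: it requires reasoning about edge sets, and the underlying undirected graph of a semi-complete digraph has unbounded treewidth.)

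Case (2): we obtain a $k$-jungle $\Jj$, which by the Fradkin--Seymour theorem (with the cubic-time procedure extracted from their proof) yields a $k'$-triple $\mathcal{T}$ inside $T$, where $k'$ is a chosen function of $|H|$ ensuring that every digraph on at most $|H|$ vertices topologically embeds into $\mathcal{T}$, with substantial ``slack'' left over. My plan here is to identify an \emph{irrelevant vertex}: a vertex $v\in V(T)$ lying deep in the interior of $\mathcal{T}$ such that the rooted instance $(H;v_1,\dots,v_t)$ immerses into $(T;w_1,\dots,w_t)$ if and only if it immerses into $(T-v;w_1,\dots,w_t)$. Intuitively, the interior of a sufficiently large triple is so symmetric and edge-rich that any single vertex outside a small ``reserved'' area around the roots $w_1,\dots,w_t$ can be rerouted around. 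Since $t\le|H|$, by choosing $k'$ large enough we can set aside a bounded neighbourhood of the roots within $\mathcal{T}$ and still find an interior vertex $v$ that is safe to delete. The forward direction of irrelevance is trivial; the reverse direction is the crux: given a rooted immersion using $v$, one must reroute the paths that pass through $v$ (as internal or terminal vertex of some edge-image) using only the rich interior of $\mathcal{T}\setminus\{v\}$, while leaving all other edge-images and the root mapping intact. After locating and deleting $v$, recurse on $T-v$.

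The main obstacle will be the second case, specifically the combinatorial proof that an irrelevant interior vertex exists and can be located efficiently. This requires quantifying precisely how much ``routing capacity'' a triple offers in the edge-disjoint setting, making sure that the capacity suffices to absorb both the full immersion of $H$ and the local rerouting needed to bypass any single vertex, even when up to $t$ of the triple's vertices may be constrained to serve as roots. The fact that we can build a triple of any prescribed size by choosing $k'$ appropriately (at the cost of inflating the elementary function $g$) gives us the needed headroom. The running time analysis is then immediate: each recursive step costs $O\!\left(2^{O(k\log k)}\right)\cdot|V(T)|^3\log|V(T)|$ (dominated by one invocation of Theorem~\ref{thm:bundle} plus the triple-extraction and the $O(1)$-time vertex deletion), and at most $|V(T)|$ recursive steps occur before either Case (1) triggers or the graph becomes trivially small, for a total of $g(|H|)|V(T)|^4\log|V(T)|$ as claimed.
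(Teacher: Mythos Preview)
Your overall WIN/WIN architecture matches the paper's: iterate; on each pass approximate pathwidth; if large, extract a triple and delete an irrelevant vertex; if small, run a DP on the decomposition. Two points deserve correction, and one structural difference is worth noting.

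\textbf{The $\MSO_1$ claim is wrong.} You assert that immersion ``is not expressible in $\MSO_1$: it requires reasoning about edge sets''. In fact the paper invokes a result of Ganian et al.\ that edge-disjoint paths (hence rooted immersion) \emph{is} expressible in $\MSO_1$ via formulas $\pi_\ell$. The reason the paper falls back on an explicit DP is not inexpressibility but that the quantifier alternation of $\pi_\ell$ grows with $\ell$, so Courcelle--Makowsky--Rotics yields only a non-elementary bound. Your conclusion (use an explicit DP) is right, but the justification is not.

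\textbf{Your DP state is not bounded by a function of $|H|$ and $w$.} You propose to record ``how many times each edge of $T$ with at least one endpoint in $B$ has been used so far''. In a semi-complete digraph the number of such edges is $\Theta(|B|\cdot|V(T)|)$, so the state space is exponential in $|V(T)|$, not just in $w$. The real difficulty is that a path may jump from a not-yet-introduced vertex directly to an already-forgotten vertex (such arcs are allowed by the separation property), and you must track enough about these jumps to enforce edge-disjointness without naming the forgotten endpoint. The paper's DP (Appendix~F) handles this with ``signatures'': each image path is summarised by the sequence of its maximal subpaths inside the processed side, recording only the bag-vertex endpoints and a marker $\FF$ for forgotten endpoints, together with an equivalence relation on the $\FF$ markers to prevent reusing the same arc into the forgotten region. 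This keeps the state space at $2^{O(k\log w + w\ell(\log\ell+\log w))}$.

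\textbf{Structural difference.} The paper runs Theorem~\ref{thm:bundle} on $T\setminus\{w_1,\dots,w_t\}$ rather than on $T$. This guarantees the resulting triple is disjoint from the roots, so Lemma~\ref{thm:irrelevant} applies directly; afterwards the roots are simply added back to every bag of the decomposition (costing at most $|V(H)|$ in width). Your plan of ``setting aside a bounded neighbourhood of the roots within $\mathcal{T}$'' would also work but is messier. You correctly identify the irrelevant-vertex rerouting as the crux; the paper's proof of that lemma is indeed the main technical content and occupies a full appendix.
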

The complexity of this problem was left open by Fradkin and Seymour in~\cite{fradkin-seymourEDP}, where they showed that its close relative, rooted infusion, is \FPT on semi-complete digraphs. As rooted infusion generalizes the {\sc{Edge Disjoint Paths}} problem as well, the complexity of $k$-\textsc{Edge Disjoint Paths} was already established by Fradkin and Seymour; however, the approach presented in~\cite{fradkin-seymourEDP} is much more complicated than ours.
Our proof is based on a non-trivial application of the irrelevant vertex technique, which allows to identify a vertex in semi-complete digraphs of large pathwidth, whose removal does change the answer to the problem. It is possible to use the ideas of Ganian et al.~\cite{GanianHO11} for the {\textsc{Edge Disjoint Paths}} problem to prove that \textsc{Rooted Immersion} is expressible in $\MSO_1$, which in combination with the irrelevant vertex procedure and Theorem~\ref{thm:bundle} shows that {Rooted Immersion} is \FPT on semi-complete digraphs. Unfortunately, the formula obtained in this way has unbounded quantifier alternation, so a direct application of the framework of Courcelle, Makowsky and Rotics~\cite{CourcelleMR00} gives a non-elementary dependence on $|H|$, i.e., the tower of height depending on $|H|$ in the exponent of the bound on the running time. To overcome this we give an explicit dynamic programming algorithm.

Finally, we observe that our framework can be used to prove meta-theorems of more general nature. Let $\Pi$ be a class of digraphs and denote by $\Pi+kv$ the class of digraphs, from which one can delete at most $k$ vertices to obtain a member of $\Pi$. We study the following problem:

\defparproblem{ \textsc{$\Pi +kv$ Recognition}}{Digraph $D$ and integer $k$}{$k$}{Is there $S\subseteq V(D)$, $|S|\leq k$, such that $D\setminus S\in \Pi$?}

We are interested in classes $\Pi$ which are closed under immersion. For example the class of acyclic digraphs, or digraphs having cutwidth at most $c$, where $c$ is some constant, are of this type. In particular, the parameterized {\sc{Feedback Vertex Set}} in directed graphs is equivalent to {\sc{$\Pi +kv$ Recognition}} for $\Pi$ being the class of acyclic digraphs. Chudnovsky and Seymour~\cite{ChudnovskyS11} showed that immersion order on semi-complete digraphs is a well-quasi-order. Based on this result and Theorem~\ref{thm:bundle}, we are able to prove the following meta-theorem. Note that it seems difficult to obtain the results of this flavor using cutwidth, as cutwidth can decrease dramatically even when one vertex is deleted from the digraph.

\begin{theorem}\label{thm:immersionclosed}
Let $\Pi$ be an immersion-closed class of semi-complete digraphs. Then {\sc{$\Pi +kv$ Recognition}} is \FPT on semi-complete digraphs.
\end{theorem}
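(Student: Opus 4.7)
The plan is to use the WIN/WIN paradigm supplied by Theorem~\ref{thm:bundle}, combined with well-quasi-ordering. First, since Chudnovsky and Seymour~\cite{ChudnovskyS11} showed that immersion is a well-quasi-order on semi-complete digraphs, any immersion-closed class $\Pi$ is characterized by a finite family $\Ff = \Ff(\Pi)$ of minimal forbidden immersions: a semi-complete digraph $D$ lies in $\Pi$ iff $D$ contains no $F \in \Ff$ as an immersion. Let $h := \max_{F \in \Ff} |V(F)|$; both $\Ff$ and $h$ are constants depending only on $\Pi$. Note also that immersion-closedness implies closure under vertex deletion, so $\Pi$ is hereditary.

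Given an input $(D,k)$, I would run the algorithm of Theorem~\ref{thm:bundle} with parameter $k' := k + \phi(h)$, where $\phi$ is a function (to be fixed below) guaranteeing enough ``slack''. In time $2^{O(k'\log k')}|V(D)|^3\log|V(D)|$, the algorithm returns either a path decomposition of $D$ of width $O(k'^2)$, or a $k'$-jungle $\Jj$ in $D$.

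If the output is a path decomposition, we fall back on logic. The excerpt notes that the cliquewidth of a semi-complete digraph is at most its pathwidth plus $2$; hence $D$ has cliquewidth $O(k'^2)$. Using the construction of Ganian et al.~\cite{GanianHO11} already invoked for \textsc{Rooted Immersion}, containment of any fixed $F \in \Ff$ as an immersion is expressible by an $\MSO_1$ formula on semi-complete digraphs. Consequently, the statement ``there exists $S \subseteq V(D)$ with $|S|\le k$ such that $D\setminus S$ contains no $F\in\Ff$ as an immersion'' is itself an $\MSO_1$ sentence (with $k$ as a parameter and $\Ff$ hard-coded). The theorem of Courcelle, Makowsky and Rotics~\cite{CourcelleMR00} solves this in \FPT time; alternatively, one writes an explicit pathwidth dynamic programming algorithm to tame the dependence on $\Ff$.

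If instead the output is a $k'$-jungle, I claim the answer is NO. By the Fradkin-Seymour construction cited after Theorem~\ref{thm:bundle}, a sufficiently large jungle in a semi-complete digraph yields a triple into which every digraph of bounded size can be topologically embedded, and topological embedding refines immersion. I would choose $\phi(h)$ large enough so that after removing \emph{any} $k$ vertices from $D$, the remainder still contains a sub-jungle big enough to immerse every $F \in \Ff$. Granting this, for every $S \subseteq V(D)$ with $|S|\le k$, the digraph $D\setminus S$ contains some $F \in \Ff$ as immersion, so $D\setminus S \notin \Pi$; hence $D \notin \Pi+kv$ and NO is correct. The main obstacle is precisely this robustness step: one must open up the Fradkin-Seymour definition of jungle and verify that jungles degrade gracefully under vertex deletion. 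Since a jungle is built from many internally vertex-disjoint paths between distinguished vertices, deleting $k$ vertices should destroy only a bounded number of these paths and of distinguished vertices, so an additive slack of roughly $k$ on top of whatever jungle size is needed to topologically embed all of $\Ff$ should suffice to define $\phi(h)$.
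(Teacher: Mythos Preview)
Your proposal is correct and follows the same overall WIN/WIN strategy as the paper: characterize $\Pi$ by finitely many forbidden immersions via Chudnovsky--Seymour, run Theorem~\ref{thm:bundle}, handle the small-pathwidth case by $\MSO_1$ model checking (using the Ganian et al.\ encoding of immersion relativized to $V(D)\setminus X$), and answer NO in the large-jungle case.

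The one place where you diverge from the paper is the justification of the NO answer. You argue \emph{directly} that jungles are robust under vertex deletion: a $k'$-jungle in $D$ restricts to a $(k'-k)$-jungle in $D\setminus S$ for any $|S|\le k$ (your sketch is right---if $(A',B')$ were a small separation in $D\setminus S$, then $(A'\cup S,B'\cup S)$ would be a separation of order $<k'$ in $D$), and then invoke the jungle$\to$triple$\to$topological containment chain in $D\setminus S$. The paper instead observes that pathwidth is the robust quantity: $\pw(\Pi)\le c_\Pi$ implies $\pw(\Pi+kv)\le c_\Pi+k$ (just add the deleted vertices to every bag), so a $(c_\Pi+k+2)$-jungle already certifies $\pw(D)>c_\Pi+k$ and hence $D\notin\Pi+kv$. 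This sidesteps the jungle-robustness lemma entirely and keeps the constants cleaner (no need to thread the jungle-to-triple function $f$ through the choice of $k'$). Your route works, but the paper's is shorter. One small correction: Lemma~\ref{lem:triple-model} needs a $|F|$-triple with $|F|=|V(F)|+|E(F)|$, so your constant $h$ should be $\max_{F\in\Ff}|F|$ rather than $\max_{F\in\Ff}|V(F)|$.
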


To complete the picture of the presented framework, we investigate the possibilities of proving logical meta-theorems yielding \FPT algorithms working on path decomposition of semi-complete digraphs. As pathwidth resembles treewidth in the undirected case, one could conjecture that, similarly to Courcelle's Lemma, $\MSO_2$ model checking should be also fixed-parameter tractable, when the pathwidth of the given semi-complete digraph and length of the formula are parameters. The following result shows that this is unfortunately not true.

\begin{theorem}\label{thm:mso-main}
There exists a constant-size $\MSO_2$ formula $\psi$, such that checking whether $\psi$ is true in a semi-complete digraph of constant cutwidth and pathwidth is \NP-hard.
\end{theorem}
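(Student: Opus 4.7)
The plan is to prove \NP-hardness by encoding an arbitrary nondeterministic polynomial-time computation into a semi-complete digraph of constant cutwidth, and verifying it with a fixed $\MSO_2$ formula. We reduce from an \NP-complete language $L$ decided by a nondeterministic Turing machine $M$ running in time $p(n)$. Given an input $x$ of length $n$, we produce a semi-complete digraph $T_x$ and a fixed formula $\psi$ (depending only on $M$) such that $T_x\models\psi$ iff $M$ accepts $x$.

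The digraph $T_x$ has $N=\Theta(p(n))$ vertices $u_1,\ldots,u_N$ in a linear order, together with all forward arcs $u_i\to u_j$ for $i<j$; this transitive tournament allows $\MSO_2$ to recover the linear order and, in particular, the ``immediate successor'' relation on vertices. To encode $x$, at each position $i\in\{1,\ldots,n\}$ we install a small backward-arc gadget whose combinatorial shape encodes the $i$-th letter of $x$, together with a constant number of distinguished marker gadgets that delineate structural landmarks (such as the boundary between a designated ``row range'' and ``column range'' of vertices). Each gadget uses only a constant number of backward arcs spanning a constant number of consecutive vertices, so at most $O(1)$ backward arcs cross any cut, and hence the cutwidth and pathwidth of $T_x$ are bounded by an absolute constant that depends only on $M$.

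The formula $\psi$ verifies an accepting $p(n)\times p(n)$ computation tableau of $M$ on $x$, identifying tableau cells bijectively with a subset of the forward arcs of $T_x$: using the marker between the row and column ranges, cell $(r,c)$ is the arc from the $r$-th row-range vertex to the $c$-th column-range vertex, and the ``same row, next column'' and ``next row, same column'' operations on cells become shifts of one arc endpoint to its immediate successor, which are definable in $\MSO_2$. The formula then existentially quantifies over a constant number of edge sets $A_1,\ldots,A_k$, whose joint membership pattern at each cell-arc encodes the cell's value from the fixed tableau alphabet (tape symbol plus optional head and state), and asserts: (i) the first-row cells match the input, read off the backward-arc gadgets at the corresponding column-range vertices; (ii) each row is consistent with the previous under $M$'s transition relation, a constant-size case analysis on neighbouring cells; and (iii) some last-row cell carries an accepting state. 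All three conditions are constant-size $\MSO_2$ subformulas, so $\psi$ has constant size, and correctness is the standard tableau equivalence.

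The main obstacle is the bookkeeping: choosing the cell-to-arc bijection so that tableau-neighbourhood translates into simple $\MSO_2$ operations on arc endpoints, and designing the backward-arc gadgets (for input letters and for markers) so that the formula uniformly extracts the needed structural information without depending on $n$. Both are handled by routine local design that exploits $\MSO_2$-definability of immediate successor in a transitive tournament. Since the reduction $x\mapsto T_x$ is polynomial-time and $L$ is \NP-complete, we conclude \NP-hardness of model-checking $\psi$ over semi-complete digraphs of constant cutwidth and pathwidth.
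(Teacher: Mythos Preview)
Your approach is correct in outline and would yield the theorem, but it is genuinely different from the paper's argument. The paper reduces directly from \tcnfsat: it builds a transitive tournament partitioned into $m$ intervals of length $2n$ (one per clause), marks the interval borders and literal parities with unary predicates (later replaced by local gadgets), and then writes a fixed $\MSO_2$ sentence that existentially guesses an edge set $M$ and a vertex set $X$. The crucial trick is that properties (1)--(3), including a set-minimality clause, force $M$ to be exactly the ``shift by one interval'' matching $\{(v_{i+2n+1},v_i)\}$; this matching is then used to propagate the single assignment $X$ across all intervals, after which checking each clause is a local test. There is no two-dimensional tableau and no use of arcs as grid cells; the whole construction is one-dimensional with an $\MSO_2$-definable long-range pointer.

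Your route is the more standard Cook--Levin encoding: identify the $p(n)\times p(n)$ computation tableau with the forward arcs between a ``row block'' and a ``column block'' of vertices, guess a bounded number of edge sets to colour these arcs with tableau symbols, and verify initialisation, local transition windows, and acceptance by universal first-order checks using the definable immediate-successor relation. This is more general (it works for any fixed NTM) and conceptually textbook, whereas the paper's construction is shorter and avoids the two-range bookkeeping at the price of the somewhat delicate minimality argument forcing $M$. One point you should make explicit in a full write-up is how the linear order, and hence immediate successor, remains $\MSO_2$-definable once the local backward-arc gadgets are added (since for nearby vertices both arcs may now be present); this is routine but not entirely automatic, and the paper sidesteps the same issue equally lightly when it says ``replace each vertex with a constant size subdigraph.''
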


\medskip
\noindent\textbf{Outline.} The remaining part of the paper is organized as follows. In Section~\ref{sec:prelim} we introduce all the necessary notions and recall various tools that will be used. Section~\ref{sec:main} contains descriptions of the proofs of Theorems~\ref{thm:bundle},~\ref{thm:main},~\ref{thm:immersion} and~\ref{thm:immersionclosed}. Finally, in Section~\ref{sec:conclusion} we gather conclusions and deliberate on new challenges arising from the current work. The proof of Theorem~\ref{thm:mso-main} has been deferred to Appendix~\ref{sec:mso}, which contains a detailed description of tools from logic applicable in our setting.

\section{Preliminaries}\label{sec:prelim}

Let $G=(V,E)$ be a finite digraph. Digraphs can have loops and multiple edges.  We refer to the book of Bang-Jensen  and Gutin  \cite{BangG089_book}  for an introduction to directed graphs. We use $V(G)$ and $E(G)$ to represent the vertex
set and arc set of $G$, respectively. By the size of $G$ we denote $|G|=|V|+|E|$.  Given a subset $V'\subseteq
V(G)$ of   $G$, let $G[V']$ denote the digraph induced by
$V'$.   A vertex $u$ of $G$ is an {\em in-neighbour} ({\em
out-neighbour}) of a vertex $v$ if $(u,v)\in E$ ($(v,u)\in E$,
respectively). The {\em in-degree} $d^-(v)$ ({\em out-degree}
$d^+(v)$) of a vertex $v$ is the number of its in-neighbours
(out-neighbours). We denote by $N^-(v)$ and by $N^+(v)$ the set of all in-neighbours
and out-neighbours of $v$ correspondingly.

A digraph is \emph{simple} if it has no loops, and for every pair of distinct vertices $u,v$ there is at most one arc $(u,v)$.
A digraph is \emph{semi-complete} if it is simple and for every pair of distinct vertices $u,v$ there is at least one arc with endpoints $u,v$.
A digraph is a \emph{tournament} if it is simple and for every pair of distinct vertices $u,v$ there is exactly one arc with endpoints $u,v$.

\medskip\noindent\textbf{Topological containment and immersions.}
Let $H,G$ be digraphs. We say that mapping $\eta : V(H)\cup E(H) \to V(G)\cup E(G)$ is a {\em{model}} of $H$ in $G$, if the following conditions are satisfied:
\begin{itemize}
\item $\eta(V(H))\subseteq V(G)$;
\item for every edge $(u,v)\in E(H)$, $\eta((u,v))$ is a directed path leading from $\eta(u)$ to $\eta(v)$;
\item for every loop $(u,u) \in E(H)$, $\eta((u,u))$ is a directed cycle passing through $\eta(u)$.
\end{itemize}
By imposing further conditions on the model we obtain various notions of containment for digraphs. If we require that $\eta$ is surjective on $V(H)$ and the paths in $\eta(E(H))$ are internally vertex-disjoint, we obtain the notion of {\emph{topological containment}}; in this case we say that $\eta$ is an {\emph{expansion}} of $H$ in $G$. If we relax the condition of paths being internally vertex-disjoint to being edge-disjoint, we arrive at the notion of {\emph{immersion}}; then $\eta$ is an {\emph{immersion}}\footnote{Sometimes in the literature this notion is called a {\emph{weak immersion}}. We remark that all our results can be easily extended to strong immersions as well.} of $H$ into $G$.  

In our setting we are given digraph $H$ (with possible loops and multiple edges) and a semi-complete digraph $T$. We ask whether $H$ is topologically contained (immersible) in $T$. Observe that without changing the answer for these questions we can subdivide once every loop in $H$. Thus we may assume that $H$ does not contain any loops; though, it may have multiple edges.

We say that $(G;v_1, \dots, v_h)$ is a \emph{rooted} digraph if $G$ is digraph and $v_1, \dots, v_h$ are distinct vertices of $V(G)$. The notion of immersion can be generalized to rooted graphs.
Immersion $\eta$ is an immersion from rooted digraph  $\mathbf{H}=(H;u_1, \dots, u_h)$ to rooted digraph  $\mathbf{G}=(G;v_1, \dots, v_h)$ if
$\eta(u_i)=v_i$ for $i\in \{1,\dots, h\}$. Such an immersion is called an \emph{$\mathbf{H}$-immersion} or a \emph{rooted immersion}.

We study the following parameterized topological containment and immersion problems.

\defparproblem{\textsc{Topological containment}}{Digraphs $H$ and $G$}{$|H|$}{Is there expansion of $H$ in  $G$?}

\defparproblem{$\mathbf{H}$-\textsc{immersion (Rooted Immersion)}}{Rooted digraphs $\mathbf{H}$ and $\mathbf{G}$}{$|\mathbf{H}|$}{Is there $\mathbf{H}$-immersion of $\mathbf{H}$ in $\mathbf{G}$?}

\medskip\noindent\textbf{Separations and pathwidth.}

\begin{definition}\label{def:partition}
Let $T=(V,E)$ be a  digraph. For $A, B \subseteq V$, the pair $(A,B)$ is a separation of order $k$, if
\begin{itemize}
\item $A\cup B=V$,
\item $|A\cap B|\leq k$, and
\item there are no edges from $A\setminus B$ to $B\setminus A$.
\end{itemize}
The set  $A \cap B$ is the \emph{cut} corresponding to the separation $(A, B)$. Two separations $(A, B$) and $(C,D)$ \emph{cross} unless one of the following holds:
\begin{itemize}
\item $C\subseteq A$ and $B\subseteq  D$
 \item $ A \subseteq  C$ and $D \subseteq  B$.
 \end{itemize}
A set of separations is \emph{cross-free} if no two of its members cross.
\end{definition}
The following lemma is due to Fradkin and Seymour  \cite[(1.2)]{fradkin-seymour}.
\begin{lemma}\label{lem:order}
Let $G$ be a digraph and $t \geq 0$ be a integer. Let $S$ be a cross-free set of separations with $|S| = t$. Then the members of $S$ can be ordered $(A_1,B_1),\dots,(A_t,B_t)$ such that $A_1 \subseteq \cdots \subseteq  A_t $ and $B_t \subseteq \cdots \subseteq B_1$.
\end{lemma}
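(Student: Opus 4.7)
The plan is to show that the relation "is not crossed by" can be interpreted as a total order on the members of a cross-free family, and then simply enumerate them along this order. Concretely, I will introduce a binary relation $\preceq$ on separations of $G$ by declaring $(A,B)\preceq (C,D)$ iff $A\subseteq C$ and $D\subseteq B$. Unfolding Definition~\ref{def:partition}, two separations fail to cross precisely when they are comparable under $\preceq$ (in one direction or the other).

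First I would verify that $\preceq$ is a partial order on the set of all separations of $G$. Reflexivity is immediate, and transitivity follows at once from transitivity of set inclusion applied separately to the first and second coordinates. For antisymmetry, if $(A,B)\preceq(C,D)$ and $(C,D)\preceq(A,B)$ then $A\subseteq C\subseteq A$ and $D\subseteq B\subseteq D$, so $(A,B)=(C,D)$. Hence $\preceq$ is a genuine partial order.

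Next I would specialise to the cross-free set $S$. By the equivalence noted above, for every two distinct members of $S$ one of them precedes the other under $\preceq$, so the restriction of $\preceq$ to $S$ is a total order. Since $S$ is finite with $|S|=t$, I can list its elements along this total order as $(A_1,B_1),\ldots,(A_t,B_t)$, which by the very definition of $\preceq$ yields the nested chain $A_1\subseteq A_2\subseteq\cdots\subseteq A_t$ together with the reverse chain $B_t\subseteq B_{t-1}\subseteq\cdots\subseteq B_1$, as required.

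The only point that demands any real care, and the place where the argument could in principle go wrong, is the faithful translation between the "not crossing" predicate in Definition~\ref{def:partition} and comparability under $\preceq$; but this is essentially a tautology, since the two clauses defining non-crossing are literally the two possible directions of the relation $\preceq$. Thus I do not foresee any serious obstacle, and the proof reduces to a careful reading of the definitions combined with the standard fact that a finite partially ordered set in which every two elements are comparable admits a unique linear enumeration.
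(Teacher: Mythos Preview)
Your proof is correct and is essentially the standard argument: the non-crossing condition is exactly comparability under the partial order $(A,B)\preceq(C,D)\Leftrightarrow A\subseteq C\wedge D\subseteq B$, so a cross-free family is a chain and can be listed in order. The paper does not give its own proof of this lemma; it simply attributes the statement to Fradkin and Seymour~\cite[(1.2)]{fradkin-seymour}, so there is nothing to compare against beyond noting that your argument is the expected one.
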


\begin{definition}\label{def:pw}
Given a digraph $G=(V,E)$, a sequence $W = (W_1, \dots , W_r)$ of subsets of $V$  is a \emph{path decomposition of $G$} if the following conditions are satisfied:
 \begin{itemize}
\item[(i)] $\bigcup_{1\leq i\leq r} W_i = V $;
 \item[(ii)] $W_i \cap W_k \subseteq W_j$ for $1 \leq i < j < k \leq r$;
\item[(iii)]  $\forall$  $(u,v)\in E$, either $u,v\in W_i$ for some $i$ or $u\in W_i$, $v\in W_j$ for some $i>j$.  \end{itemize}
 We call $W_1,\dots , W_r$ the \emph{bags} of the path decomposition.
 The \emph{width} of a path decomposition is equal to
$\max_{1\leq i \leq r} (|W_i| - 1)$; the \emph{pathwidth} of $G$ is the minimum width over all path decompositions of $G$.
\end{definition}

\noindent\textbf{Jungles and triples.} For a semi-complete digraph $T$, a set $Z\subseteq V(T)$ is called a {\emph{$k$-jungle}} if $|Z|=k$ and for every $x,y\in Z$ there does not exist a separation $(A,B)$ of order smaller than $k$ that {\emph{separates}} $y$ from $x$, i.e., such that $x\in A\setminus B$ and $y\in B\setminus A$. A triple of disjoint subsets $(A,B,C)$ is called a $k$-triple if $|A|=|B|=|C|=k$ and there exist orderings $(a_1,\ldots,a_k)$, $(b_1,\ldots,b_k)$, $(c_1,\ldots,c_k)$ of $A,B,C$, respectively, such that for all indices $1\leq i,j\leq k$ we have $(a_i,b_j),(b_i,c_j)\in E(G)$ and for each index $1\leq i\leq k$ we have $(c_i,a_i)\in E(G)$. The following result is an algorithmic version of the result of Fradkin and Seymour~\cite[(2.6)]{fradkin-seymour}. The proof closely follows the lines of their argumentation; however, for the sake of completeness we include it in Appendix~\ref{sec:proofs}.
\begin{lemma}\label{lem:jungle_triple_algo}
There exists an elementary function $f$, such that for every $k\geq 1$ and every semi-complete graph $T$ given together with its $f(k)$-jungle, it is possible to construct a $k$-triple in $T$ in time $O(|V(T)|^3\log |V(T)|)$.
\end{lemma}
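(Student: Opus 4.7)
The plan is to algorithmize Fradkin and Seymour's proof of $(2.6)$, which asserts that every sufficiently large jungle in a semi-complete digraph contains a $k$-triple. Since the input jungle $\Jj$ has bounded size $f(k)$, the construction splits cleanly into a polynomial-time preprocessing phase that assembles combinatorial data about the host digraph, followed by a Ramsey-type extraction whose cost depends only on $k$ and is absorbed into the constant factor hidden by the $O(\cdot)$ notation.

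The preprocessing step unfolds the jungle hypothesis via Menger's theorem: for every ordered pair $x, y \in \Jj$, the absence of an $(x,y)$-separation of order below $f(k)$ guarantees at least $f(k)$ internally vertex-disjoint directed paths from $x$ to $y$ in $T$. I would compute such a path system for each of the $O(f(k)^2)$ ordered pairs by running $f(k)$ rounds of BFS-based augmenting paths on the standard vertex-split graph of $T$, giving $O(|V(T)|^2)$ per max-flow computation in the unit-capacity setting. The total preprocessing cost comfortably fits into $O(|V(T)|^3 \log |V(T)|)$, with the logarithmic factor covering auxiliary sorting and bookkeeping for the orderings of $A$, $B$, $C$ required by the triple definition.

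The combinatorial extraction then follows the structural part of Fradkin and Seymour's argument. Each pair of jungle vertices falls into one of three categories depending on which of $(x, y)$, $(y, x)$ belongs to $E(T)$; Ramsey's theorem on three-coloured pairs yields a large monochromatic (or, via the Erd\H{o}s--Moser bound, transitive) sub-jungle of any desired size, provided $f(k)$ is chosen as a suitable elementary function of $k$. A further extraction on how the precomputed disjoint paths interact with this refined sub-jungle produces the three layers $A$, $B$, $C$ of size $k$ with the required forward arcs $A \times B, B \times C \subseteq E(T)$ together with the backward matching $(c_i, a_i) \in E(T)$. The main obstacle is securing these backward arcs: the ordering extraction yields forward arcs for free, but backward arcs demand a delicate pigeonhole step on the Menger paths routed between the $C$-side and the $A$-side, which is the technically most intricate ingredient of Fradkin and Seymour's proof and must be transcribed step by step. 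Because the refinement manipulates only the jungle and its precomputed path systems, its running time contributes only a $k$-dependent multiplicative constant, and the overall bound $O(|V(T)|^3 \log |V(T)|)$ follows from the preprocessing cost.
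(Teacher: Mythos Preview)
Your proposal has genuine gaps and does not match Fradkin--Seymour's argument that you claim to be algorithmizing.

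First, the preprocessing you describe --- computing $f(k)$ internally vertex-disjoint paths for every ordered pair in $\Jj$ --- is not part of the Fradkin--Seymour proof, and you never explain how these path systems feed into the extraction. The actual argument (as in the paper's proof) proceeds differently: one first finds a \emph{transitive} sub-tournament $X\subseteq\Jj$ of size $12s$ (directly, via the standard $\log_2 m$ extraction, not via three-coloured Ramsey), splits $X$ into halves $X_1,X_2$ with $X_1$ complete to $X_2$, and computes $6s$ vertex-disjoint paths from $X_2$ to $X_1$.

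Second, and more seriously, you omit the step that makes the whole argument work and that is also the running-time bottleneck: one must pass to a \emph{minimal} induced subdigraph $R\supseteq X$ in which the $6s$ paths from $X_2$ to $X_1$ still exist. Fradkin and Seymour's structural claim --- that the set $Q$ of first two and last two vertices of the paths contains a $k$-triple --- relies essentially on this minimality; without it the paths can wander through $T$ with no usable structure, and your ``delicate pigeonhole step on the Menger paths'' is not a proof. Computing $R$ is done by iterating over vertices of $V(T)\setminus X$ and testing removability, each test being a Ford--Fulkerson run capped at $6s$ augmentations, i.e.\ $O(s|V(T)|^2)$ per vertex and $O(s|V(T)|^3)$ overall. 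Since a jungle cannot exceed $|V(T)|$ vertices, $s=O(\log|V(T)|)$, which is where the $|V(T)|^3\log|V(T)|$ bound actually comes from --- not from ``auxiliary sorting and bookkeeping'' as you suggest.
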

It is easy to check that both a $k$-triple and a $k$-jungle are obstacles to admitting a path decomposition of width less than $k-1$; there must be a bag containing the whole part $B$ or the whole jungle, respectively. The following simple observation is also due to Fradkin and Seymour~\cite{fradkin-seymour}.
\begin{lemma}\label{lem:triple-model}
Every digraph $H$ is topologically contained in a $|H|$-triple.
\end{lemma}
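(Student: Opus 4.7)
The plan is to exhibit the expansion explicitly. Write $k = |H| = n + m$ with $n = |V(H)|$ and $m = |E(H)|$, enumerate $V(H) = \{v_1, \ldots, v_n\}$ and $E(H) = \{e_1, \ldots, e_m\}$, and fix any $k$-triple $(A, B, C)$ together with the orderings $(a_1,\ldots,a_k)$, $(b_1,\ldots,b_k)$, $(c_1,\ldots,c_k)$ guaranteed by the definition. The strategy is to place all vertex images inside $B$ while reserving the vertices of $A$ and $C$ with indices in $\{n+1, \ldots, n+m\}$ as private internal vertices, one pair per edge of $H$. Because every arc from $B$ into $C$ and from $A$ into $B$ is present in the triple, the only delicate arc along each path is the step from $C$ to $A$, which is handled by the matching $(c_i, a_i) \in E$.

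Concretely, I would set $\eta(v_i) := b_i$ for $i = 1, \ldots, n$, and for each edge $e_\ell = (v_{i_\ell}, v_{j_\ell})$ I would define $\eta(e_\ell)$ to be the three-arc path
\[
b_{i_\ell}\; \longrightarrow\; c_{n+\ell}\; \longrightarrow\; a_{n+\ell}\; \longrightarrow\; b_{j_\ell}.
\]
A loop at $v_i$ is represented analogously by the directed cycle $b_i \to c_{n+\ell} \to a_{n+\ell} \to b_i$. The first and third arcs exist because $A \times B$ and $B \times C$ are oriented completely in the triple, and the middle arc is exactly the matching edge $(c_{n+\ell}, a_{n+\ell})$ supplied by the definition.

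The only remaining check is internal vertex-disjointness. For edge $e_\ell$ the internal vertices of $\eta(e_\ell)$ are $\{c_{n+\ell}, a_{n+\ell}\}$; these sets are pairwise disjoint across $\ell \in \{1,\ldots,m\}$ since the indices $n+\ell$ are distinct, and they cannot collide with any vertex image $\eta(v_i) = b_i$ because $A$, $B$, $C$ are pairwise disjoint. Hence $\eta$ is a valid expansion of $H$ into the triple. I do not expect any real obstacle here; the one design decision — placing vertex images in $B$ rather than $A$ or $C$ — is essentially forced, since it is only from the middle part that one can reach an arbitrary target in the same part using the complete bipartite arcs at both the first and the last step of the path, while spending exactly one unit of the matching in between.
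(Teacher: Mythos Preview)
Your construction is correct: placing the vertex images in $B$ and routing each edge $e_\ell$ through the private pair $(c_{n+\ell},a_{n+\ell})$ yields internally vertex-disjoint paths, and all required arcs are present by the definition of a $k$-triple. The paper does not actually supply a proof of this lemma---it is stated as a simple observation attributed to Fradkin and Seymour---but your argument is exactly the natural one and matches the intended construction.
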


 \noindent\textbf{Tools from logic.} We gather a detailed discussion on the applicability of tools from logic in our setting in Appendix~\ref{sec:mso}. Essentially, we observe that as on semi-complete graphs cliquewidth can be bounded by a function of pathwidth, and model checking of formulas of the logic $\MSO_1$ can be done by a fixed-parameter algorithm, parameterized by the length of the formula and cliquewidth, we get the following powerful meta-tool for constructing algorithms on path decomposition. For appropriate definitions we invite an interested reader to Appendices~\ref{sec:parameters} and~\ref{sec:mso}.

\begin{theorem}\label{thm:MC}
There exist an algorithm with running time $f(||\ff||,p)|V(T)|^2$ that given an $\MSO_1$ formula $\ff$ checks whether $\ff$ is satisfied in a semi-complete digraph $T$, given together with a path decomposition of width $p$.
\end{theorem}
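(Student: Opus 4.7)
The plan is to reduce the problem to $\MSO_1$ model-checking on bounded-cliquewidth digraphs, for which the Courcelle--Makowsky--Rotics meta-theorem yields an $f(||\ff||, k)\cdot N$-time algorithm, provided the input is given as a $k$-expression of size $N$. Thus it suffices to turn the given path decomposition of width $p$ into a $(p+2)$-expression of size $O(p\cdot|V(T)|)$ in time $O(|V(T)|^2)$, and then feed it to the off-the-shelf meta-theorem.

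First I would describe the conversion explicitly. Scan the bags $W_1,\ldots,W_r$ from left to right, maintaining the invariant that each vertex currently in the bag carries its own distinct label from $\{1,\ldots,p+1\}$, while all previously discarded vertices share one common label $0$. Between consecutive bags, each forgotten vertex in $W_{i-1}\setminus W_i$ is relabelled to $0$, and each newly introduced vertex $v\in W_i\setminus W_{i-1}$ is created with a fresh label not currently occupied. Immediately afterwards, for every vertex $u$ already present in the current bag, the arcs between $u$ and $v$ are inserted in the appropriate direction using their individual labels; additionally, one single label-join operation inserts all arcs from $v$'s label to the label $0$, covering the arcs between $v$ and all discarded vertices at once. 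The resulting expression has size $O(p\cdot|V(T)|)$, and the construction fits within the declared $O(|V(T)|^2)$ preprocessing budget.

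The correctness of the batch step is the one place where the semi-complete assumption is used crucially. If a vertex $u$ has been discarded before $v$ is introduced, the intervals of bags containing $u$ and $v$ are disjoint, with the interval of $u$ preceding that of $v$. Condition (iii) of Definition~\ref{def:pw} then forbids an arc $(u,v)$: its existence would demand either a shared bag or that $u$ appears later than $v$, both impossible. Since $T$ is semi-complete, at least one of the two possible arcs between $u$ and $v$ must be present, hence $(v,u)$ lies in $E(T)$. Consequently every arc between $v$ and any vertex of label $0$ at the moment of introduction is directed from $v$ to that vertex, which is exactly what the single label-join operation produces.

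Once the $(p+2)$-expression is built, apply the $\MSO_1$ model-checking algorithm of Courcelle, Makowsky and Rotics~\cite{CourcelleMR00} for bounded-cliquewidth structures; its running time is $f(||\ff||,p)$ times the size of the expression, yielding the claimed $f'(||\ff||,p)\cdot |V(T)|^2$ bound after absorbing the preprocessing. The only nontrivial point in the whole argument is the semi-complete direction observation above; everything else is routine bookkeeping and a black-box invocation of the meta-theorem. I expect the main technical attention to go into matching our digraph-with-labels encoding against the exact relational input format required by the cited framework, but this should not involve any deep obstruction.
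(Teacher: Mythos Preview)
Your proposal is correct and follows essentially the same route as the paper: Lemma~\ref{lem:pw-cw} converts the width-$p$ path decomposition into a $(p+2)$-expression (with one label per bag vertex and one shared label for all forgotten vertices, plus the single batch join from the fresh label to the forgotten label), and then Theorem~\ref{thm:MC} is obtained by invoking the Courcelle--Makowsky--Rotics algorithm on that expression. Your explicit justification of the batch-join step via condition~(iii) of Definition~\ref{def:pw} and semi-completeness is exactly the observation the paper leaves as ``straightforward.''
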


\section{Main results}\label{sec:main}

\subsection{Approximation of pathwidth: proof of Theorem~\ref{thm:bundle}}

In this section we give an approximation algorithm computing the pathwidth of a semi-complete digraph. As the algorithm makes use of the same concepts as the technique of Fradkin and Seymour~\cite{fradkin-seymour}, we need to introduce the essential terminology.

Let $(A,B)$, $(C,D)$ be separations of $G$ that do not cross and suppose they have orders $ i,j$, respectively. Without loss of generality, suppose that $A \subseteq C$ and $D \subseteq B$. We say that $(A, B)$ and $(C,D)$ are \emph{$k$-close} if $(B\setminus A)\cap(C\setminus D)<k |i-j|$. The notion of a {\emph{bundle}} is crucial for computing the pathwidth of a semi-complete digraph.
\begin{definition}[\textbf{Bundle}]\label{def:bundle}
For an integer $k > 0$, a \emph{bundle} in $G$ of order $k$ is a cross-free set $\Bb$ of separations of $G$, each of order $<k$, such that
\begin{itemize}
\item[(i)] no two members of $\Bb$ are $k$-close;
\item[(ii)] if $(A, B)$ is a separation of $G$ of order $i < k$, then one of the following holds:
\subitem(a) $(A,B)\in \Bb$
\subitem(b) $(A,B)$ crosses some $(C,D) \in \Bb$ of order $\leq  i$
\subitem(c) $(A,B)$ is $k$-close to some $(C,D) \in \Bb$ of order $\leq  i$.
\end{itemize}
 \end{definition}

A separation $(A,B)$ of $G$ is an \emph{$\ell$-separator of $U\subseteq V(H)$} if $|U\setminus A|$, $|U\setminus B| \geq \ell$.
We say that $U$ is \emph{$(k, \ell)$-separable} if there exists an $\ell$-separator of $U$ of order at most $k$.

The following proposition implicitly follows from the proof of (2.11) in~\cite{fradkin-seymour}.
\begin{lemma}\label{lem:maximality_bundle}
Let  $\Bb$ be a bundle of order $k$ and let $(A_1,B_1),(A_2,B_2),\ldots,(A_m,B_m)$ be the members of  $\Bb$ ordered such that
$A_1\subseteq \cdots \subseteq A_m$ and $B_m\subseteq \cdots \subseteq B_1$. If for some
$1\leq j \leq m$,  $|A_{j+1}\cap B_j|>4k^2+7k$, then the set
$(A_{j+1}\setminus B_{j+1})\cap (B_j\setminus A_j)$ is not $(k,k^2)$-separable.
\end{lemma}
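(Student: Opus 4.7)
The proof proceeds by contradiction: assume $U=(A_{j+1}\setminus B_{j+1})\cap(B_j\setminus A_j)$ is $(k,k^2)$-separable, witnessed by a separation $(C,D)$ of $G$ of order at most $k$ with $|U\setminus C|\ge k^2$ and $|U\setminus D|\ge k^2$.

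\textbf{Step 1 (Uncrossing to a sandwiched separation).} I would first \emph{uncross} $(C,D)$ with both $(A_j,B_j)$ and $(A_{j+1},B_{j+1})$ using submodularity of the cut function of separations. Concretely, set $C^{\ast}=(C\cup A_j)\cap A_{j+1}$ and $D^{\ast}=(D\cap B_j)\cup B_{j+1}$; then $(C^{\ast},D^{\ast})$ is a separation that is sandwiched strictly between the two bundle members, meaning $A_j\subseteq C^{\ast}\subseteq A_{j+1}$ and $B_{j+1}\subseteq D^{\ast}\subseteq B_j$. Since $U\cap A_j=U\cap B_{j+1}=\emptyset$ and $U\subseteq A_{j+1}\cap B_j$, a short set calculation yields $|U\setminus C^{\ast}|=|U\setminus C|\ge k^2$ and $|U\setminus D^{\ast}|=|U\setminus D|\ge k^2$, so the separating power is preserved exactly.

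\textbf{Step 2 (Reducing order below $k$, the crux).} Submodularity applied twice bounds the order of $(C^{\ast},D^{\ast})$ only by $|C\cap D|+|A_j\cap B_j|+|A_{j+1}\cap B_{j+1}|<3k$, whereas property~(ii) in Definition~\ref{def:bundle} requires order strictly less than $k$. To bridge this gap I would iteratively uncross with every bundle member that the current separation crosses, using a pigeonhole argument on each submodular uncrossing: of the two produced separations the one of smaller order is retained, at the controlled cost of losing $O(k)$ vertices of $U$ per step in one of the sides $U\setminus C^{\ast}$ or $U\setminus D^{\ast}$. After $O(k)$ iterations we arrive at a sandwiched separation $(\tilde C,\tilde D)$ of order $r<k$ whose total loss in separating sizes is at most $O(k^2)$. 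The hypothesis $|A_{j+1}\cap B_j|>4k^2+7k$, which forces $|U|>4k^2+5k$, is exactly the slack needed so that $|U\setminus\tilde C|,|U\setminus\tilde D|\ge k^2$ still holds.

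\textbf{Step 3 (Applying bundle property (ii)).} With $(\tilde C,\tilde D)$ sandwiched and of order $r<k$, invoke property (ii). Option (a) would force $(\tilde C,\tilde D)\in\{(A_j,B_j),(A_{j+1},B_{j+1})\}$ by Lemma~\ref{lem:order} and the sandwich condition, but then $U$ sits entirely on one side of $(\tilde C,\tilde D)$, contradicting $|U\setminus\tilde C|,|U\setminus\tilde D|\ge k^2$. Option (b) is impossible since the sandwich forces $(\tilde C,\tilde D)$ to be nested with every member of $\Bb$ via Lemma~\ref{lem:order}. Thus option (c) holds: $(\tilde C,\tilde D)$ is $k$-close to some $(A_\ell,B_\ell)\in\Bb$ of order $s\le r$. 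If $\ell\le j$ then $A_\ell\subseteq\tilde C$ and $U\subseteq B_\ell\setminus A_\ell$, so $|(B_\ell\setminus A_\ell)\cap(\tilde C\setminus\tilde D)|\ge |U\cap(\tilde C\setminus\tilde D)|=|U\setminus\tilde D|\ge k^2$; the $k$-closeness condition then yields $k^2<k\,|r-s|\le k(k-1)<k^2$, a contradiction. The symmetric case $\ell\ge j+1$ uses $\tilde C\subseteq A_\ell$ and $U\subseteq A_\ell\setminus B_\ell$ to bound $|(\tilde D\setminus\tilde C)\cap(A_\ell\setminus B_\ell)|\ge|U\setminus\tilde C|\ge k^2$, contradicting $k^2<k\,|s-r|<k^2$.

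The \emph{main obstacle} is Step~2: naive single or double uncrossing only yields a sandwiched separation of order up to $3k-2$, which is too large for property~(ii). The resolution requires an iterative pigeonhole refinement whose $O(k^2)$ total loss in separating sizes is precisely what the bound $4k^2+7k$ on $|A_{j+1}\cap B_j|$ is calibrated to absorb.
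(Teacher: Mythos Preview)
The paper does not give its own proof of this lemma; it simply states that the result ``implicitly follows from the proof of (2.11) in~\cite{fradkin-seymour}''. So there is nothing in the paper to compare your argument against, and I can only assess your proposal on its own merits.

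Your Step~1 and Step~3 are sound in outline. The problem is Step~2, which you correctly identify as the crux but do not actually resolve. Once you have performed the double uncrossing of Step~1, the sandwiched separation $(C^\ast,D^\ast)$ satisfies $A_j\subseteq C^\ast\subseteq A_{j+1}$ and $B_{j+1}\subseteq D^\ast\subseteq B_j$; since the bundle is cross-free and linearly ordered by Lemma~\ref{lem:order}, $(C^\ast,D^\ast)$ crosses \emph{no} member of $\Bb$. Your proposed iterative uncrossing ``with every bundle member that the current separation crosses'' is therefore vacuous and cannot lower the order below the $3k-2$ you already have.

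If instead you intend to run the iteration starting from the original $(C,D)$, the scheme breaks for a different reason. Suppose you uncross with some $(A_\ell,B_\ell)$ with $\ell\le j$. Because $U\cap A_\ell=\emptyset$ and $U\subseteq B_\ell$, one of the two resulting separations, namely $(C\cup A_\ell,\,D\cap B_\ell)$, preserves both quantities $|U\setminus C|$ and $|U\setminus D|$ \emph{exactly}; the other, $(C\cap A_\ell,\,D\cup B_\ell)$, satisfies $U\setminus(D\cup B_\ell)=\emptyset$ and hence ceases to be a $k^2$-separator of $U$ at all. There is no ``$O(k)$ loss per step'' middle ground: each uncrossing either costs you nothing or everything, and submodularity gives you no control over which of the two has the smaller order. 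Retaining the separating-power-preserving side at every step simply reproduces Step~1 after two iterations and stalls at order $\le 3k-2$; retaining the smaller-order side can destroy the separator outright. Either way your claim of reaching order $r<k$ after $O(k)$ iterations with $O(k^2)$ total loss is unsupported.

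In short, you have located the genuine difficulty but the mechanism you propose to overcome it does not work. A correct argument (the one implicit in Fradkin--Seymour) must exploit the bound $|U|>4k^2+5k$ in a more structural way than a generic uncrossing loop; the slack in the constant is not there to absorb losses from your Step~2, but enters the original argument differently.
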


We are now ready to prove Theorem~\ref{thm:bundle}.

\begin{proof}[Proof of Theorem~\ref{thm:bundle}] The first steps of ours are similar to the approach of Fradkin and Seymour~\cite{fradkin-seymour}; however, we need to use new techniques to obtain a substantially better running time.
  First we construct a bundle $\Bb$ of order $k$. Let $(A_1,B_1),(A_2,B_2),\ldots,(A_m,B_m)$, $A_1=B_m=\emptyset$ and  $A_m=B_1=V(T)$, be the ordering of the separations in $\Bb$ given by Lemma~\ref{lem:order}. We construct a path decomposition $W= (W_1, \dots , W_{m-1})$ consisting of bags $W_i=A_{i+1}\cap B_i$ for $i\in \{1,2,\ldots,m-1\}$. 
   If the width of $W$ is at most $4k^2+7k$, then the algorithm outputs $W$ and terminates. Otherwise, there is such an index $j$, for which $|A_{j+1}\cap B_j|>4k^2+7k$. As $|A_j\cap B_j|, |A_{j+1}\cap B_{j+1}|<k$, then $|(A_{j+1}\setminus B_{j+1})\cap (B_j\setminus A_j)|>4k^2+5k$. By Lemma~\ref{lem:maximality_bundle},
 $(A_{j+1}\setminus B_{j+1})\cap (B_j\setminus A_j)$ is not $(k,k^2)$-separable. Then  we find a $k$-jungle $\Jj$ inside $(A_{j+1}\setminus B_{j+1})\cap (B_j\setminus A_j)$.

Therefore, to prove the theorem, it is sufficient to show how to
\begin{itemize}
\item[(a)] Construct a bundle of order $k$ in time $2^{O(k\log k)}|V|^3\log |V|$, and
\item[(b)] Construct in time $O(|W|^2)$ a $k$-jungle from a set $W$ of cardinality at least $5k+4\ell$, providing it is not $(k,\ell)$-separable.
\end{itemize}

\medskip

\noindent{\bf{(a) Construction of a bundle.}} We will inductively construct a sequence of more and more refined cross-free families of separations $\Bb_1,\Bb_2,\ldots,\Bb_k$, such that following conditions are satisfied:
\begin{itemize}
\item $\Bb_i$ is a cross-free family of separations of order $<i$;
\item $\Bb_i\subseteq \Bb_j$ for $i\leq j$;
\item no two separations in $\Bb_i$ are $k$-close;
\item for every $i\leq k$, every separation of $T$ of order $<i$ either belongs to $\Bb_i$, crosses some separation from $\Bb_i$ or is $k$-close to some separation from $\Bb_i$.
\end{itemize}
These conditions imply that $\Bb_k$ is a bundle of order $k$. Note that families $\Bb_i$ are not necessarily bundles of order $i$ for $i<k$, as the closeness conditions refer to the value of $k$ instead of $i$.

We begin with constructing the family $\Bb_1$ consisting of separations of order $0$. In $O(|V|^2)$ time we partition $T$ into strongly connected components; as $T$ is a tournament, the acyclic digraph of strongly connected components is a linear order. Let $F_1,F_2,\ldots,F_h$ be these components, ordered with respect to this linear order. As $\Bb_1$ we take the family of separations of order $0$ defined as follows: $\Bb_1:=\{(\bigcup_{i=1}^j F_j,\bigcup_{i=j+1}^h F_j): j\in\{0,1,2,\ldots,h\}\}$. It is easy to check that conditions imposed on $\Bb_1$ are satisfied.

Now we present an algorithm that given family $\Bb_i$ constructs family $\Bb_{i+1}$. We take $\Bb_{i+1}:=\Bb_i$ and greedily incorporate new separations of order $i$ into $\Bb_{i+1}$, such that every new separation does not cross older ones and is not $k$-close to any of them. Having obtained such a maximal family $\Bb_{i+1}$, i.e., one, into which no new separation can be incorporated, we conclude that all the conditions imposed on $\Bb_{i+1}$ are satisfied.

Let $(A_1,B_1),\ldots,(A_h,B_h)$ be the separations at some step of the construction of $\Bb_{i+1}$ ordered by an ordering obtained from Lemma~\ref{lem:order}. Note that $(A_1,B_1)=(\emptyset,V(T))$ and $(A_h,B_h)=(V(T),\emptyset)$; therefore, every new separation has to be inserted between two consecutive separations that are already in $\Bb_{i+1}$. We try to insert the new separation between $(A_j,B_j)$ and $(A_{j+1},B_{j+1})$ for every $j\in \{1,2,\ldots,h-1\}$. If we do not succeed for any $j$, we conclude that $\Bb_{i+1}$ is already maximal.

Consider inserting a new separation $(C,D)$ between $(A_j,B_j)$ and $(A_{j+1},B_{j+1})$. We have the following simple claim:
\begin{claim}\label{cl:trivial}
If $(C,D)$ is neither $k$-close to $(A_j,B_j)$ nor to $(A_{j+1},B_{j+1})$, then it is not $k$-close to any separation $(A_r,B_r)\in \Bb_{i+1}$.
\end{claim}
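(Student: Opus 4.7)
The plan is to exploit the nested structure of $\Bb_{i+1}$ given by Lemma~\ref{lem:order} and show that the ``middle region'' witnessing non-$k$-closeness behaves additively along a chain of three nested separations. Let $i$, $i_j$, $i_{j+1}$, $i_r$ denote the orders of $(C,D)$, $(A_j,B_j)$, $(A_{j+1},B_{j+1})$, $(A_r,B_r)$ respectively; recall that $(C,D)$ has order $i$ and every separation currently in $\Bb_{i+1}$ has order at most $i$. I would argue the case $r \leq j-1$ in detail (the case $r \geq j+2$ being symmetric via $(A_{j+1},B_{j+1})$). In this regime the ordering yields $A_r \subseteq A_j \subseteq C$ and $D \subseteq B_j \subseteq B_r$.

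The geometric heart of the argument is the verification of two inclusions. First, $A_j \setminus B_j \subseteq C \setminus D$: indeed $A_j \subseteq C$, and $A_j \setminus B_j$ is disjoint from $B_j$, hence from $D \subseteq B_j$. Second, $B_j \setminus A_j \subseteq B_r \setminus A_r$: indeed $B_j \subseteq B_r$, and $B_j \setminus A_j$ is disjoint from $A_j$, hence from $A_r \subseteq A_j$. These two inclusions yield the disjoint decomposition
\[
(B_r \setminus A_r) \cap (C \setminus D) \;\supseteq\; \bigl[(B_r \setminus A_r) \cap (A_j \setminus B_j)\bigr] \,\sqcup\, \bigl[(B_j \setminus A_j) \cap (C \setminus D)\bigr],
\]
with disjointness immediate because one summand lies inside $A_j \setminus B_j$ and the other inside $B_j \setminus A_j$.

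Next I would invoke the two available lower bounds. The first summand has size at least $k\,|i_j - i_r|$, since both $(A_r,B_r)$ and $(A_j,B_j)$ already belong to $\Bb_{i+1}$ and the invariant of the construction is that no two of its members are $k$-close. The second summand has size at least $k(i - i_j)$ by the standing hypothesis that $(C,D)$ is not $k$-close to $(A_j,B_j)$, together with $i \geq i_j$. A short case split on the sign of $i_r - i_j$ (trivial in either case) gives
\[
|i_j - i_r| + (i - i_j) \;\geq\; i - i_r,
\]
so adding the two lower bounds yields $|(B_r \setminus A_r) \cap (C \setminus D)| \geq k(i - i_r) = k\,|i - i_r|$, which is precisely the conclusion that $(C,D)$ and $(A_r,B_r)$ are not $k$-close.

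The mirror-image argument handles $r \geq j+2$: one has $C \subseteq A_{j+1} \subseteq A_r$ and $B_r \subseteq B_{j+1} \subseteq D$, and the analogous inclusions $A_{j+1} \setminus B_{j+1} \subseteq A_r \setminus B_r$ and $B_{j+1} \setminus A_{j+1} \subseteq D \setminus C$ produce a disjoint decomposition of $(D \setminus C) \cap (A_r \setminus B_r)$ whose two pieces have sizes at least $k(i-i_{j+1})$ and $k\,|i_{j+1}-i_r|$, again summing to at least $k(i-i_r)$. As the claim's label suggests, there is no real obstacle: the whole proof is a one-step telescoping, and the only piece of care needed is consistently using the ``left/right'' convention of Lemma~\ref{lem:order} so that the asymmetric definition of $k$-closeness is applied with the correct orientation on each pair.
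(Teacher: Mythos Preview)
Your argument is correct and is essentially the same as the paper's: you use the same disjoint decomposition of $(B_r\setminus A_r)\cap(C\setminus D)$ into the two ``middle regions'' and combine the non-$k$-closeness bounds via the triangle inequality. The only cosmetic difference is that you exploit $i\geq i_j,i_r$ to drop absolute values early, whereas the paper keeps the absolute values throughout and invokes the triangle inequality directly.
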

The proof of Claim~\ref{cl:trivial} uses standard triangle inequality properties of set operations, and can be found in Appendix~\ref{sec:proofs}. Hence, we can only check that the new separation is not $k$-close to $(A_j,B_j)$ nor to $(A_{j+1},B_{j+1})$. Consider $S=T[B_j\cap A_{j+1}]$. In this semi-complete digraph we have two sets of terminals: $X=A_j\cap B_j$ and $Y=A_{j+1}\cap B_{j+1}$. We need to determine, whether there exists a separation $(C',D')$ of $S$ of order $i$ that separates $X$ from $Y$, is not $k$-close to the separation $(X,V(S))$ nor to the separation $(V(S),Y)$, and is distinct from both of them. Every such a separation $(C',D')$ corresponds to an objective separation $(C,D)=(C'\cup A_j,D'\cup B_{j+1})$ and vice versa: every objective separation $(C,D)$ induces such a separation $(C',D')=(C\cap V(S),D\cap V(S))$ in $S$. Therefore, we need to solve the following problem in the subdigraph $S=T[B_j\cap A_{j+1}]$, for $X=A_j\cap B_j$, $Y=A_{j+1}\cap B_{j+1}$, $a=\min(k(i-|X|),1)$, $b=i$ and $c=\min(k(i-|Y|),1)$.

\defparproblem{\toursection}{A semi-complete digraph $S$, subsets of vertices $X,Y\subseteq V(S)$ and integers $a,b,c$}{$a+b+c$}{Find a separation $(C,D)$ of $S$ such that {\emph{(i)}} $X\subseteq C$, {\emph{(ii)}} $Y\subseteq D$, {\emph{(iii)}} $|C\setminus D|\geq a$, {\emph{(iv)}} $|D\setminus C|\geq c$, {\emph{(v)}} $|C\cap D|\leq b$, or correctly claim that no such exists.}

We show an algorithm that solves \toursection in time \linebreak $2^{O(\min(a+c,b)\log (a+b+c))}|V(S)|^2\log |V(S)|$. Before we proceed to its presentation, let us shortly discuss why it will finish the proof of Theorem~\ref{thm:bundle}. We only need to show that the number of applications of this subroutine is bounded by $O(k|V(T)|)$, as for each call we have $|V(S)|\leq |V(T)|$ and $O(\min(a+c,b)\log (a+b+c))=O(k\log k)$. Note that the number of calls with a positive result, i.e., those that finished with finding a feasible separation, is bounded by $O(|V(T)|)$ for the whole run of the algorithm, as each such result implies incorporating one new separation to the constructed path decomposition. Moreover, for every considered $i$ the number of calls with negative results obtained while considering $i$ is bounded by the size of the bundle after this step, thus also by $O(|V(T)|)$. It follows that the total number of calls is bounded by $O(k|V(T)|)$ and we are done.

The algorithm for \toursection is based on the technique of {\emph{color coding}}, a classical tool in parameterized complexity introduced by Alon et al.~\cite{AlonYZ95}. Although the intuition behind color coding is probabilistic, it is well-known that almost all algorithms designed using this approach can be derandomized using the technique of {\emph{splitters}} of Naor et al.~\cite{naor-schulman-srinivasan-derandom}. We find it more convenient to present our algorithm already in the derandomized version. We will use the following folklore abstraction of derandomization of random coloring; for the sake of completeness, the proof can be found in Appendix~\ref{sec:proofs}.

\begin{lemma}\label{lem:derandomize}
There exists an algorithm, which given integers $p,q$ and universe $U$, in time \linebreak $2^{O(\min(p,q)\log (p+q))}|U| \log |U|$ returns a family $\randfamily$ of subsets of $U$ of size $2^{O(\min(p,q)\log (p+q))} \log |U|$ with the following property: for every subsets $A,B\subseteq U$, such that $A\cap B=\emptyset$, $|A|\leq p$ and $|B|\leq q$, there exists a set $R\in \randfamily$ such that $A\subseteq R$ and $B\cap R=\emptyset$.
\end{lemma}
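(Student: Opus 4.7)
The plan is to combine a probabilistic heuristic motivating the target bound with the explicit splitter construction of Naor, Schulman and Srinivasan~\cite{naor-schulman-srinivasan-derandom}. The probabilistic intuition is the following: if we draw $R\subseteq U$ by placing each element independently into $R$ with probability $p/(p+q)$, then for any disjoint pair $(A,B)$ with $|A|\leq p$ and $|B|\leq q$,
\[
\Pr[A\subseteq R\text{ and }B\cap R=\emptyset]\;\geq\; \left(\frac{p}{p+q}\right)^p\left(\frac{q}{p+q}\right)^q\;\geq\; \binom{p+q}{p}^{-1},
\]
and $\binom{p+q}{p}=2^{O(\min(p,q)\log(p+q))}$. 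A union bound over all relevant pairs together with $O(\binom{p+q}{p}\log|U|)$ independent samples would therefore succeed with positive probability, which matches the target family size.

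To obtain a deterministic construction, I would invoke the splitter theorem of Naor, Schulman and Srinivasan: there is an explicit family $\mathcal{H}$ of hash functions $h\colon U\to[p+q]$, of size $(p+q)^{O(1)}\log|U|$ and constructible in time $(p+q)^{O(1)}|U|\log|U|$, with the property that for every $S\subseteq U$ of cardinality at most $p+q$, some $h\in\mathcal{H}$ is injective on $S$. Given such $\mathcal{H}$, set
\[
\randfamily=\bigl\{\,h^{-1}(T)\;:\; h\in\mathcal{H},\ T\subseteq[p+q],\ |T|=p\,\bigr\}.
\]
For any disjoint pair $(A,B)$ with $|A|\leq p$, $|B|\leq q$, enlarge $A\cup B$ to a set $S\subseteq U$ of size exactly $p+q$, pick $h\in\mathcal{H}$ injective on $S$, and then pick a set $T\subseteq[p+q]$ of size $p$ with $h(A)\subseteq T$ and $T\cap h(B)=\emptyset$; the latter is possible since $h(A)$ and $h(B)$ are disjoint subsets of $[p+q]$ of sizes at most $p$ and $q$, respectively. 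By construction $A\subseteq h^{-1}(T)$ and $B\cap h^{-1}(T)=\emptyset$. Thus $|\randfamily|\leq|\mathcal{H}|\cdot\binom{p+q}{p}=2^{O(\min(p,q)\log(p+q))}\log|U|$, and the family can be enumerated within this many operations times $|U|\log|U|$ for materialising each set, which matches the claimed running time.

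The main technical point where care is required is in keeping the exponential dependence tied to $\min(p,q)$ rather than $p+q$: a textbook NSS perfect hash family into $[p+q]$ carries an $e^{p+q}$ factor in its size, which would swell the product to $2^{O(p+q)}\log|U|$ and destroy the bound. The remedy is to apply the sharper variant of the Naor--Schulman--Srinivasan construction tailored to disjoint-pair separation, whose size is bounded by $\binom{p+q}{p}\cdot\mathrm{poly}(p+q)\cdot\log|U|$ directly; equivalently, one can first reduce to a universe of size polynomial in $p+q$ via a small $(|U|,p+q,(p+q)^2)$-splitter and then enumerate all relevant subsets within that smaller universe. Either accounting yields the stated exponent $\min(p,q)\log(p+q)$ in both the family size and the construction time, completing the proof.
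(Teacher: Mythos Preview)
Your proposal lands on exactly the construction the paper uses: an $(|U|,p+q,(p+q)^2)$-splitter of Naor--Schulman--Srinivasan of polynomial (in $p+q$) size, followed by enumerating all size-$p$ subsets of the small range and taking their preimages. The probabilistic motivation is extraneous but harmless.

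Two presentational points are worth tightening. First, the intermediate construction you write down, with $h\colon U\to[p+q]$ and $|\mathcal{H}|=(p+q)^{O(1)}\log|U|$, is not a theorem: an $(n,k,k)$-perfect hash family has an unavoidable $e^{k}$ factor, as you yourself note in the next paragraph. It would be cleaner to go straight to the $(|U|,p+q,(p+q)^2)$-splitter rather than first stating a false bound and then retracting it. Second, once you hash into $[(p+q)^2]$ and enumerate size-$p$ subsets there, the count is $\binom{(p+q)^2}{p}=2^{O(p\log(p+q))}$, which is the desired $2^{O(\min(p,q)\log(p+q))}$ only when $p\le q$. The paper handles this by first assuming $p\le q$ without loss of generality, observing that replacing every set in $\mathcal{F}$ by its complement swaps the roles of $p$ and $q$; you should make that reduction explicit as well.
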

 
First, using Lemma~\ref{lem:derandomize} in time $2^{O(\min(a+c,b)\log (a+b+c))} |V(S)|\log |V(S)|$ we generate a family $\randfamily$ for universe $U=V(S)$ and constants $p=a+c$, $q=b$; note that $|\randfamily|\leq 2^{O(\min(a+c,b)\log (a+b+c))} \log |V(S)|$. Properties of $\randfamily$ ensure that for every separation $(C,D)$ that satisfies properties given in the problem statement, there exists a set $R\in \randfamily$ such that $|(C\setminus D)\cap R|\geq a$, $|(D\setminus C)\cap R|\geq c$ and $(C\cap D)\cap R=\emptyset$. In such a situation we say that $R$ is {\emph{compatible}} with separation $(C,D)$. We iterate through the family $\randfamily$, for each $R\in\randfamily$ trying to find a separation that is compatible with it; by Lemma~\ref{lem:derandomize}, for at least one choice we are able to find a feasible separation.

Consider the subdigraph $S_R=S[R]$. In $O(|V(S)|^2)$ time partition $S_R$ into strongly connected components. As $S_R$ is semi-complete, they can be naturally arranged in a linear order, such that the arcs between different components are directed only to the one lower in the order. Let us denote this order of the components by $(F_1,F_2,\ldots,F_g)$. Assume there exists a feasible separation $(C,D)$ that is compatible with $R$. As $R$ is disjoint with $C\cap D$, we infer that there exists an index $h$ such that $\bigcup_{i=1}^{h} V(F_i)\subseteq C\setminus D$ and $\bigcup_{i=h+1}^{g} V(F_i)\subseteq D\setminus C$, i.e., a prefix of the order is entirely contained in one side of the separation, and the corresponding suffix is entirely contained in the second side. Let $h_1$ be the smallest index such that $|\bigcup_{i=1}^{h_1} V(F_i)|\geq a$ and $X\subseteq \bigcup_{i=1}^{h_1} V(F_i)$. Clearly, $h_1\leq h$ as otherwise either there would be less than $a$ vertices in the intersection of $R$ and $C\setminus D$, or there would be a vertex of $X$ in $D\setminus C$. Similarly, if by $h_2$ we denote the largest index such that $|\bigcup_{i=h_2}^{g} V(F_i)|\geq c$ and $Y\subseteq \bigcup_{i=h_2}^{g} V(F_i)$, then $h_2\geq h$. Note that $h_1,h_2$ can be computed in $O(|V(S)|)$ time and if $h_2<h_1$, then we may terminate the computation for this particular choice of $R$. On the other hand, if $h_2\geq h_1$ then every separation $(C,D)$ with $|C\cap D|\leq b$, such that $\bigcup_{i=1}^{h_1} V(F_i) \subseteq C\setminus D$ and $\bigcup_{i=h_2}^{g} V(F_i)\subseteq D\setminus C$, is sufficient for our needs. Therefore, all we need is to find the minimum vertex cut from $\bigcup_{i=1}^{h_1} V(F_i)$ to $\bigcup_{i=h_2}^{g} V(F_i)$, where vertices from $R$ have infinite capacity, while vertices from $V(S)\setminus R$ have unit capacities. This can be done in $O(b|V(S)|^2)$ time using the classical Ford-Fulkerson algorithm: note that once $b+1$ paths are found, we may terminate further computation, as no cut of small enough size can be found.

\medskip

\noindent{\bf{(b) Constructing a jungle.}} Assume we have a set $W$, such that $|W|\geq 5k+4\ell$ and $W$ is not $(k,\ell)$-separable. Our construction mimics the proof of (2.8) in~\cite{fradkin-seymour}.

Let $S=T[W]$. In time $O(|W|^2)$ we construct a set  $Z=\{w\in W\ |\ d_S^{+}(w)\geq k+\ell \wedge d_S^{-}(w)\geq k+\ell\}$. Note that for any semi-complete digraph and any integer $d$, the number of vertices of outdegree at most $d$ is bounded by $2d+1$, as otherwise already in the subdigraph induced by them there would be a vertex of higher outdegree. The same holds for indegrees. Hence, we infer that $|Z|\geq |W|-(2(k+\ell)-1)-(2(k+\ell)-1)\geq k$. Take any $x,y\in Z$. For the sake of contradiction, assume that there is a separation $(A,B)$ of order less than $k$ that separates $x$ from $y$. As $W$ is not $(k,\ell)$-separable, either $|(A\setminus B)\cap W|\leq \ell-1$ or $|(B\setminus A)\cap W|\leq \ell-1$. However, this means that either $x$ or $y$ has in-degree or out-degree in $S$ bounded by $k+\ell-1$, a contradiction with $x,y\in Z$. Therefore, every subset of $Z$ of cardinality $k$ is the desired $k$-jungle $\Jj$.
\end{proof}

\subsection{Irrelevant vertex procedure for rooted immersion}

In this section we show the crucial ingredient needed for the application of the framework to Theorem~\ref{thm:immersion}. As we consider the rooted version of the problem, in the  case of finding a large obstacle we cannot immediately provide an answer. Instead, we make use of the so-called {\emph{irrelevant vertex technique}}: we identify a vertex in the obstacle that can be safely removed from the digraph. In Section~\ref{subsec:sketches} we show how the following result can be used in the proof of Theorem~\ref{thm:immersion}.

\begin{lemma}\label{thm:irrelevant}
Let $\Ii=((H;u_1,u_2,\ldots,u_t),(T;v_1,v_2,\ldots,v_t))$ be an instance of the \himmersion{} problem, where $T$ is a semi-complete graph containing a $p(k)$-triple $(A,B,C)$ disjoint with $\{v_1,v_2,\ldots,v_t\}$ and $k=|H|$. Then it is possible in time $O(p(k)^2|V(T)|^2)$ to identify a vertex $x\in B$ such that $\Ii$ is a YES instance of \himmersion{} if and only if $\Ii'=((H;u_1,u_2,\ldots,u_t),(T\setminus\{x\};v_1,v_2,\ldots,v_t))$ is a YES instance.
\end{lemma}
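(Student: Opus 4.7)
The direction $\Ii' \Rightarrow \Ii$ is immediate since any $\mathbf{H}$-immersion that survives in $T\setminus\{x\}$ is, verbatim, an $\mathbf{H}$-immersion in $T$. The content of the lemma is the converse: given any $\mathbf{H}$-immersion $\eta$ of $\mathbf{H}$ into $\mathbf{T}$, one can rewire it into an $\mathbf{H}$-immersion that avoids a concrete vertex $x\in B$. The algorithm itself will just pick some $x\in B$ (for example, the median vertex in the ordering of $B$ inside the triple); the work lies in showing that such an $x$ is irrelevant provided $p(k)$ is taken to be a sufficiently large polynomial in $k$ (I expect $p(k)=\Theta(k^2)$ to be enough).

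The first step would be to normalize $\eta$ so that each image path $P_e:=\eta(e)$ is simple; this is harmless since shortcutting a repeated vertex only drops arcs and preserves edge-disjointness. In a normalized immersion, any vertex of $T$ is internal to at most $|E(H)|\le k^2$ paths of $\eta$. Because $B$ is disjoint from the root images $\{v_1,\dots,v_t\}$, the chosen $x$ is certainly not any $\eta(u_i)$, so $x$ is used by $\eta$ solely as an internal vertex of some family of paths $\mathcal{E}\subseteq E(H)$ of size at most $k^2$. For each $e\in\mathcal{E}$ I decompose $P_e = P_e^-\cdot(y_e,x)(x,z_e)\cdot P_e^+$ and reduce the goal to constructing, for every $e\in\mathcal{E}$, a detour path $Q_e$ from $y_e$ to $z_e$ inside $T\setminus\{x\}$ such that $\{Q_e\}_{e\in\mathcal{E}}\cup\{P_{e'}\}_{e'\notin\mathcal{E}}$ together with the kept pieces $P_e^\pm$ is still an edge-disjoint family; setting $\eta'(e):=P_e^-\cdot Q_e\cdot P_e^+$ then yields the required immersion in $\mathbf{T}\setminus\{x\}$.

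The detour would be constructed directly from the triple structure: pick, greedily for each $e\in\mathcal{E}$, a distinct pivot $b_e\in B\setminus\{x\}$ together with $a_e\in A$ and $c_e\in C$ so that both triple arcs $(a_e,b_e)$ and $(b_e,c_e)$ are currently unused, and let $Q_e$ be $y_e\rightsquigarrow a_e\to b_e\to c_e\rightsquigarrow z_e$, where the two short hookup fragments exploit the semi-completeness of $T$ and the abundance of unused $A$- and $C$-vertices. The key counting ingredient is that each triple vertex is internal to at most $k^2$ paths of $\eta$, hence at most $O(k^2)$ of the $p(k)$ arcs incident to any given triple vertex are blocked; combined with the bound $|\mathcal{E}|\le k^2$ on the number of detours, pigeonhole guarantees that a greedy assignment of pivots and hookups never runs out of options once $p(k)$ is a suitable polynomial in $k$.

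The principal obstacle is precisely this bookkeeping: one has to juggle (i) the arcs already consumed by the untouched paths $P_{e'}$, (ii) the arcs committed to earlier detours $Q_{e'}$, and (iii) the prohibition against reusing $x$ or the detoured portion. Picking the right value of $p(k)$ and the right greedy ordering to guarantee that the selection process always succeeds is the technical heart of the argument. Once this is in place, the algorithm identifying $x$ is simple --- it merely outputs any vertex of $B$ after locating the triple --- and the stated $O(p(k)^2|V(T)|^2)$ running time is a generous upper bound on reading the triple and performing any auxiliary scan of its incident arcs.
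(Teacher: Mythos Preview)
Your proposal has a genuine gap at its core: the vertex $x$ \emph{cannot} be chosen arbitrarily, and the ``short hookup fragments'' $y_e\rightsquigarrow a_e$ and $c_e\rightsquigarrow z_e$ that you rely on need not exist. Semi-completeness gives you at least one arc between $y_e$ and any $a\in A$, but it may well be the arc $(a,y_e)$ for \emph{every} $a\in A$; nothing in the triple structure forbids this. So there is no reason why you can reach $A$ from $y_e$ at all, let alone with a single arc, and the greedy pigeonhole argument never gets started. The same obstruction applies on the exit side with $z_e$ and $C$. (A secondary gap: you assert $x\notin\eta(V(H))$ because the triple avoids the \emph{roots}, but non-root vertices of $H$ may still be mapped to $x$.)

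The paper's proof is substantially more delicate precisely because of this hookup problem. Working with a length-minimal immersion, one first bounds how many vertices of $A$, $B$, $C$ and how many matching arcs any single path can touch. Then, for each $b\in B$, the in-neighbours of $b$ outside $A$ are split into those with many out-neighbours in $B$ (for which rerouting is easy, via another $B$-vertex and one matching arc) and a ``bad'' set $G_b$ of those with few. The crucial trick is to build an auxiliary digraph $S$ on the vertices $b$ with $G_b\neq\emptyset$, where $(b,b')\in E(S)$ encodes that every bad predecessor of $b$ can reach $G_{b'}$ in at most one step; semi-completeness of $T$ forces $S$ to be semi-complete, so many vertices of $S$ have large out-degree, and \emph{those} are the vertices for which the entry-side rerouting can be made to work. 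A symmetric second phase inside this set handles the exit side and isolates the single vertex $x$. Thus the algorithm does real work to locate $x$; it does not simply output a fixed element of $B$.
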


In our proof $p(x)= 80x^2+80x+5$. The proof is quite long and technical, so here we present only a brief, intuitive sketch. The full, formal version can be found in Appendix~\ref{sec:irrelevant}.

Recall that in the triple $(A,B,C)$ we have arcs from every vertex of $A$ to every vertex of $B$, from every vertex of $B$ to every vertex of $C$, and there exists a matching between $C$ and $A$ that consists of arcs directed from $C$ to $A$. Let us assume that $\Ii$ is a YES instance and let $\eta$ be the corresponding rooted immersion of $H$ such that $\Qq=\eta(E(H))$, the family of paths between corresponding images of vertices of $H$, has minimum total sum of lengths. To ease the presentation, we assume that none of vertices of the triple is in $\eta(V(H))$. The goal is to find a vertex $x$ in the part $B$, such that all the paths going through $x$ can be rerouted around --- thus we prove that if there exists some solution, then there exists also a solution that does not use $x$. We first observe some structural properties of the family $\Qq$ that follow from its minimality. The intuition is that if a path from $\Qq$ visits one of the parts $A,B,C$ multiple times, then one can use some unused parts of the triple to create a shortcut between the first and the last visit. All in all, one can show that at most $4k$ vertices from $B$ are visited by some path from $\Qq$, and at most $O(k^2)$ arcs of the matching between $C$ and $A$ have an endpoint visited by some path of $\Qq$. Thus, by choosing the size of the triple to be sufficiently large, one may assume that we are given an arbitrary large reservoir of {\emph{free}} vertices from $B$ and {\emph{free}} arcs of the matching between $C$ and $A$ that can be used for rerouting.

Consider vertex $x\in B$. The goal is to modify the family $\Qq$ to $\Qq''$ so that {\emph{(i)}} if a path enters $x$, then it enters $x$ from $A$; {\emph{(ii)}} if a path leaves $x$, then it leaves $x$ to $C$. Once both properties are satisfied, we can rerout all the paths going through $x$ to some other free vertex from $B$. Of course, this can be not possible for all vertices $x$. We find a vertex for which this is possible in two phases. First, we find a sufficiently large set $X\subseteq B$, such that for each $x\in X$ one can find an intermediate family $\Qq'$ for which just {\emph{(i)}} is satisfied. Second, inside $X$ we find a single vertex $x$, for which $\Qq'$ can be further modified to $\Qq''$ satisfying both {\emph{(i)}} and {\emph{(ii)}}. Note that we have to keep track of the number of free vertices and free edges used during the first phase so that the second phase has still enough space for reroutings. We now explain only how the first phase can be performed --- the second phase follows the same scheme.

For each $b\in B$, we partition the inneighbors of $b$ outside $A$ into two sets. One set, denoted $R_b$, contains these inneighbors, which have a sufficiently large number of outneighbors in $B$. We observe that every path entering $b$ from $R_b$ can be easily rerouted, because each vertex of $R_b$ must have a large number of free outneighbors in $B$, so we can redirect the path via a free outneighbor in $B$ and around the whole triple back to $b$, using one free arc of the matching between $C$ and $A$. The second set, denoted $G_b$, contains vertices that have small number of outneighbors in $B$ and are hence problematic. However, if sufficiently many vertices $b$ have empty $G_b$, we can already output these vertices as a feasible $X$.


Otherwise, we consider the vertices with nonempty $G_b$ and we know that we have a large number of them. We define an auxiliary digraph $S$ on the set of vertices $b\in B$ with nonempty $G_b$. We put $(b,b')\in E(S)$ iff for every $v\in G_b$ either $v\in G_{b'}$ or there exists $v'\in G_{b'}$ such that $(v,v')\in E(T)$. Here comes the main trick of the proof: the fact that $T$ is semi-complete implies that so do $S$. Now consider a vertex $b\in B$ with high outdegree in $S$. We show how a path entering $b$ from $v\in G_b$ can be rerouted so that it enters $b$ from $A$. As $b$ has large outdegree in $S$, for many $b'\in B$ which are still free we can access some $v'\in G_{b'}$ either by staying in $v$ (if $v\in G_{b'}$) or using one arc. If $v=v'$ for some $b'$, we can follow the same strategy as for the sets $R_b$. Otherwise, we need to check that some of the arcs accessing vertices $v'$ are not yet used by other paths; this can be problematic as vertices $v'$ are not necessarily distinct. However, we know that each of them belongs to some $G_{b'}$, so they can appear only in limited number of copies as they have only limited number of outneighbors in $B$. Hence, we can distinguish sufficiently many pairwise distinct ones to be sure that one of the arcs accessing them is not yet used. Therefore, we can replace usage of the arc $(v,b)$ with first accessing some other $G_{b'}$, then the free vertex $b'$, and then moving around the triple using a free arc of the matching between $C$ and $A$ (see Figure~\ref{fig:irrelevant} in Appendix~\ref{sec:irrelevant} for an illustration). As $S$ is semi-complete and sufficiently large, there exists a sufficiently large number of vertices $b$ with sufficiently large outdegree in $S$, from which the set $X$ can be formed.

\subsection{Sketches of proofs of Theorems~\ref{thm:main},~\ref{thm:immersion} and~\ref{thm:immersionclosed}}\label{subsec:sketches}

In this Section we sketch how Theorems~\ref{thm:main},~\ref{thm:immersion} and~\ref{thm:immersionclosed} follow from an application of the WIN/WIN approach. The proofs are essentially arranging already known pieces together, so we just discuss the main ideas. Fully formal descriptions can be found in Appendix~\ref{sec:main-proofs}.

Theorem~\ref{thm:main} follows almost directly from Theorems~\ref{thm:bundle},~\ref{thm:MC} and Lemmas~\ref{lem:jungle_triple_algo}, \ref{lem:triple-model}. We first run approximation of pathwidth. If pathwidth is too large, we are guaranteed that the digraph contains a sufficiently large triple, which topologically contains every digraph of given size. Hence, we may provide a positive answer. Otherwise, we have a path decomposition of small width, on which we can run the algorithm given by Theorem~\ref{thm:MC}. One just needs to observe that containing $H$ topologically can be expressed in $\MSO_1$: the formula existentially quantifies over images of vertices of $H$, and then checks existence of disjoint paths as disjoint subsets of vertices in which endpoints cannot by separated by separation of order $0$. We remark that the quantifier alternation of this formula is constant, so the obtained running time dependence on the size of $H$ is elementary.

For Theorems~\ref{thm:immersion} and~\ref{thm:immersionclosed} it will be convenient to use the following surprising result of Ganian et al.~\cite{GanianHO11}: for every $\ell$ there exists an $\MSO_1$ formula $\pi_\ell(s_1,s_2,\ldots,s_\ell, t_1,t_2,\ldots,t_\ell)$ that for a digraph with distinguished vertices $s_1,s_2,\ldots,s_\ell, t_1,t_2,\ldots,t_\ell$ (some of which are possibly equal) asserts whether there exists a family of edge-disjoint paths $P_1,P_2,\ldots,P_\ell$ such that $P_i$ begins in $s_i$ and ends in $t_i$ for $i=1,2,\ldots,\ell$. For Theorem~\ref{thm:immersion} we proceed similarly as before. After approximating pathwidth, we either obtain a large triple, in which using Theorem~\ref{thm:irrelevant} we can find an irrelevant vertex that can be safely removed, or obtain a path decomposition of small width, on which we can run algorithm given by Theorem~\ref{thm:MC} for an appropriate adaptation of formula $\pi_{|E(H)|}$. We obtain an additional $O(|V(T)|)$ overhead in the running time, as we may $O(|V(T)|)$ time run the approximation algorithm and remove one vertex. Unfortunately, the quantifier alternation of formula $\pi_\ell$ depends on $\ell$, so we do not obtain elementary dependence of the running time on the size of immersed graph. However, one can explicitly construct a dynamic program on path decomposition with elementary dependence. We provide such an algorithm in Appendix~\ref{sec:app-DP}. Using the same technique one can also obtain a dynamic program for topological containment with similar running time bounds; we omit the details. We remark that application of the formulas $\pi_\ell$ in the context of immersion in semi-complete digraphs has been already observed by Ganian et al.~\cite{GanianHO11}. 

Theorem~\ref{thm:immersionclosed} follows from combining this approach with the result of Chudnovsky and Seymour that immersion order on semi-complete tournaments is a well-quasi-order~\cite{ChudnovskyS11}. By~\cite{ChudnovskyS11}, class $\Pi$ can be characterized by a finite set of forbidden immersions $\{H_1,H_2,\ldots,H_r\}$. By a simple adaptation of formulas $\pi_\ell$, for each $H_i$ we can construct an $\MSO_1$ formula with one free monadic variable $X$ that asserts that after removing $X$ the graph does not admit $H_i$ as an immersion. By taking conjunction of these formulas and quantifying $X$ existentially we obtain an $\MSO_1$ formula that asserts belonging to $\Pi+kv$. In order to finish the proof it suffices to observe that the digraphs from the class $\Pi+kv$ have bounded pathwidth: a digraph with large pathwidth contains a big triple, which after removing at most $k$ vertices still contains every sufficiently small graph as an immersion. We run the algorithm from Theorem~\ref{thm:bundle}. If we obtain a large obstacle, we provide a negative answer; otherwise, we run the algorithm given by Theorem~\ref{thm:MC}.

\section{Conclusions and open problems}\label{sec:conclusion}
In this paper we showed that \textsc{Topological containment} and $\mathbf{H}$-\textsc{immersion} are FPT on semi-complete graphs.  A natural common generalization of 
\textsc{Topological containment} and $k$-\textsc{Vertex Disjoint Paths} is the rooted version of \textsc{Topological containment}, where specified vertices of $H$ should be mapped into specified vertices of $G$. 
The obvious next step of research would be to try to extend the results to rooted topological containment. Chudnovsky, Scott and Seymour in \cite{ChudnovskySS2011} proved that $k$-\textsc{Vertex Disjoint Paths} is in class \XP on semi-complete, i.e., solvable in polynomial time for fixed $k$.  This also implies that the rooted version of topological containment is in \XP on semi-complete graphs. 

Unfortunately, our approach used for rooted immersion does not apply to this problem. Dynamic programming on path decomposition works fine but the problem is with  the irrelevant vertex arguments. Even for $k=2$ there exist tournaments that contain arbitrarily large triples, but in which every vertex is relevant; we provide such an example in Appendix~\ref{sec:app-counter}. This suggests that a possible way to obtaining an \FPT algorithm for \vertexpath problem requires another width parameter admitting more powerful obstacles.

\bibliographystyle{splncs03}
\bibliography{flat_tournament-containment}

\newpage
\appendix

\section{Omitted proofs from the main body}\label{sec:proofs}

\begin{proof}[Proof of Lemma~\ref{lem:jungle_triple_algo}]
For an integer $k$, let $R(k,k)$ denote the Ramsey number, that is the smallest integer such that every red-blue coloring of the edges of the complete graph on $R(k,k)$ vertices contains a monochromatic clique of size $k$. By the theorem of Erd{\"o}s and Szekeres \cite{ErdosS35}, $R(k,k)\leq (1+o(1))\frac{4^{k-1}}{\sqrt{\pi k}}$. For $k\geq 1$, we define function the function $f$ as
$$f(k)=2^{12\cdot 2^{R(2k,2k)}}.$$
Moreover, let $r=R(2k,2k)$, $s=2^r$,  and $m=2^{12s}=f(k)$.

We say that a semi-complete digraph is {\emph{transitive}} if it contains a transitive tournament as a subdigraph. Every tournament with $m$ vertices contains a transitive tournament with $\log_{2}{m}$ vertices. Also an induced acyclic
subdigraph of an oriented (where there is at most one directed arc between a pair of vertices) $m$-vertex digraph
with $\log_{2}{m}$ vertices can be found in time $O(m^2\log m)$ \cite{RamanS06}. This algorithm can be modified into an algorithm finding a transitive semi-complete digraph in semi-complete digraphs by removing first all pairs of oppositely directed arcs, running the algorithm for oriented graphs, and then adding some of the deleted arcs to turn the acyclic digraph into a transitive semi-complete digraph.
Thus $T$ contains as a subdigraph a transitive semi-complete digraph, whose vertex set $X$ is a $12s$-jungle in $T$; moreover,
such a set $X$ can be found in time $O(|V(T)|^2\log |V(T)|)$.

The next step in the proof of Fradkin and Seymour is to partition the set $X$ into parts $X_1$ and $X_2$ of size $6s$ each such that
$X_1$ is complete to $X_2$, i.e., for each $x_1\in X_1$ and $x_2\in X_2$ we have $(x_1,x_2)\in E(T)$. Such a partition of the vertex set of the transitive semi-complete digraph can be easily found in time $O(|X|^2)$.
Because $X$ is a $12s$-jungle in $T$, there are at least $6s$ internally vertex-disjoint paths from $X_2$ to $X_1$ in $T$.
Let $R$ be a minimal induced subdigraph of $T$ such that $X \subseteq V(R)$ and there are $6s$ vertex-disjoint paths from $X_2$ to $X_1$ in $R$. Such a minimal subgraph $R$ can be found in time $O(|V(T)|^3\log |V(T)|)$ by repeatedly removing vertices $v\in V(T)\setminus X$ if there are $6s$ vertex-disjoint paths from $X_2$ to $X_1$ in $V(T)\setminus \{v\}$. As the subroutine for finding the paths we use the classical Ford-Fulkerson algorithm, where we finish the computation after finding $6s$ paths. Hence, the running time is $O(s|V(T)|^2)$. As we make at most $|V(T)|$ tests, and $s=O(\log |V(T)|)$, the claimed bound on the runtime follows.

Let $P_1, P_2, \dots, P_{6s}$ be vertex-disjoint paths from $X_2$ to $X_1$ in $R$. Fradkin and Seymour proved that the set of vertices $Q$ formed by the first two and the last two vertices of these $6s$ paths contains a $k$-triple. Thus, by checking every triple of subsets of $Q$ of size $k$ in time polynomial in $k$, we can find a $k$-triple. This step takes time $O\left(\binom{24s}{k}^3k^{O(1)}\right)=|V(T)|^{o(1)}$, as $s=O(\log |V(T)|)$ and $k=O(\log\log |V(T)|)$.
\end{proof}

\begin{proof}[Proof of Claim~\ref{cl:trivial} from the proof of Theorem~\ref{thm:bundle}]
We are to prove that if the new separation $(C,D)$ inserted between $(A_j,B_j)$ and $(A_{j+1},B_{j+1})$ is not $k$-close to $(A_j,B_j)$ nor to $(A_{j+1},B_{j+1})$, then it is not $k$-close to any separation $(A_r,B_r)\in \Bb_{i+1}$. By symmetry we assume that $r<j$.
Let us recall that $A_r\subseteq A_j\subseteq C\subseteq A_{j+1}$ and $B_r\supseteq B_j\supseteq D\supseteq B_{j+1}$.
As no separations already inside $\Bb_{i+1}$ are $k$-close, we have that
\begin{eqnarray*}
|(B_r\setminus A_r)\cap (A_j\setminus B_j)| & \geq & k\left| |A_r\cap B_r| - |A_j\cap B_j| \right|;\\
|(B_j\setminus A_j)\cap (C\setminus D)| & \geq & k\left| |A_j\cap B_j| - |C\cap D| \right|.
\end{eqnarray*}
Note that
\begin{eqnarray*}
\left((B_r\setminus A_r)\cap (A_j\setminus B_j)\right) \cup \left((B_j\setminus A_j)\cap (C\setminus D)\right) & \subseteq & \left((B_r\setminus A_r)\cap (C\setminus D)\right), \\
\left((B_r\setminus A_r)\cap (A_j\setminus B_j)\right) \cap \left((B_j\setminus A_j)\cap (C\setminus D)\right) & = & \emptyset.
\end{eqnarray*}
Therefore,
\begin{eqnarray*}
|(B_r\setminus A_r)\cap (C\setminus D)| & \geq & |(B_r\setminus A_r)\cap (A_j\setminus B_j)|+|(B_j\setminus A_j)\cap (C\setminus D)| \\
& \geq & k\left| |A_r\cap B_r| - |A_j\cap B_j| \right|+k\left| |A_j\cap B_j| - |C\cap D| \right| \\
& \geq & k\left| |A_r\cap B_r| - |C\cap D| \right|.
\end{eqnarray*}
This proves that $(A_r,B_r)$ and $(C,D)$ are not $k$-close.
\end{proof}

\begin{proof}[Proof of Lemma~\ref{lem:derandomize}]
Without loss of generality we may assume that $p\leq q$, as we can always replace every set in the constructed family with its complement, thus switching roles of $p$ and $q$. An $(n,r,r')${\emph{-splitter}} $\mathcal{S}$ is a family of functions from a universe $U$ of size $n$ to the set $\{1,2,\ldots,r'\}$ such that for every $X\subseteq U$, $|X|=r$, there exists $f\in \mathcal{S}$ that is injective on $X$. Naor et al.~\cite{naor-schulman-srinivasan-derandom} gave an explicit construction of $(n,r,r^2)$-splitter of size $O(r^6\log r\log n)$ in time $O(poly(r)\cdot n\log n)$. Using the algorithm of Naor et al. we construct a $(n,z,z^2)$-splitter $\mathcal{S}$ on $U$, where $z=\min(p+q,n)$. Then, for every $f\in \mathcal{S}$ and every $Z\subseteq \{1,2,\ldots,z^2\}$, $|Z|=p$, we add to the constructed family $\randfamily$ the set $f^{-1}(Z)$. Assume now that we have $A,B\subseteq U$ such that $|A|\leq p$ and $|B|\leq q$. Obtain $A'$ and $B'$ by adding arbitrary elements of $U\setminus (A\cup B)$ to $A$ and $B$ so that $|A'|+|B'|=z$. By definition of the splitter, there exists $f\in \mathcal{S}$ that is injective on $A'\cup B'$. To finish the proof one needs to observe that if we take $R=f^{-1}(f(A'))$, then $A\subseteq R$ and $B\cap R=\emptyset$. The bounds on the time complexity of the algorithm and on the size of $\randfamily$ follow from the bounds given by Naor et al. and from the fact that the number of subsets of $\{1,2,\ldots,z^2\}$ of size $p$ is bounded by $2^{O(p\log z)}\leq 2^{O(p\log (p+q))}$.
\end{proof}

\section{Full proofs of Theorems~\ref{thm:main},~\ref{thm:immersion} and~\ref{thm:immersionclosed}}\label{sec:main-proofs}

\begin{proof}[Proof of Theorem~\ref{thm:main}]
Let $f$ be the function given by Lemma~\ref{lem:jungle_triple_algo} and denote $k=f(|H|)$. We first apply the algorithm from Theorem~\ref{thm:bundle}, which in $2^{O(k\log{k})}|V|^3 \log{|V|}$ time finds either a $k$-jungle in $T$, or returns a path decomposition of width at most $4k^2+7k$. In the first case, by Lemma~\ref{lem:jungle_triple_algo}  we know that $T$ contains a $|H|$-triple. Lemma~\ref{lem:triple-model} asserts that we may safely provide a positive answer. Otherwise, we can observe that we can easily construct an $\MSO_1$ formula $\ff_H$ that checks, whether $H$ can is topologically contained in $T$. First, we existentially quantify over the images of vertices of $H$. Then, for every $(u,v)\in E(H)$ we quantify existence of a set $X_{(u,v)}$, such that all the sets $X_{(u,v)}$ are vertex disjoint and $X_{(u,v)}\cup \{\eta(u),\eta(v)\}$ does not admit a separation of order $0$ that separates $\eta(v)$ from $\eta(u)$. This asserts existence of required paths between vertices. We can conclude the proof by applying Theorem~\ref{thm:MC} to formula $\ff_H$ and semi-complete digraph $T$. In order to show that the obtained running time is elementary, one can to observe that the formula $\ff_H$ has constant quantifier alternation, so the dependence on $||\ff_H||$ and width of the decomposition of the running time of the algorithm given by Theorem~\ref{thm:MC} is elementary. Alternatively, one can construct the dynamic program on path decomposition by hand in a similar manner as the dynamic program for rooted immersion, given in Appendix~\ref{sec:app-DP}.
\end{proof}

\begin{proof}[Proof of Theorem~\ref{thm:immersion}]
The algorithm performs at most $|V(T)|$ iterations. In each iteration we either delete one vertex from the graph so that the answer does not change, or resolve the problem completely. The complexity of each step is $h(|H|)|V(T)|^3\log |V(T)|$, which gives a $h(|H|)|V(T)|^4\log |V(T)|$ bound on the total running time.

In each step we run algorithm from Theorem~\ref{thm:bundle} for $k=f(p(|H|))$ and semi-complete digraph $T\setminus \{v_1,v_2,\ldots,v_t\}$, which in time $2^{O(k\log k)}|V(T)|^3\log |V(T)|$ either outputs a path decomposition of $T\setminus \{v_1,v_2,\ldots,v_t\}$ of width at most $O(k^2)$, or a $k$-jungle in $T\setminus \{v_1,v_2,\ldots,v_t\}$. If we find a jungle using Theorem~\ref{lem:jungle_triple_algo} in $O(|V(T)|^3\log |V(T)|)$ time we can find a $p(k)$-triple in $T\setminus\{v_1,v_2,\ldots,v_t\}$ and, by further usage of Theorem~\ref{thm:irrelevant}, an irrelevant vertex in it. Theorem~\ref{thm:irrelevant} ensures that removal of this very vertex yields an equivalent instance.

However, if we obtain a decomposition, we can include all the excluded roots into every bag (note that there are at most $|V(H)|$ of them) to obtain a decomposition of $T$. Then we may proceed in two ways. Firstly, let us recall the result of Ganian et al.~\cite{GanianHO11}, that for every $\ell$ there exists an $\MSO_1$ formula $\pi_\ell(s_1,s_2,\ldots,s_\ell, t_1,t_2,\ldots,t_\ell)$ that for a digraph with distinguished vertices $s_1,s_2,\ldots,s_\ell, t_1,t_2,\ldots,t_\ell$ (some of which are possibly equal) asserts whether there exists a family of edge-disjoint paths $P_1,P_2,\ldots,P_\ell$ such that $P_i$ begins in $s_i$ and ends in $t_i$, for $i=1,2,\ldots,\ell$. We take formula \linebreak $\pi_{|E(H)|}(s_1,s_2,\ldots,s_{|E(H)|}, t_1,t_2,\ldots,t_{|E(H)|})$, additionally quantify existentially images of vertices of $H$ that are not roots, and replace variables $s_i,t_i$ with corresponding roots or quantified images. Thus we obtain a new formula $\ff_\mathbf{H}$ that asserts that $\mathbf{H}$ is a rooted immersion of $\mathbf{T}$. We may now use Theorem~\ref{thm:MC} to obtain the algorithm. Unfortunately, as the formula constructed by Ganian et al.~\cite{GanianHO11} has quantifier alternation depending on $|\mathbf{H}|$, the guarantees on the running time obtained in this manner are non-elementary. However, it is possible to construct a dynamic program on path decomposition by hand, which has elementary dependence on the width and $|\mathbf{H}|$. We provide the details in Appendix~\ref{sec:app-DP}.
\end{proof}

\begin{proof}[Proof of Theorem~\ref{thm:immersionclosed}]
As $\Pi$ is immersion-closed, by the result of Chudnovsky and Seymour~\cite{ChudnovskyS11} that immersion order on semi-complete tournaments is a well-quasi-order, we infer that $\Pi$ can be characterized by admitting none of a family of semi-complete digraphs $\{H_1,H_2,\ldots,H_r\}$ as an immersion, where $r=r(\Pi)$ depends only on the class $\Pi$. Let us recall the result of Ganian et al.~\cite{GanianHO11}, that for every $\ell$ there exists an $\MSO_1$ formula $\pi_\ell(s_1,s_2,\ldots,s_\ell, t_1,t_2,\ldots,t_\ell)$ that for a digraph with distinguished vertices $s_1,s_2,\ldots,s_\ell, t_1,t_2,\ldots,t_\ell$ (some of which are possibly equal) asserts whether there exists a family of edge-disjoint paths $P_1,P_2,\ldots,P_\ell$ such that $P_i$ begins in $s_i$ and ends in $t_i$ for $i=1,2,\ldots,\ell$. For every $i\in \{1,2,\ldots,r\}$ we construct an $\MSO_1$ formula $\ff_i(X)$ that is true if digraph $G\setminus X$ contains $H_i$ as an immersion: we simply quantify existentially over the images of vertices of $H_i$, use the appropriate formula $\pi_{|E(H)|}$ for quantified variables to express existence of paths, and at the end relativize the whole formula to the subdigraph induced by $V(T)\setminus X$. Hence, if we denote by $\psi_k(X)$ the assertion that $|X|\leq k$ (easily expressible in the first order logic by a formula, whose length depends on $k$), the formula $\ff=\exists_X \psi_k(X) \wedge \bigwedge_{i=1}^r \neg\ff_i(X)$ is true exactly in semi-complete digraphs, from which one can delete at most $k$ vertices in order to obtain a semi-complete digraphs belonging to $\Pi$.

Observe that class $\Pi$ has bounded pathwidth, as by Theorem~\ref{thm:main} and Lemma~\ref{lem:jungle_triple_algo},  a semi-complete digraph of large enough pathwidth contains a sufficiently large triple, in which one of the graph $H_i$ is topologically contained, so also immersed. It follows that if pathwidth of $\Pi$ is bounded by $c_\Pi$, then pathwidth of $\Pi+kv$ is bounded by $c_\Pi+k$. Therefore, we can apply the WIN/WIN approach. Using Theorem~\ref{thm:bundle} in time $g(k,c_\Pi)|V(T)|^3\log |V(T)|$ we either find a $c_\Pi+k+2$-jungle, which is sufficient to provide a negative answer (recall that a $c_\Pi+k+2$-jungle is an obstacle for admitting pathwidth less than $c_\Pi+k+1$), or a path decomposition of width $O((c_\Pi+k)^2)$, on which we can run the algorithm given by Theorem~\ref{thm:MC} applied to formula $\ff$.
\end{proof}

\section{Width parameters of semi-complete digraphs}\label{sec:parameters}

In this section we gather results on the relations between width parameters of semi-complete digraphs.

\subsection{Definitions}

In this subsection we formally define cutwidth and cliquewidth.

Let $D$ be a digraph and let $(v_1,v_2,\ldots,v_n)$ be an ordering of $V(D)$. By \emph{cutwidth} of this ordering we mean the value defined as $\max_{\ell=1,\ldots,n-1} |\{(v_j,v_i)\ |\ (v_j,v_i)\in E(D) \wedge i\leq \ell <j\}$. In other words, for every cut between two consecutive vertices we measure the number of edges directed back, and take maximum of these values. By {\emph{cutwidth}} of the digraph $D$, $\ctw(D)$, we denote the minimum cutwidth among all possible orderings of vertices.

Let $D$ be a digraph and $k$ be a positive integer.
A \emph{$k$-digraph} is a digraph whose vertices are labeled by
integers from $\{1,2,\dots,k\}$. We call the $k$-digraph consisting
of exactly one vertex labeled by some integer from
$\{1,2,\dots,k\}$ an initial $k$-digraph. The \emph{clique-width}
$\cw(D)$ is the smallest integer $k$ such that $G$ can be
constructed by means of repeated application of the following four
operations on $k$-digraphs: {\em $(1)$ introduce}: construction of
an initial $k$-graph labeled by $i$ and denoted by $i(v)$ (that
is, $i(v)$ is a $k$-digraph with $v$ as a single vertex and label
$i$), {\em $(2)$ disjoint union} (denoted by $\oplus$), {\em $(3)$
relabel}: changing all labels $i$ to $j$ (denoted by $\rho_{i\to
j}$), and {\em $(4)$ join}: connecting all vertices labeled by $i$
with all vertices labeled by $j$ by arcs  (denoted by
$\eta_{i,j}$). Using the symbols of these operations, we can
construct well-formed expressions.  An expression is called
\emph{$k$-expression} for $D$ if the digraph produced by performing
these operations, in the order defined by the expression, is
isomorphic to $G$ when labels are removed, and $\cw(D)$ is the
minimum $k$ such that there is a $k$-expression for $D$.

\subsection{Comparison of the parameters}

\begin{lemma}\label{lem:ctw-pw}
For every digraph $D$, $\pw(D)\leq 2\ctw(D)$. Moreover, given an ordering of $V(D)$ of cutwidth $c$, one can in $O(|V(D)|^2)$ time compute a path decomposition of width at most $2c$.
\end{lemma}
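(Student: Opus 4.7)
The plan is to build the path decomposition explicitly by reversing the given cutwidth ordering and associating to each position a bag consisting of the current vertex together with all endpoints of arcs ``crossing over'' that position from right to left. Formally, given a cutwidth ordering $v_1,\ldots,v_n$ of $V(D)$ of cutwidth $c$, for each $p\in\{0,1,\ldots,n\}$ I first define $B_p$ to be the set of endpoints of back-arcs at cut $p$, i.e.\ arcs $(v_j,v_i)\in E(D)$ with $i\le p<j$. Each such arc contributes at most two vertices to $B_p$ and there are at most $c$ of them, so $|B_p|\le 2c$. The candidate path decomposition is then $(W_1,\ldots,W_n)$ with
\[
W_\ell \;=\; B_{n-\ell}\cup\{v_{n-\ell+1}\}\qquad \text{for } \ell=1,\ldots,n,
\]
so that each bag has size at most $2c+1$, giving width at most $2c$ as required.

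Next I would verify conditions (i)--(iii) of Definition~\ref{def:pw}. Condition (i) is immediate since $v_p$ lies in the bag of index $n-p+1$. For condition (iii), given an arc $(v_a,v_b)$ I would split on the sign of $a-b$. If $a>b$, then this is a back-arc at cut $a-1$, so both $v_a$ and $v_b$ lie in $B_{a-1}$ and hence share the bag of index $n-a+1$. If $a<b$, then $v_a$ occurs in the bag of index $n-a+1$ while $v_b$ occurs in the bag of index $n-b+1$, and the former index is strictly larger, fulfilling the second alternative of~(iii).

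The delicate step is condition (ii), i.e.\ the interval property that $\{\ell:v\in W_\ell\}$ is a contiguous set for each $v$. The plan is to compute this set explicitly for a generic $v_p$. Unwinding the definition, $v_p$ belongs to the bag of index $n-q+1$ iff one of the following holds: $q=p$; or $p\ge q$ and $v_p$ has a back-out-neighbour at some position $\le q-1$; or $p<q$ and $v_p$ has a back-in-neighbour at some position $\ge q$. Letting $i^{\star}$ be the smallest position of a back-out-neighbour of $v_p$ (with $i^{\star}:=p-1$ if none exists) and $j^{\star}$ the largest position of a back-in-neighbour of $v_p$ (with $j^{\star}:=p$ if none exists), these three cases together say precisely that the values of $q$ for which $v_p$ lies in the corresponding bag form the integer interval $[\,i^{\star}+1,\,j^{\star}\,]$; reversing the ordering by $\ell=n-q+1$ transports this interval onto an interval of bag indices, yielding~(ii).

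For the running time, I would observe that $i^{\star}$ and $j^{\star}$ for every vertex can be computed by a single scan over the arc set in $O(|E(D)|)=O(|V(D)|^2)$ time. Given these values, a sweep over $\ell=1,\ldots,n$ maintaining the current bag (inserting $v_p$ when $\ell$ reaches $n-j^{\star}+1$ and removing it when $\ell$ exceeds $n-i^{\star}$) produces the full decomposition in $O(|V(D)|^2)$ total time, matching the bound on its total size $n(2c+1)=O(|V(D)|^2)$. I expect the interval-property verification to be the main point of care, while the arc condition and the complexity analysis should follow routinely.
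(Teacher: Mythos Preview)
Your proof is correct and follows essentially the same construction as the paper: associate to each position in the cutwidth ordering a bag consisting of the current vertex together with all endpoints of back-arcs crossing the adjacent cut, and then verify the three axioms directly. The one substantive difference is that you reverse the bag indexing, which is in fact what makes condition~(iii) of Definition~\ref{def:pw} come out with the stated inequality direction; the paper's unreversed ordering, as written, places the tail of a forward arc in an \emph{earlier} bag than its head.
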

\begin{proof}
We provide a method of construction of a path decomposition of width at most $2c$ from an ordering of $V(D)$ of cutwidth $c$.

Let $(v_1,v_2,\ldots,v_n)$ be the ordering of $V(D)$ of cutwidth $c$. Let $F\subseteq E(D)$ be the set of edges $(v_j,v_i)$ such that $j>i$; edges from $F$ will be called {\emph{back edges}}. We construct a path decomposition $W=(W_1,W_2, \dots , W_{n-1},W_n)$ of $D$ by setting
$$W_\ell=\{v_\ell\}\cup \bigcup \{ \{v_i,v_j\}\ |\ i\leq \ell<j \wedge (v_j,v_i)\in E(D)\}.$$
In other words, for each cut between two consecutive vertices in the order (plus one extra at the end of the ordering) we construct a bag that contains endpoints of all the back edges that are cut by this cut plus the last vertex. Observe that $|W_\ell|\leq 2c+1$ for every $1\leq \ell\leq n-1$. It is easy to construct $W$ in $O(|V(D)|^2)$ time using one scan through the ordering $(v_1,v_2,\ldots,v_n)$ and maintaining the number of back edges incident to every vertex at each point. We are left with arguing that $W$ is a path decomposition. Clearly $\bigcup_{i=1}^n W_i=V(D)$.

Consider any vertex $v_\ell$ and an index $j\neq \ell$, such that $v_\ell\in W_j$. Assume first that $j<\ell$. By the definition of $W$, there exists an index $i\leq j$ such that $(v_\ell,v_i)\in E(D)$. Existence of this arc implies that $v_\ell$ has to be contained in every bag between $W_j$ and $W_\ell$. A symmetrical reasoning works also for $j\geq \ell$. We infer that for every vertex $v_\ell$ the set of bags it is contained in form an interval in the path decomposition. The claim that $W_i\cap W_k\subseteq W_j$ for every $1\leq i<j<k\leq n$ follows directly from this observation.

To finish the proof, consider any edge $(v_i,v_j)\in E(G)$. If $i>j$ then $\{v_i,v_j\}\subseteq W_j$, whereas if $i<j$ then $v_i\in W_i$, $v_j\in W_j$ and we are done.
\end{proof}

\begin{lemma}\label{lem:pw-cw}
For every semi-complete digraph $T$, $\cw(T)\leq \pw(T)+2$. Moreover, given a path decomposition of width $p$, one can in $O(|V(T)|^2)$ time compute a $p+2$-expression constructing $T$.
\end{lemma}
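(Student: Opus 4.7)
The plan is to process the path decomposition $W = (W_1, \ldots, W_r)$ from left to right, building the $(p+2)$-expression incrementally and maintaining the following invariant: after handling bag $W_i$, the constructed labeled digraph equals $T[\bigcup_{j \leq i} W_j]$ up to labels, with the vertices of the active bag $W_i$ carrying pairwise distinct labels from $\{1, \ldots, p+1\}$ and all already forgotten vertices carrying the common label $p+2$. Since $|W_i| \leq p+1$, a fresh label from $\{1, \ldots, p+1\}$ is always available whenever a new vertex is introduced, and by the interval property inherent in condition (ii) each vertex has a well-defined first and last bag so the notions of \emph{being introduced} and \emph{being forgotten} are unambiguous.

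The key observation that makes a single shared label $p+2$ for the forgotten pool sufficient combines semi-completeness with condition (iii) of Definition~\ref{def:pw}. Suppose $v \in W_i \setminus W_{i-1}$ is about to be introduced and $u$ has already been forgotten; then all of $u$'s bag indices are strictly less than $i$ while all of $v$'s are at least $i$, so $u$ and $v$ share no bag and there is no pair $a > b$ with $u \in W_a$, $v \in W_b$. Condition (iii) then forbids the arc $(u, v)$, and semi-completeness forces $(v, u) \in E(T)$. Hence at the moment of introducing $v$, all missing arcs between $v$ and the forgotten pool are directed outward from $v$ and can be inserted simultaneously by a single application of $\eta_{\ell_v, p+2}$.

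Given this, processing bag $W_i$ splits into a forget phase and an introduce phase. In the forget phase I apply $\rho_{\ell_u \to p+2}$ for each $u \in W_{i-1} \setminus W_i$; since $u$ is the only vertex currently carrying label $\ell_u$, this merely moves $u$ into the forgotten pool. In the introduce phase I handle each $v \in W_i \setminus W_{i-1}$ in turn: pick a label $\ell_v \in \{1, \ldots, p+1\}$ unused in the active bag, attach $v$ via $\oplus$ with the initial $k$-digraph $\ell_v(v)$, then for every vertex $u$ already present in the active bag apply $\eta_{\ell_v, \ell_u}$ and/or $\eta_{\ell_u, \ell_v}$ according to which of the arcs $(v, u), (u, v)$ belong to $E(T)$ (because $v$ and $u$ are the unique bearers of their labels, each such join inserts exactly the intended single arc), and finally apply $\eta_{\ell_v, p+2}$ to wire $v$ to every forgotten vertex at once. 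A routine case distinction against condition (iii) confirms that every arc of $T$ is inserted at the moment its later-introduced endpoint is processed, either via a pairwise join with an active co-member or via the bulk join with label $p+2$, restoring the invariant.

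For the running time, after a standard normalization (merging duplicated consecutive bags, removing redundant ones) the path decomposition may be assumed to have $O(|V(T)|)$ bags. Each vertex is introduced at most once and contributes at most $p+1$ pairwise joins plus one bulk join plus one relabel, so the resulting $(p+2)$-expression has length $O(p \cdot |V(T)|) = O(|V(T)|^2)$ and is emitted in the same time by a single left-to-right pass that maintains the mapping from active vertices to labels. This yields $\cw(T) \leq p+2 = \pw(T) + 2$ within the claimed complexity. The only subtle step is justifying the single-join treatment of the forgotten pool, which is precisely where the interplay between semi-completeness and the directed path-decomposition axiom is essential; the rest is straightforward bookkeeping.
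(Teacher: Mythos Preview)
Your proof is correct and follows essentially the same approach as the paper: both process a (normalized) path decomposition left to right, maintaining distinct labels from $\{1,\ldots,p+1\}$ on the current bag and the shared label $p+2$ on all forgotten vertices, with introduce and forget steps handled exactly as you describe. Your write-up is in fact slightly more explicit than the paper's in justifying why the single bulk join $\eta_{\ell_v,\,p+2}$ always suffices for the forgotten pool---this is indeed the one place where semi-completeness together with axiom~(iii) is essential, and you identify it correctly.
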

\begin{proof}
We provide a method of construction of a $p+2$-expression from a path decomposition of width $p$.

Let $(W_1,W_2,\ldots,W_r)$ be a path decomposition of $T$ of width $p$. By standard means we can assume that the given path decomposition is a {\emph{nice}} path decomposition, i.e., $W_1=W_r=\emptyset$ and $|W_i\setminus W_{i-1}|+|W_{i-1}\setminus W_i|=1$ for every $i=2,3,\ldots,r$. If $W_i=W_{i-1}\cup \{v\}$, we say that $W_i$ {\emph{introduces}} $v$; if $W_{i-1}=W_i\cup\{v\}$, we say that $W_i$ {\emph{forgets}} $v$. Every path decomposition can be easily turned into a nice one of the same width in $O(|V(T)|^2)$ time. Intuitively, between every two bags of the given decomposition we firstly forget the part that needs to be forgotten, and then introduce the part that needs to be introduced. As every vertex is introduced and forgotten exactly once, we have that $r=2|V(T)|$.

We now build a $p+2$-expression that constructs the semi-complete digraph $T$ along the path decomposition. Intuitively, at each step of the construction, every vertex of the bag $W_i$ is assigned a different label between $1$ and $p+1$, while all forgotten vertices are assigned label $p+2$. If we proceed in this manner, we will end up with the whole digraph $T$ labeled with $p+2$, constructed for the last bag. As we begin with an empty graph, we just need to show what to do in the introduce and forget vertex steps.

\medskip
\noindent {\bf{Introduce vertex step.}}

Assume that $W_i=W_{i-1}\cup \{v\}$, i.e., bag $W_i$ introduces vertex $v$. Note that this means that $|W_{i-1}|\leq p$. As labels from $1$ up to $p+1$ are assigned to vertices of $W_{i-1}$ and there are at most $p$ of them, let $q$ be a label that is not assigned. We perform following operations; their correctness is straightforward.
\begin{itemize}
\item perform $\oplus q(v)$: we introduce the new vertex with label $q$;
\item for each $w\in W_{i-1}$ with label $q'$, perform join $\eta_{q,q'}$ if $(v,w)\in E(D)$ and join $\eta_{q',q}$ if $(w,v)\in E(D)$;
\item perform join $\eta_{q,p+2}$ if the new vertex has an outgoing arc to every forgotten vertex.
\end{itemize}

\medskip
\noindent {\bf{Forget vertex step.}}

Assume that $W_i=W_{i-1}\cup \{v\}$, i.e., bag $W_i$ introduces vertex $v$. Let $q\in \{1,2,\ldots,p+1\}$ be the label of $v$. We just perform relabel operation $\rho_{q\to p+2}$, thus moving $v$ to forgotten vertices.
\end{proof}

\section{Monadic Second-Order logic and tournaments}\label{sec:mso}

In this section we recall the definitions of Monadic Second-Order logic and discuss its links to the current work on semi-complete digraphs.

$\MSO_1$ is Monadic Second-Order Logic with quantification over subsets of vertices but not of arcs.
The syntax of $\MSO_1$ of digraphs includes the logical connectives $\vee$, $\land$, $\neg$,
$\Leftrightarrow $,  $\Rightarrow$, variables for
vertices, and sets of vertices, the quantifiers $\forall$, $\exists$ that can be applied
to these variables, and the following three binary relations:
\begin{enumerate}
\item $u\in U$ where $u$ is a vertex variable
and $U$ is a vertex set variable;
\item $\mathbf{A}(u,v)$, where $u,v$ are vertex variables, and the interpretation
is that there exists an arc $(u,v)$ in the graph;
\item equality of variables representing vertices and sets of vertices.
\end{enumerate}

$\MSO_2$ is the natural extension of $\MSO_1$ with quantification also over arcs and subsets of arcs; here, one can additionally check incidency relation between arcs and vertices, i.e., whether a vertex is the head/tail of a given arc. In undirected setting, it is widely known that model checking of formulas of $\MSO_2$ is fixed-parameter tractable, when the parameters are the length of the formula and the treewidth of the graph. As far as $\MSO_1$ is concerned, model checking of formulas of $\MSO_1$ is fixed-parameter tractable, when the parameters are the length of the formula and the cliquewidth of the graph. These results in fact hold not only for undirected graphs, but for structures with binary relations in general, in particular for digraphs. The following result follows from the work of Courcelle, Makowsky and Rotics~\cite{CourcelleMR00}; we remark that the original paper treats of undirected graphs, but in fact the results hold also in the directed setting (cf.~\cite{dam-GanianH10,GanianHO11,abs-0709-1433}).

\begin{theorem}
There exist an algorithm with running time $f(||\ff||,k)|V(T)|^2$ that given an $\MSO_1$ formula $\ff$ checks whether $\ff$ is satisfied in a semi-complete digraph $T$, given together with a $k$-expression constructing it.
\end{theorem}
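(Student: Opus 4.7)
The plan is to perform a bottom-up dynamic programming on the given $k$-expression, at each subexpression computing a succinct ``MSO type'' of the labeled digraph constructed so far, from which satisfaction of $\ff$ at the root can be read off directly. The main ingredient is a Feferman-Vaught style composition lemma: the type of the output of each of the four cliquewidth operations is computable from the types of its inputs, using only time bounded by a function of $||\ff||$ and $k$.

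First I would set up the formalism. Treat a $k$-digraph as a relational structure over the binary edge relation together with $k$ unary predicates $L_1, \dots, L_k$ encoding the labels. For a quantifier rank $q$ (which I would take to be $\mathrm{qr}(\ff)$ after normalizing $\ff$), call two $k$-digraphs $q$-equivalent if they satisfy the same $\MSO_1$ sentences of quantifier rank $\leq q$ over this signature. A standard Ehrenfeucht-Fra\"iss\'e argument shows there are only finitely many $q$-types, bounded by some $N = N(q,k)$, and each type can be represented as a Hintikka-style normal form of size bounded by a function of $q$ and $k$. The type of a $k$-digraph can then be stored in $O(1)$ space (as a function of $||\ff||,k$) and determines whether $\ff$ holds.

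Next I would verify compositionality of types under each of the four cliquewidth operations. The base case $i(v)$ is trivial. For the disjoint union $G_1 \oplus G_2$, the classical Feferman-Vaught theorem for $\MSO$ over disjoint unions gives a computable function producing the type of the union from the types of the parts. The relabel $\rho_{i\to j}$ is a purely syntactic reinterpretation of the unary predicates ($L_i$ becomes empty, $L_j$ becomes $L_i \cup L_j$), so its effect on the type is computable by one evaluation table. The join $\eta_{i,j}$ is the essential step: here the new edge relation is first-order definable over the old one by $\mathbf{A}'(x,y) \equiv \mathbf{A}(x,y) \vee (L_i(x) \wedge L_j(y))$, so checking any $q$-rank sentence $\psi$ in the output is equivalent to checking a syntactically rewritten sentence $\psi^*$ of the same quantifier rank in the input; hence the type of the join is a computable function of the type of the input.

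The main obstacle I anticipate is the bookkeeping: one must precompute, for each of the four operations and all choices of labels, a lookup table sending input type(s) to output type, of size bounded by $N(q,k)^2$, and argue that this precomputation takes time depending only on $||\ff||$ and $k$. Once these tables are available, a single traversal of the $k$-expression suffices, taking constant time (in $|V(T)|$) per node. Since a $k$-expression constructing $T$ has length $O(|V(T)|^2)$ in the worst case (each of the $|V(T)|$ vertices can participate in $O(|V(T)|)$ joins/relabels), the total runtime is $f(||\ff||,k) \cdot |V(T)|^2$, and after reading the type at the root we answer by checking membership of $\ff$ in that type.
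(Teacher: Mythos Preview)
Your proposal is correct and is essentially the standard proof of the Courcelle--Makowsky--Rotics theorem. Note, however, that the paper does not give its own proof of this statement at all: it simply cites the result from \cite{CourcelleMR00} and remarks that the directed setting goes through identically. So you are supplying a proof sketch where the paper only supplies a reference; the Feferman--Vaught/type-compositionality argument you outline is precisely the machinery underlying the cited result.

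One small point on the runtime: your justification that a $k$-expression has length $O(|V(T)|^2)$ is not a property of arbitrary $k$-expressions (nothing prevents a given expression from containing redundant relabel/join operations of arbitrary length). In the paper's context this is harmless, since the only $k$-expressions actually used are those produced by Lemma~\ref{lem:pw-cw}, which are explicitly constructed in $O(|V(T)|^2)$ time and hence have at most that size; alternatively one can preprocess any given expression to remove redundancy. But as stated, the $|V(T)|^2$ bound in the theorem is really inherited from the assumed expression size rather than derived.
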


Lemma~\ref{lem:pw-cw} assert that cliquewidth of a tournament is bounded its pathwidth plus $2$. Moreover, the proof gives explicit construction of the corresponding expression. Hence, Theorem~\ref{thm:MC} follows as an immediate corollary. We note that as Lemmas~\ref{lem:ctw-pw} and~\ref{lem:pw-cw} show essentially the same for cutwidth, the analogous result holds also for cutwidth.

It is tempting to conjecture that the tractability for $\MSO_1$ and pathwidth or cutwidth could be extended also to $\MSO_2$, as the decompositions resemble treewidth in the undirected setting. We use the following lemma to shows that this is unfortunately not true.

\begin{lemma}\label{thm:mso2-niet}
There exists a constant-size $\MSO_2$ formula $\psi$ over a signature enriched with three unary relations on vertices, such that checking whether $\psi$ is satisfied in a transitive tournament is \NP-hard.
\end{lemma}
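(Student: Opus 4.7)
The plan is to prove NP-hardness by reducing a standard NP-complete problem---bounded tiling---to $\MSO_2$ model-checking on transitive tournaments enriched with three unary vertex relations. The key observation is that a transitive tournament is nothing more than a finite linear order $\prec$ (with $(u,v)\in E(T)$ iff $u\prec v$), and thus $\MSO_2$ over transitive tournaments is essentially $\MSO$ over a linear order with the additional power to quantify over arbitrary binary relations $E'\subseteq\{(u,v):u\prec v\}$. This existential second-order quantification over relations on $\prec$ yields more than enough expressive power to encode \NP{} on inputs of linear description length.

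I would fix a finite tile set $\mathcal{T}$ (over a binary input alphabet, which is sufficient to simulate a universal NTM) so that the decision problem ``given $1^n$ and a string $s_1\cdots s_n\in\{0,1\}^n$, is there a valid tiling of the $n\times n$ grid whose bottom row matches~$s$?'' is $\NP$-complete. Given such an instance, the reduction builds a transitive tournament on $2n$ vertices whose linear order splits into an initial block of $n$ row-vertices $r_1\prec\cdots\prec r_n$ followed by $n$ column-vertices $c_1\prec\cdots\prec c_n$. The predicate $R_1$ marks the row-block, and $R_2$ encodes the bottom-row string $s$ by marking those $c_j$ for which $s_j=1$ (the third predicate $R_3$ is a spare and could be used to enlarge the alphabet if needed). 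Each cell $(i,j)$ of the $n\times n$ grid is then faithfully represented by the arc $(r_i,c_j)$, of which there are exactly $n^2$.

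The formula $\psi$ would existentially quantify one arc-subset $E_t$ per tile $t\in\mathcal{T}$ and would express, using only first-order constructs over the arc relation and the unary predicates: (i) every arc from $R_1$ to $V(T)\setminus R_1$ belongs to exactly one $E_t$; (ii) for every two arcs $(r,c)$ and $(r,c')$ with $c'\in V(T)\setminus R_1$ the immediate $\prec$-successor of $c$ in $V(T)\setminus R_1$, the corresponding tiles are horizontally compatible; (iii) an analogous vertical-compatibility clause along successor pairs within $R_1$; and (iv) for every $c\in V(T)\setminus R_1$, the tile assigned to the arc $(r^*,c)$, where $r^*$ is the $\prec$-minimum element of $R_1$, matches the symbol encoded at $c$ by $R_2$. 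Since $\mathcal{T}$ is fixed and finite, the matching clauses unfold into constant-size disjunctions, and notions such as ``successor in $V(T)\setminus R_1$'' and ``$\prec$-minimum of $R_1$'' are first-order definable from the arc relation in any transitive tournament. Hence $\psi$ has constant size.

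Correctness of the reduction is immediate by construction: the $E_t$'s encode a putative tiling, the compatibility clauses certify its validity, and the initial-row clause fixes the input row. The main obstacle is to squeeze the encoding into the budget of three unary predicates while still targeting an \NP-hard instance; this is handled by spending only one predicate on the row/column partition and leaving one (or two) for the input string, using the fact that NTM simulations via bounded tiling go through already on a binary input alphabet. A minor subtlety is ensuring that all ``structural'' pieces---successors inside $R_1$ and inside $V(T)\setminus R_1$, the minimum element of $R_1$, and the assertion that each grid arc is assigned a unique tile---are expressed by fixed formulas independent of $n$; each of these is a routine first-order statement over $\prec$, so $\psi$ stays constant-size and the whole reduction yields Lemma~\ref{thm:mso2-niet}.
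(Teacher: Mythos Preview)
Your proposal is correct and constitutes a genuinely different reduction from the paper's. The paper reduces from \tcnfsat: it lays out $m$ consecutive blocks of $2n$ vertices along the linear order (one block per clause, two vertices per variable for the two literals), uses the predicates $B,P,C$ to mark block borders, parity within a block, and the literals occurring in the corresponding clause, and then in $\psi$ existentially quantifies a single arc set $M$ together with a vertex set $X$. The arc set $M$ is forced, via a minimality clause, to be exactly the ``shift by one block'' matching, so that $X$ can be propagated identically through all blocks and checked against $C$ in each. Your approach instead exploits the bipartite arcs between a row-half and a column-half of the order to materialise an $n\times n$ grid directly, and then spends one arc-set quantifier per tile of a fixed \NP-hard tiling system. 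What your route buys is conceptual transparency---it makes explicit that $\MSO_2$ on a transitive tournament is essentially existential monadic second-order over binary relations contained in a linear order, which is more than enough for \NP---and it avoids the somewhat delicate minimality trick used to pin down $M$. What the paper's route buys is self-containedness (no appeal to the existence of a fixed universal tile set) and a smaller quantifier prefix (one arc-set and one vertex-set variable). One small point to tighten in your write-up: make explicit how acceptance is detected---either choose a tile system in which any full $n\times n$ tiling already certifies acceptance, or add a constant-size clause constraining the tile at the top row; either choice keeps $\psi$ of fixed size.
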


Before we prove the lemma, let us shortly deliberate on how Theorem~\ref{thm:mso-main} follows from it. Observe that one can replace each vertex with a constant size subdigraph that encodes satisfaction of unary relations. In $\psi$ we then replace the unary relations with constant size tests, thus obtaining a constant size formula $\psi'$, whose model checking on semi-complete digraphs of constant pathwidth and cutwidth is \NP-hard.

\begin{proof}[Proof of Lemma~\ref{thm:mso2-niet}]
We provide a polynomial-time reduction from the \tcnfsat problem. We are given a boolean formula $\ff$ with $n$ variables $x_1,x_2,\ldots,x_n$ and $m$ clauses $C_1,C_2,\ldots,C_m$. We are to construct a transitive tournament $T$ with three unary relations on vertices such that $\ff$ is satisfiable if and only if $\psi$, the constant size $\MSO_2$ formula that will be constructed while describing the reduction, is true in $T$. Intuitively, the unary relations in $T$ will encode the whole formula $\ff$, while $\psi$ is simply an $\MSO_2$-definable check that nondeterministically guesses the evaluation of variables and checks it.

We will use three unary relations on vertices, denoted $P$, $B$ and $C$. The tournament consists of $m(2n+1)+1$ vertices $v_0,v_1,\ldots,v_{m(2n+1)}$, where the edge set is defined as $E(T)=\{(v_i,v_j)\ |\ i>j\}$. We define $B$ (border) to be true in $v_i$ if and only if $2n+1|i$. Thus, the whole tournament is divided into $m$ intervals between consecutive vertices satisfying $B$, each of size $2n$. Each of these intervals will be responsible for checking one of the clauses. The $2n$ vertices in each interval correspond to literals of variables of $\ff$. We define $P$ (parity) to be satisfied in every second vertex of each interval, so that $P$ is satisfied in the first vertex of each interval. The first pair of vertices corresponds to literals $x_1,\neg x_1$, the second to $x_2,\neg x_2$ etc. In the $i$-th interval we make the $C$ (check) relation true in vertices corresponding to literals appearing in the clause $C_i$. This concludes the construction.

We now build the formula $\psi$ that checks existence of an assignment satisfying $\ff$. The correctness of the reduction will follow directly from the construction of $\psi$.

Firstly, we quantify existentially over a subset of edges $M$ and subset of vertices $X$. $X$ will be a set of all vertices corresponding to literals that are true in the assignment. We will construct $\psi$ in such a manner that $M$ will be exactly the set of arcs $\{(v_{i+2n+1},v_i)\ |\ 0\leq i\leq(m-1)(2n+1)\}$. We use $M$ to transfer information on the assignment between consecutive intervals.

Formally, in $\psi$ we express following properties of $X$ and $M$ (their expressibility in $\MSO_2$ by formulas of constant size is straightforward):
\begin{enumerate}
\item[(1)] $(v_{2n+1},v_0)\in M$ (note that $v_0$ and $v_{2n+1}$ can be defined as the first and second vertex satisfying $B$).
\item[(2)] For every two pairs of consecutive vertices $v_i,v_{i+1}$ and $v_j,v_{j+1}$, if $(v_j,v_i)\in M$ then $(v_{j+1},v_{i+1})\in M$.
\item[(3)] For every subset $N\subseteq M$, if $N$ satisfies (1) and (2) then $N=M$.
\item[(4)] Vertices satisfying $B$ are not in $X$.
\item[(5)] For every two consecutive vertices $v_i,v_{i+1}$, such that $v_i,v_{i+1}\notin B$, $v_i\in P$ and $v_{i+1}\notin P$, exactly one of the vertices $v_i,v_{i+1}$ belongs to $X$ (exactly one literal of every variable is true).
\item[(6)] For every $(v_i,v_j)\in M$, $v_i\in X$ if and only if $v_j\in X$.
\item[(7)] For every interval between two subsequent vertices satisfying $B$, at least one of the vertices satisfying $C$ belongs to $X$.
\end{enumerate}
Properties (1), (2) and (3) assert that $M=\{(v_{i+2n+1},v_i)\ |\ 0\leq i\leq(m-1)(2n+1)\}$. Properties (4) and (5) assert that in each interval $X$ corresponds to some valid assignment, while property (6) asserts that the assignments in all the intervals are equal. Property (7) checks whether each of the clauses is satisfied.
\end{proof}

\section{Irrelevant vertex in a triple for \himmersion}\label{sec:irrelevant}
In this section we show how to identify a vertex that is irrelevant for \himmersion problem, in a semi-complete graph containing a sufficiently large triple.
Let $p(x)= 80x^2+80x+5$. We prove the following lemma.

\begin{lemma}[Lemma~\ref{thm:irrelevant}, restated]\label{thm:irrelevant-app}
Let $\Ii=((H;u_1,u_2,\ldots,u_t),(T;v_1,v_2,\ldots,v_t))$ be an instance of the \himmersion{} problem, where $T$ is a semi-complete graph containing a $p(k)$-triple $(A,B,C)$ disjoint with $\{v_1,v_2,\ldots,v_t\}$ and $k=|H|$. Then it is possible in time $O(p(k)^2|V(T)|^2)$ to identify a vertex $x\in B$ such that $\Ii$ is a YES instance of \himmersion{} if and only if $\Ii'=((H;u_1,u_2,\ldots,u_t),(T\setminus\{x\};v_1,v_2,\ldots,v_t))$ is a YES instance. 
\end{lemma}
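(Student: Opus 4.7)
The plan is to formalize the two-phase sketch already outlined. I would fix $p(k)=80k^2+80k+5$ and, assuming $\Ii$ is a YES instance, fix a solution $\eta$ whose family of paths $\Qq=\eta(E(H))$ minimizes the total sum of path lengths; the converse direction (removing $x$ can only destroy solutions) is immediate. The first step is to extract from the minimality of $\Qq$ two quantitative bounds: $|V(\Qq)\cap B|\leq 4k$, and at most $O(k^2)$ matching arcs between $C$ and $A$ have an endpoint on some path of $\Qq$. Both claims follow from a shortcutting argument: a path that revisits $B$, or that makes wasteful use of the $C{\to}A$ matching, admits a strictly shorter rerouting through some unused triangle $a\in A$, $b\in B$, $c\in C$, contradicting minimality. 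Because $p(k)$ comfortably exceeds these bounds, we are guaranteed a large reservoir of \emph{free} vertices of $B$ and \emph{free} matching arcs to be used throughout the rerouting.

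Next, for each $b\in B$ I would define the sets $R_b$ and $G_b$, partitioning the in-neighbors of $b$ outside $A$ according to whether they have many or few out-neighbors in $B$ (with the precise threshold chosen to enable the rerouting below). Crucially, both sets depend only on $T$ and the triple, not on $\Qq$, which is what makes the subsequent selection of $x$ algorithmic. A path entering $b$ from $u\in R_b$ is easy to reroute as $\ldots\to u\to b'\to c\to a\to b$ using a free vertex $b'\in B$, a free matching arc $(c,a)$, and the completeness between parts, so only entries from $G_b$ are dangerous. The definition of the candidate set $X\subseteq B$ then splits into two cases: if many $b$ satisfy $G_b=\emptyset$, take $X$ from these; otherwise construct the auxiliary digraph $S$ on $\{b\in B:G_b\neq\emptyset\}$ with $(b,b')\in E(S)$ iff every $v\in G_b$ either lies in $G_{b'}$ or has an out-neighbor in $G_{b'}$. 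The key technical obstacle, and what I expect to be the most delicate step, is verifying that $S$ is semi-complete; this should follow from the semi-completeness of $T$ together with the sparsity inside $G_b$ (few out-neighbors in $B$ forces the required link in at least one direction). Semi-completeness of $S$ then implies that a positive fraction of its vertices have large out-degree, and for each such $b$ one reroutes any entry from $v\in G_b$ via some $v'\in G_{b'}$, taking care that the $v'$ chosen across different entries reside in pairwise distinct free triangles --- possible because each such $v'\in G_{b'}$ has few out-neighbors in $B$ and therefore appears in only boundedly many $G_{b''}$.

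Phase two would be entirely symmetric: starting from the intermediate family $\Qq'$ produced in phase one, I would repeat the argument with the roles of in/out-neighbors and of $A/C$ reversed, sieving from $X$ a single vertex $x$ for which $\Qq''$ can be arranged so that every path through $x$ also leaves directly into $C$. Careful bookkeeping ensures that the free resources consumed in phase one leave enough slack for phase two; this is precisely why $p(k)$ is calibrated quadratically in $k$. Once $\Qq''$ has both properties, every path through $x$ has the shape $\cdots\to a\to x\to c\to\cdots$ with $a\in A$ and $c\in C$, and can be replaced by $\cdots\to a\to b^*\to c\to\cdots$ for any free $b^*\in B$, using the completeness $A\to B$ and $B\to C$. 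This produces a rooted $\mathbf{H}$-immersion of $\mathbf{H}$ into $(T\setminus\{x\};v_1,\ldots,v_t)$, establishing the nontrivial direction of the equivalence. For the running time, computing $R_b,G_b$ for all $b\in B$ costs $O(p(k)|V(T)|^2)$, and scanning the auxiliary digraph $S$ and performing the two sieves fits within $O(p(k)^2|V(T)|^2)$, matching the bound in the statement.
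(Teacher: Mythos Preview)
Your plan is essentially the paper's proof: the minimality bounds on $\Qq$, the $R_b/G_b$ partition, the auxiliary digraph $S$, the two symmetric sieving phases, and the final reroute through a free $b^*\in B$ all match, as do the orders of magnitude in the bookkeeping.

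Two small points to tighten. First, the semi-completeness of $S$ does \emph{not} rely on the sparsity of the $G_b$'s; it follows from semi-completeness of $T$ alone via a bipartite pigeonhole (the paper isolates this as Lemma~\ref{lem:bitour}): for disjoint $V_1,V_2$ in a semi-complete digraph, either every $v_1\in V_1$ has an out-neighbour in $V_2$ or every $v_2\in V_2$ has an out-neighbour in $V_1$; apply this with $V_1=G_{b}\setminus G_{b'}$ and $V_2=G_{b'}\setminus G_{b}$. The sparsity hypothesis (each $v\in G_{b'}$ has few out-neighbours in $B$) is used later, precisely in the counting that produces enough pairwise distinct $v'$. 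Second, in the final step you should also cover the case $x=\eta(u)$ for some $u\in V(H)$: since the triple avoids the roots, $u$ is unrooted, so you may move $\eta(u)$ to the free vertex $b^*$ simultaneously with rerouting all paths through $x$.
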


Before we proceed to the proof of Lemma~\ref{thm:irrelevant-app}, we need to make several auxiliary observations.

Let $\eta$ be a solution to \himmersion{} instance and let $\Pp$ be the family of paths being images of all the edges in $H$, i.e., $\Pp=\eta(E(H))$.
We call an edge (a vertex) {\em{used by $\Pp$}}, if it belongs to some path from $\Pp$. 
We omit the family $\Pp$ whenever it is clear from the context. An edge (a vertex) which is not used is called a {\em{free}} edge (vertex).

\begin{observation}\label{lem:bound-deg}
If $\Pp$ is a family of paths containing simple paths only, then every vertex in $T$ is adjacent to at most $k$ used ingoing edges and at most $k$ used outgoing edges.
\end{observation}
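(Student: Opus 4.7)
The plan is essentially bookkeeping: bound the number of used arcs incident to any vertex by the number of paths in $\Pp$, and observe that this number is at most $k$. The first step is to recall that $\Pp=\eta(E(H))$ consists of exactly one path per arc of $H$, so $|\Pp|=|E(H)|\leq |V(H)|+|E(H)|=|H|=k$ by the convention fixed in Section~\ref{sec:prelim}.

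The second step is the main combinatorial observation. Since every path $P\in\Pp$ is simple by hypothesis, $P$ visits any fixed vertex $v\in V(T)$ at most once. Hence $P$ contains at most one arc whose head is $v$, and at most one arc whose tail is $v$. Summing over the family, each path contributes at most one used incoming arc at $v$, so the set of used incoming arcs at $v$ has cardinality at most $|\Pp|\leq k$; the identical argument yields the bound for outgoing arcs.

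I do not expect any genuine obstacle; the observation is essentially a one-line pigeonhole packaging of two trivial facts (simplicity of each path, and the count of paths). One minor point worth flagging in passing is that the edge-disjointness guaranteed by the definition of immersion is not actually required for the statement: the bound follows from simplicity of individual paths alone, with the count $|\Pp|\leq k$ doing the remaining work. I would therefore state and prove the observation in exactly this two-line form and move on.
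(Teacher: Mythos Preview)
Your proposal is correct and follows essentially the same idea as the paper. The paper phrases it as a contrapositive (``otherwise some path visits the vertex twice, contradicting simplicity''), while you give the direct counting version (each simple path contributes at most one incoming and one outgoing arc at $v$, and $|\Pp|\le |E(H)|\le |H|=k$); these are the same pigeonhole argument from opposite ends.
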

\begin{proof}
Otherwise, there is a path in the solution that visits that vertex at least two times. Therefore, there is a cycle in this path, which contradicts its simplicity.
\end{proof}

Let $\Qq=\eta(E(H))$ be the family of paths, where $\eta$ is the solution to \himmersion instance $\Ii$ that minimizes the total sum of paths lengths. Firstly, we observe some easy properties of $\Qq$.

\begin{observation}\label{lem:bound-match}
Every path from $\Qq$ uses at most $2$ edges from the matching between $C$ and $A$.
\end{observation}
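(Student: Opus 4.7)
The plan is to argue by contradiction using the minimality of $\eta$: if some path $P \in \Qq$ uses three or more matching edges from $C$ to $A$, I will construct an edge-disjoint family of paths of strictly smaller total length, contradicting the choice of $\eta$.

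First, I would identify three matching edges used by $P$ in traversal order, say $(c_{i_1}, a_{i_1})$, $(c_{i_2}, a_{i_2})$, $(c_{i_3}, a_{i_3})$. The subpath of $P$ from $a_{i_1}$ to $c_{i_3}$ must contain the intermediate matching edge $(c_{i_2}, a_{i_2})$, together with at least one edge on each side of it to reach $a_{i_2}$ and to leave toward $c_{i_3}$, so its length is at least $3$. I would then replace this subpath by a two-edge shortcut $a_{i_1} \to b \to c_{i_3}$ through some vertex $b \in B$; both of these arcs exist by the definition of a $p(k)$-triple.

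The central step is choosing $b$ so that the shortcut arcs $(a_{i_1}, b)$ and $(b, c_{i_3})$ are not used by any path of $\Qq$. Observation~\ref{lem:bound-deg} bounds the number of used outgoing arcs of $a_{i_1}$ and used incoming arcs of $c_{i_3}$ by $k$ each, so at most $2k$ vertices of $B$ are forbidden. Since $|B| = p(k) \geq 2k + 1$, a suitable $b$ exists, and the replacement preserves edge-disjointness with the other members of $\Qq$.

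The main subtlety, and the step I expect to require the most care, is ensuring that the rewiring still yields a valid rooted immersion. The substitution produces a walk from the start to the end of $P$ whose length is at least one less than that of $P$, but it might repeat a vertex if $b$ already lies on $P$ outside the replaced segment. To recover a genuine simple path I would extract one from this walk by iteratively deleting closed sub-walks at each repeated vertex; this operation only removes edges and therefore preserves both edge-disjointness from the remaining paths of $\Qq$ and the condition of being strictly shorter than $P$. Replacing $P$ in $\Qq$ by this simple path produces a rooted immersion of strictly smaller total length, contradicting the minimality of $\eta$ and completing the proof.
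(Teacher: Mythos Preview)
Your proposal is correct and follows essentially the same approach as the paper: assume three matching edges $(c_{i_1},a_{i_1}),(c_{i_2},a_{i_2}),(c_{i_3},a_{i_3})$ are used in order, apply Observation~\ref{lem:bound-deg} to find some $b\in B$ with both $(a_{i_1},b)$ and $(b,c_{i_3})$ free (since $|B|>2k$), and shortcut to contradict minimality. Your explicit handling of the simplicity issue (trimming the resulting walk to a simple path) is a point the paper leaves implicit here, though it does invoke the same loop-removal step in later rerouting arguments.
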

\begin{proof}
Assume otherwise, that there is a path $P\in \Qq$ that uses three edges of the matching: $(c_1,a_1)$, $(c_2,a_2)$, $(c_3,a_3)$, appearing in this order on the path. By Observation~\ref{lem:bound-deg}, for at most $k$ vertices $v\in B$ the edge $(a_1,v)$ is used. For the same reason, for at most $k$ vertices $v\in B$ the edge $(v,c_3)$ is used. As $|B|>2k$, there exists $v\in B$ such that $(a_1,v)$ and $(v,c_3)$ are not used. Now replace the part of $P$ appearing between $a_1$ and $c_3$ with $a_1\to v\to c_3$. We obtain an immersion with smaller sum of lengths of the paths, a contradiction.
\end{proof}

\begin{observation}\label{lem:boundA}
Every path from $\Qq$ uses at most $2k+4$ vertices from $A$.
\end{observation}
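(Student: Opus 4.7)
My plan is to mimic the shortcutting template of Observation~\ref{lem:bound-match}: assume some $P \in \Qq$ visits too many vertices of $A$ and use the spare capacity of the triple to replace a long subpath by a short detour, contradicting the minimality of $\Qq$.

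Suppose toward a contradiction that some $P \in \Qq$ visits $m \geq 2k + 5$ vertices of $A$; list them as $a_1, a_2, \ldots, a_m$ in the order of their appearance on $P$, and for each $a_j$ write $c(a_j) \in C$ for its partner in the matching of the triple. The first step is a counting argument. Since $|E(H)| \leq |H| = k$, the family $\Qq$ consists of at most $k$ paths, and Observation~\ref{lem:bound-match} bounds by two the number of matching edges used by any single path, so at most $2k$ matching edges are used in total. Each used matching edge has exactly one endpoint in $A$, hence among the $m$ matched pairs $(c(a_j), a_j)$ indexed by our $A$-visits at most $2k$ are used and at least five are free. Choosing $i < j$ as the smallest and the largest among these free indices forces $j - i \geq 4$, so the subpath of $P$ from $a_i$ to $a_j$ contains at least four edges.

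The second step is to find $b \in B$ with both arcs $(a_i, b)$ and $(b, c(a_j))$ free. By Observation~\ref{lem:bound-deg}, at most $k$ vertices of $B$ are forbidden by the first condition (they are linked to $a_i$ by a used out-edge) and at most $k$ by the second (linked to $c(a_j)$ by a used in-edge); since $|B| = p(k) > 2k$, such a $b$ exists, and both arcs are present in $T$ by definition of the triple. I then replace the subpath of $P$ between $a_i$ and $a_j$ with the three-edge detour $a_i \to b \to c(a_j) \to a_j$. The three inserted arcs are free, so the modified family stays edge-disjoint and remains an $\mathbf{H}$-immersion with the same roots, whereas at least $j - i \geq 4$ edges are removed and only three are inserted; the total length of $\Qq$ strictly decreases, contradicting its minimality.

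The main point requiring care is that a free matching edge and a suitable $b$ can be arranged simultaneously, and this is precisely where the identification $k = |H|$ enters: it makes $|E(H)| \leq k$ and converts the per-path bounds of Observations~\ref{lem:bound-deg} and~\ref{lem:bound-match} into global counts that are linear in $k$, which the generous size $|B| = p(k) \geq 2k + 1$ comfortably absorbs. A minor wrinkle is that $b$ or $c(a_j)$ might already lie elsewhere on $P$, in which case the modified walk is not simple; this is harmless, because contracting any cycle that appears only further reduces the length and sharpens the contradiction with minimality.
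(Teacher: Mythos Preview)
Your proof is correct and follows essentially the same approach as the paper's: both find five $A$-visits of $P$ whose matching edges are unused (via the $\leq 2k$ global bound from Observation~\ref{lem:bound-match}), pick the first and last of them, locate a free $b\in B$ using Observation~\ref{lem:bound-deg} and $|B|>2k$, and shortcut through $a_i\to b\to c(a_j)\to a_j$. Your explicit edge count ($\geq 4$ removed versus $3$ inserted) and the remark on contracting possible cycles are nice touches that the paper leaves implicit, but the argument is the same.
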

\begin{proof}
Assume otherwise, that there is a path $P\in \Qq$  passing  through at least $2k+5$ vertices from $A$. By Observation~\ref{lem:bound-match}, at most $2k$ of them are endpoints of used edges of the matching between $C$ and $A$. Therefore, there are at least $5$ visited vertices, which are endpoints of an unused edge of the matching. Let us denote any $5$ of them by $a_1,a_2,a_3,a_4,a_5$ and assume that they appear on $P$ in this order. Let $(c_5,a_5)$ be the edge of the matching between $C$ and $A$ that is adjacent to $a_5$. By the same reasoning as in the proof of Observation~\ref{lem:bound-match}, there exists a vertex $v\in B$, such that $(a_1,v)$ and $(v,c_5)$ are unused edges. Substitute the part of the path $P$ between $a_1$ and $a_5$ by the path $a_1\to v\to c_5\to a_5$, which consists only of unused edges. We obtain an expansion with smaller sum of lengths of the paths, a contradiction. 
\end{proof}

A symmetrical reasoning yields the following observation.

\begin{observation}\label{lem:boundC}
Every path from $\Qq$ uses at most $2k+4$ vertices from $C$.
\end{observation}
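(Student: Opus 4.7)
The plan is to mirror the proof of Observation~\ref{lem:boundA} almost verbatim, swapping the roles of $A$ and $C$ and traversing the triple in the reverse direction $C \to A \to B \to C$ instead of $A \to B \to C \to A$. The rerouting shortcut will take the form $c_1 \to a_1 \to v \to c_5$, using an unused $C$--$A$ matching arc followed by two arcs to and from a free vertex of $B$, and the desired contradiction will come from the minimality of the total path length in $\Qq$.

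Concretely, I would assume for contradiction that some path $P \in \Qq$ visits at least $2k+5$ vertices of $C$. Since $|E(H)| \leq k$ and each path of $\Qq$ uses at most two matching arcs by Observation~\ref{lem:bound-match}, at most $2k$ matching arcs are used in total, so at most $2k$ of these $C$-vertices on $P$ have their matching partner in $A$ reached through a used arc. Hence at least five visited vertices have their matching arc free; label any five such vertices, in the order they appear on $P$, as $c_1, c_2, c_3, c_4, c_5$, and let $(c_1, a_1)$ denote the corresponding free matching arc.

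Next I would locate the bridge $v \in B$. By Observation~\ref{lem:bound-deg}, at most $k$ arcs leaving $a_1$ and at most $k$ arcs entering $c_5$ are used, and since $|B| = p(k) > 2k$ there exists $v \in B$ such that both $(a_1, v)$ and $(v, c_5)$ are free. Replacing the portion of $P$ between $c_1$ and $c_5$ by the path $c_1 \to a_1 \to v \to c_5$ yields a new family of paths that still witnesses a rooted immersion of $H$ (edge-disjointness is preserved because all three substituted arcs were free across the whole of $\Qq$, and the endpoints of $P$ are unchanged), while the total length strictly decreases. This contradicts the minimality of $\Qq$.

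There is no real obstacle here: every ingredient has already been set up in Observations~\ref{lem:bound-deg} and~\ref{lem:bound-match}, and the triple's structural property (arcs from $A$ to $B$ and from $B$ to $C$, and a matching $C \to A$) makes the reverse shortcut as cheap as the forward one. The only minor point to keep in mind is that the freeness of the three substituted arcs must be checked against all of $\Qq$, not only against $P$; this is exactly what the degree bound and the matching bound furnish. The slack $p(k) > 2k+5$ is ample, so the counting argument goes through with room to spare.
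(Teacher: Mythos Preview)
Your proposal is correct and is precisely the symmetrical argument the paper has in mind: it mirrors the proof of Observation~\ref{lem:boundA} by using the matching arc at the first of the five distinguished $C$-vertices (instead of the last of the $A$-vertices) and traversing the triple as $c_1\to a_1\to v\to c_5$. The counting and freeness arguments are identical to those already used, so this is exactly the ``symmetrical reasoning'' the paper invokes.
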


Finally, we prove a similar property for $B$.

\begin{observation}\label{lem:boundB}
Every path from $\Qq$ uses at most $4$ vertices from $B$.
\end{observation}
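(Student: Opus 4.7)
The plan is to mimic the templates of Observations~\ref{lem:boundA} and~\ref{lem:boundC}. Suppose for contradiction that some path $P\in\Qq$ uses at least five vertices of $B$; let $b_1,b_2,b_3,b_4,b_5$ be five of them in the order in which they appear on $P$. I would construct a strictly shorter immersion by replacing the subpath of $P$ from $b_1$ to $b_5$ with a three-edge shortcut of the form $b_1\to c\to a\to b_5$, where $(c,a)$ is an edge of the matching from $C$ to $A$ such that each of $(b_1,c)$, $(c,a)$, and $(a,b_5)$ is free. The arcs $(b_1,c)$ and $(a,b_5)$ exist in $T$ because $B$ is completely directed to $C$ and $A$ is completely directed to $B$ in the triple. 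Since the removed subpath of $P$ must pass through $b_2,b_3,b_4$ and hence has length at least four, this substitution strictly decreases the total length of the paths in $\Qq$, contradicting the minimality of $\eta$.

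The heart of the argument is to show that a suitable matching edge $(c,a)$ exists, and this reduces to counting. Any matching edge violating one of the three freeness conditions falls into one of the following three classes. First, by Observation~\ref{lem:bound-deg}, at most $k$ outgoing arcs of $b_1$ are used, so at most $k$ values of $c$ make $(b_1,c)$ used; second, symmetrically, at most $k$ values of $a$ make $(a,b_5)$ used; third, by Observation~\ref{lem:bound-match} combined with $|E(H)|\le k$ (which follows from $k=|H|=|V(H)|+|E(H)|$), at most $2|E(H)|\le 2k$ matching edges are themselves used. The matching between $C$ and $A$ has $p(k)=80k^2+80k+5$ edges, and at most $4k$ are excluded, so a desired $(c,a)$ certainly exists.

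The only minor subtlety is that after the substitution the modified $P$ may fail to be a simple walk if the old $P$ happened to visit $c$ or $a$ outside the replaced segment; but this has no bearing on the argument, since the three newly used arcs are free and the arcs deleted from $P$ are not shared with any other path of $\Qq$, so edge-disjointness across the whole family is preserved, and any self-intersection of the new $P$ could only be shortcut away, further reducing its length. I do not foresee any genuine obstacle beyond the counting outlined above; the proof is slightly tighter than those of Observations~\ref{lem:boundA} and~\ref{lem:boundC} because here the shortcut has length $3$ rather than $3$ with additional structural constraints, and the threshold needed on the visited vertices drops to $5$.
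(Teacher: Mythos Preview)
Your proof is correct and essentially identical to the paper's own argument: both pick five $B$-vertices $b_1,\ldots,b_5$ on $P$, replace the $b_1$--$b_5$ segment by $b_1\to c\to a\to b_5$ for a suitable matching edge $(c,a)$, and justify existence of $(c,a)$ by the same count (at most $k$ used out-arcs at $b_1$, at most $k$ used in-arcs at $b_5$, at most $2k$ used matching edges via Observation~\ref{lem:bound-match}, versus $p(k)>4k$). Your extra remarks on $|E(H)|\le k$ and on possible self-intersections after the splice are fine and only make the argument a touch more explicit than the paper's version.
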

\begin{proof}
Assume otherwise, that there is a path $P\in \Qq$ such that it passes through at least $5$ vertices from $B$. Let us denote any $5$ of them by $b_1,b_2,b_3,b_4,b_5$ and assume that they appear on $P$ in this order. By Observation~\ref{lem:bound-deg} there are at most $k$ ingoing edges adjacent to $b_1$ used, and there are at most $k$ outgoing edges adjacent to $b_5$ used. Moreover, by Observation~\ref{lem:bound-match} there are at most $2k$ edges of the matching between $C$ and $A$ used. As $p(k)>4k$, we conclude that there is an unused edge of the matching $(c,a)$, such that edges $(b_1,c)$ and $(a,b_5)$ are also unused. Substitute the part of the path $P$ between $b_1$ and $b_5$ with the path $b_1\to c\to a\to b_5$. We obtain an immersion with smaller sum of lengths of the paths, a contradiction.
\end{proof}

From Observations \ref{lem:boundA}-\ref{lem:boundB} we obtain the following corollary.

\begin{corollary}\label{lem:free}
In the set $B$ there are at least $5k$ vertices free from $\Qq$. Moreover, within the matching between $C$ and $A$ there are at least $4k$ edges having both endpoints free from $\Qq$.
\end{corollary}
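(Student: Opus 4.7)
The plan is to derive Corollary~\ref{lem:free} directly from Observations~\ref{lem:boundA}--\ref{lem:boundB} by a double-counting argument. The key parameter to keep in mind is that $|E(H)|\le k$ (since $|H|=|V(H)|+|E(H)|=k$), so the family $\Qq=\eta(E(H))$ contains at most $k$ paths; combined with the fact that the triple has size $p(k)=80k^2+80k+5$, all the desired bounds will be comfortable.

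First I will handle the part $B$. Each path in $\Qq$ uses at most $4$ vertices from $B$ by Observation~\ref{lem:boundB}, so the total number of vertices of $B$ that are used by \emph{any} path of $\Qq$ is at most $4|E(H)|\le 4k$. Since $|B|=p(k)=80k^2+80k+5$, the number of free vertices in $B$ is at least $p(k)-4k$, which is clearly at least $5k$ for every $k\ge 1$.

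Next I will handle the matching between $C$ and $A$. By Observations~\ref{lem:boundA} and~\ref{lem:boundC}, each path of $\Qq$ uses at most $2k+4$ vertices from $A$ and at most $2k+4$ vertices from $C$. Summing over all at most $k$ paths, the numbers of used vertices in $A$ and in $C$ are each bounded by $k(2k+4)=2k^2+4k$. Since the edges under consideration form a matching, every vertex of $A$ (respectively $C$) is incident with exactly one such edge, so the number of matching edges with at least one used endpoint is bounded by the sum of these two counts, that is, by $4k^2+8k$. Subtracting from the $p(k)=80k^2+80k+5$ matching edges in the triple, the number of matching edges with both endpoints free is at least $p(k)-4k^2-8k$, which is at least $4k$ with a large margin.

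There is no real obstacle here — the whole point of choosing $p(k)=80k^2+80k+5$ in the earlier definitions was precisely to absorb the linear and quadratic wastage coming from Observations~\ref{lem:boundA}--\ref{lem:boundB}; the argument is simply book-keeping. The only thing one needs to be mildly careful about is to use that the paths of $\Qq$ are the images of the \emph{edges} of $H$ and that $|E(H)|\le k$, rather than bounding the number of paths by something weaker.
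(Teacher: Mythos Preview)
Your argument is correct and is exactly the (implicit) derivation the paper has in mind: the paper simply states that the corollary follows from Observations~\ref{lem:boundA}--\ref{lem:boundB}, and the informal sketch in the main body spells out the same counts you give (at most $4k$ used vertices in $B$ and $O(k^2)$ matching arcs with a used endpoint). Your care in using $|E(H)|\le |H|=k$ to bound the number of paths is precisely the right ingredient.
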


We note that the Corollary~\ref{lem:free} holds also for much larger values than $5k$, $4k$, respectively; we choose to state it in this way to show how many free vertices from $B$ and free edges of the matching we actually use in the proof of Lemma~\ref{thm:irrelevant-app}. We need one more auxiliary lemma that will prove itself useful.

\begin{lemma}\label{lem:bitour}
Let $T=(V_1\cup V_2,E)$ be a semi-complete bipartite graph, i.e., a directed graph, where edges are only between $V_1$ and $V_2$, however, for every $v_1\in V_1$ and $v_2\in V_2$ at least one of the edges $(v_1,v_2)$ and $(v_2,v_1)$ is present. Then at least one of the assertions is satisfied:
\begin{itemize}
\item[(a)] for every $v_1\in V_1$ there exists $v_2\in V_2$ such that $(v_1,v_2)\in E$;
\item[(b)] for every $v_2\in V_2$ there exists $v_1\in V_1$ such that $(v_2,v_1)\in E$.
\end{itemize}
\end{lemma}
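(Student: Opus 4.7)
The plan is to argue by contradiction, exploiting the fact that semi-completeness forces at least one edge on every pair. The statement is disjunctive, so suppose both conclusions fail simultaneously: there is some $v_1^\star \in V_1$ having no out-neighbour in $V_2$, and there is some $v_2^\star \in V_2$ having no out-neighbour in $V_1$. I would then focus all attention on the single pair $(v_1^\star, v_2^\star)$.

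The first failure, applied to $v_2 = v_2^\star$, yields $(v_1^\star, v_2^\star) \notin E$. The second failure, applied to $v_1 = v_1^\star$, yields $(v_2^\star, v_1^\star) \notin E$. But the hypothesis that $T$ is semi-complete bipartite is precisely that for every such pair at least one of the two arcs lies in $E$. This immediately contradicts the two non-membership statements derived above, so the assumption that both (a) and (b) fail is untenable; thus at least one must hold.

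The only subtle point, and so the closest thing to an obstacle, is to correctly read the definition: ``at least one'' means we are allowed to have both arcs, so one cannot derive a contradiction from a single witness on one side alone; one genuinely needs a witness on each side, and the contradiction arises only by pairing the two witnesses. Once this is pinned down, the proof is a two-line tautology with no case analysis and no use of the actual structure of $T$ beyond the semi-complete bipartite condition.
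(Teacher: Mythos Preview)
Your proof is correct and essentially the same as the paper's: the paper assumes (a) fails, takes the witness $v_0\in V_1$ with no out-neighbour in $V_2$, and observes by semi-completeness that every $v_2\in V_2$ has $(v_2,v_0)\in E$, so (b) holds. Your contradiction argument just rephrases this direct implication by also naming a witness for the failure of (b); the underlying idea is identical.
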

\begin{proof}
Assume that (a) does not hold. That means that there is some $v_0\in V_1$ such that for all $v_2\in V_2$ we have $(v_2,v_0)\in E$. Then we can always pick $v_0$ as $v_1$ in the statement of (b), so (b) holds.
\end{proof}

Observe that by reversing all the edges we can obtain a symmetrical lemma, where we assert existence of in-neighbours instead of out-neighbours.

We are now ready to prove Lemma~\ref{thm:irrelevant-app}. Whenever we will refer to the {\em{matching}}, we mean the matching between $C$ and $A$.

\begin{proof}[Proof of Lemma~\ref{thm:irrelevant-app}]
We give an algorithm that outputs a vertex $x\in B$, such that if there exists a solution to the the given instance, then there exists also a solution in which no path passes through $x$. The algorithm will run in time $O(p(k)^2|V(T)|^2)$.

We proceed in three steps. 
The first step is to identify in $O(p(k)^2|V(T)|^2)$ time a set $X\subseteq B$, $|X|\geq 16k^2+16k+1$, such that if $\Ii$ is a YES instance, then 
for every $x\in X$ there is a solution $\eta$ with $\Pp=\eta(E(H))$ having following properties:
\begin{itemize}
\item at least $3k$ vertices of $B$ are free from $\Pp$;
\item at least $2k$ edges of the matching have both endpoints free from $\Pp$;
\item if $x$ is accessed by some path $P\in \Pp$ from a vertex $v$, then $v\in A$.
\end{itemize}

The second step of the proof is to show that one can identify in $O(p(k)^2|V(T)|^2)$ time a
vertex $x\in X$ such that if $\Ii$ is a yes instance, then there is a solution with $\Pp=\eta(E(H))$ having following properties:
 \begin{itemize}
\item at least $k$ vertices of $B$ are free from $\Pp$;
\item if $x$ is accessed by some path $P\in \Pp$ from a vertex $v$, then $v\in A$;
\item if $x$ is left by some path $P\in \Pp$ to a vertex $v$, then $v\in C$.
 \end{itemize}

The final, concluding step the proof of the lemma, is to show that there is a solution $\Pp=\eta(E(H))$ such that
\begin{itemize}
\item No path from $\Pp$ is using $x$.
\end{itemize}
 
\medskip
We proceed with the first step. 
Let $\Qq=\eta(E(H))$, where $\eta$ is the solution for the \himmersion{} instance with the minimum sum of lengths of the paths. 

For every vertex $b\in B$, we identify two sets: $G_b$, $R_b$. The set $R_b$ consists of those in-neighbours of $b$ outside $A$, which are in-neighbours of at least than $6k$ vertices from $B$, while $G_b$ consists of the rest of in-neighbours of $b$ outside $A$. Formally,
\begin{eqnarray*}
R_b=\{v\in V(T)\setminus A: (v,b)\in E \wedge |N^{+}(v)\cap B|\geq 6k\},\\
G_b=\{v\in V(T)\setminus A: (v,b)\in E \wedge |N^{+}(v)\cap B|< 6k\}. 
\end{eqnarray*}
Note that $R_b,G_b$ can be distinguished for all $b$ in $O(p(k)^2|V(T)|)$ time. Let $B_\emptyset$ be the set of those vertices $b\in B$, for which $G_b=\emptyset$. We claim that if $|B_\emptyset|\geq 16k^2+16k+1$, then we can set $X=B_\emptyset$.

Take any $b\in B_\emptyset$. We argue that we can reroute the paths of $\Qq$ that access $b$ from outside $A$ in such a manner, that during rerouting each of them we use at most one additional free vertex from $B$ and at most one additional edge from the matching. We reroute the paths one by one. Take a path $P$ that accesses $b$ from outside $A$, and let $v$ be the previous vertex on the path. As $G_b=\emptyset$, $v\in R_b$. Therefore, $v$ has at least $6k$ out-neighbours in $B$. Out of them, at most $4k$ are not free with respect to $\Qq$, due to Observation~\ref{lem:boundB}, while at most $k-1$ were used by previous reroutings. Therefore, there is a vertex $b'\in B\cap N^{+}(v)$, such that $b'$ is still free. Thus we can substitute usage of the edge $(v,b)$ on $P$ with the path $v\to b'\to c\to a\to b$, where $(c,a)$ is an arbitrary edge of the matching that still has both endpoints free, which exists due to using at most $k-1$ of them so far.

We are now left with the case, when $|B_\emptyset|<16k^2+16k+1$.  Let $B_g=B\setminus B_\emptyset$. Then $|B_g|\geq 4(16k^2+16k+1)$. We construct a semi-complete digraph $S=(B_g,L)$ as follows. For every $b_1,b_2\in B_g$, $b_1\neq b_2$, we put arc an $(b_1, b_2)$ if for every $v\in  G_{b_1}$, either $v\in  G_{b_1}\cap G_{b_2}$, or $v$ has an out-neighbour in $G_{b_2}$. Similarly, we put arc $(b_2, b_1)$ if for every $v\in  G_{b_2}$, either $v\in  G_{b_1}\cap G_{b_2}$, or $v$ has an out-neighbour in $G_{b_1}$. By Lemma~\ref{lem:bitour}, for every pair of distinct $b_1,b_2\in B_g$, there is at least one arc with endpoints $b_1$ and $b_2$, hence $S$ is semi-complete. The definition of $S$ gives raise to a straightforward algorithm constructing it in $O(p(k)^2|V(T)|^2)$ time.

\begin{figure}[htp]
\begin{center}
\includegraphics[width=10cm]{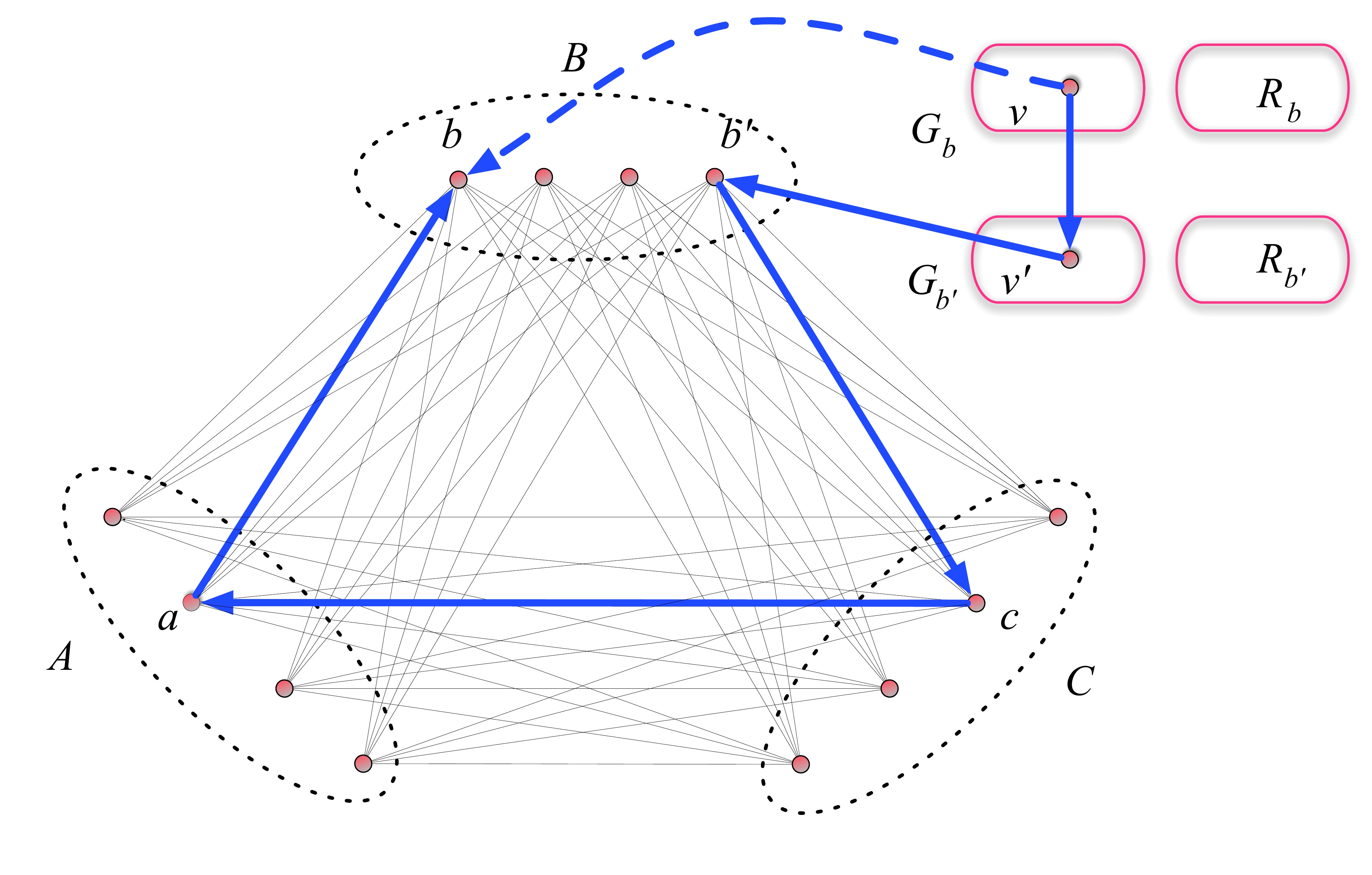}
\caption{Rerouting strategy for a path accessing vertex $b$ from $G_b$.}\label{fig:irrelevant}
\end{center}
\end{figure}

Let $X$ be the set of vertices of $B_g$ that have out-degree at least $6k^2+6k$ in $S$; note that $X$ can be constructed in $O(p(k)^2)$ time. Observe that $|X|\geq 16k^2+16k+1$, as otherwise the sum of the out-degrees in $S$ would be at most $(16k^2+16k)(|B_g|-1) + (|B_g|-16k^2-16k)(6k^2+6k-1)$, which is smaller than $\binom{|B_g|}{2}$ for $|B_g|\geq 4(16k^2+16k+1)$. We now claim that for every $b\in X$, every path of $\Qq$ using vertex $b$ can be rerouted at the cost of using at most two free vertices of $B$ and at most two edges from the matching that have still both endpoints free. We perform reroutings one by one. Assume that there is a path $P\in \Qq$ accessing $b$ from outside $A$. Let $v$ be the predecessor of $b$ on $P$. If $v\in R_b$, then we use the same rerouting strategy as in the case of large $B_\emptyset$. Assume then that $v\in G_b$. As $b\in X$, its out-degree in $S$ is at least  $6k^2+6k$. This means that there are at  least $6k^2+6k$ vertices $b'$ in $B_g$ and corresponding vertices $v_{b'}\in N^{-}(b')$, such that for every $b'$ either $v_{b'}=v$ or $(v,v_{b'})\in E$. Out of these $6k^2+6k$ vertices $b'$, at most $4k$ are not free due to Observation~\ref{lem:boundB}, at most $2k-2$ were used in previous reroutings, which leaves us with at least $6k^2$ vertices $b'$ still being free. If for any such $b'$ we have $v_{b'}=v$, we follow the same rerouting strategy as in the case of large $B_\emptyset$. Assume then that these $6k^2$ vertices $v_{b'}$ are all distinct from $v$; note that, however, they are not necessarily distinct from each other. As each $v_{b'}$ belongs to $G_{b'}$, $v_{b'}$ can have at most $6k-1$ out-neighbours in $B$. Hence, each vertex of $V(T)$ can occur among these $6k^2$ vertices $v_{b'}$ at most $6k-1$ times, so we can distinguish at least $k+1$ pairwise distinct vertices $v_{b'}$. We have that edges $(v,v_{b'})$ and $(v_{b'},b')$ exist, while $b'$ is still free. By Observation~\ref{lem:bound-deg}, at most $k$ edges $(v,v_{b'})$ are used by some paths, which leaves us at least one $v_{b'}$, for which edge $(v,v_{b'})$ is free. We can now substitute the edge $(v,b)$ in $P$ by the path $v\to v_{b'}\to b'\to c\to a\to v$, where $(c,a)$ is an arbitrarily chosen edge from the matching that still has both endpoints free that was not yet used, which exists due to using at most $2k-2$ of them so far. 
See Fig.~\ref{fig:irrelevant}.
After rerouting, remove all loops that could be created on the paths, so that Observation~\ref{lem:bound-deg} still holds. Observe that in this manner we use additional vertex $b'$ that was free, additional one edge $(c,a)$ from the matching, whereas passing the path through $v_{b'}$ can spoil at most one edge of the matching that still has both endpoints free, and at most one free vertex from $B$. This concludes the construction of the set $X$.

\medskip We proceed with the second step of the proof. 
We mimic the rerouting arguments from the first step to obtain a vertex $x\in X$ with the following property: the rerouted family of paths $\Pp$ obtained from previous arguments that can access $x$ only from $A$, can be further rerouted so that every path can only leave $x$ by accessing some vertex from $C$.

For every $b\in X$ consider sets $R_b'$ and $G_b'$ defined similarly as before (again, they can be computed in $O(p(k)^2|V(T)|)$ time):
\begin{eqnarray*}
R_b'=\{v\in V(T)\setminus C: (b,v)\in E \wedge |N^{-}(v)\cap B|\geq 8k\},  \\
G_b'=\{v\in V(T)\setminus C: (b,v)\in E \wedge |N^{-}(v)\cap B|< 8k\} 
\end{eqnarray*}
Assume first that there is some $y\in X$, such that $G_y'=\emptyset$. We argue that we can in such a case set $x=y$. Firstly, reroute the solution that minimizes the total sum of lengths of the paths obtaining a solution with the family of paths $\Pp$ that uses at most $2k$ additional free vertices from $B$ and at most $2k$ additional edges free edges from the matching, but does not access $y$ from outside $A$. One by one we reroute paths using $y$. Each rerouting will cost at most one free vertex from $B$ and at most one edge from the matching that has still both endpoints free. Let $P$ be a path from the solution that passes through $y$ and let $v\in R_y'$ be the next vertex on $P$. The vertex $v$ has at least $8k$ in-neighbours in $B$; at most $4k$ of them could be used by the original solution, at most $2k$ of them could be used in rerouting during the first phase and at most $k-1$ of them could be used during previous reroutings in this phase. Therefore, we are left with at least one vertex $y'\in B$ that is still free, such that $(y',v)\in E(T)$. We can now substitute the edge $(y,v)$ in $P$ by the path $y\to c\to a\to y'\to v$, where $(c,a)$ is an arbitrarily chosen edge from the matching that was not yet used, which exists due to using at most $3k-1$ of them so far.

We are left with the case, when $G_y'$ is nonempty for all $y\in X$. Construct a digraph $S'=(X,L')$ in a symmetrical manner to the previous construction: we put an edge $(b_1,b_2)\in L'$ iff for every $v_{b_2}\in G_{b_2}'$ there exists $v_{b_1}\in G_{b_1}'$ such that $v_{b_1}=v_{b_2}$ or $(v_{b_1},v_{b_2})\in E$. The remark after Lemma~\ref{lem:bitour} ensures that $S'$ is semi-complete. Again, $S'$ can be computed in $O(p(k)^2|V(T)|^2)$ time.

As $|X|\geq 16k^2+16k+1$, there exists $x\in X$, which has in-degree at least $8k^2+8k$ in $S'$; note that $x$ can be found in $O(p(k)^2)$ time. As before, we argue that after the first rerouting phase for $x$, we can additionally reroute the paths so that every path can leave $x$ only into $C$. We reroute the paths one by one; each rerouting uses at most two vertices free vertices from $B$ and at most two edges from the matching. As the in-degree of $x$ in $S'$ is at least $8k^2+8k$, we have at least $8k^2+8k$ vertices $x'\in X$ and corresponding $v_{x'}\in G_{x'}'$, such that $v_{x'}=v$ or $(v_{x'},v)\in E$. At most $4k$ of them were used in $\Qq$, at most $2k$ were used in previous phase of rerouting, and at most $2k-2$ of them were used in this phase of rerouting. This leaves at least $8k^2$ vertices $x'$ which are still free. If for any of them we have $v_{x'}=v$, we can make the rerouting similarly as in the previous case: we substitute the edge $(x,v)$ with the path $x\to c\to a\to x'\to v$, where $(c,a)$ is an arbitrary edge of the matching that still has both endpoints free, which exists due to using at most $4k-2$ of them so far. Assume then, that all vertices $v_{x'}$ are distinct from $v$; note that, however, they are not necessarily distinct from each other. As every for every $x'$ we have $v_{x'}\in G_{x'}'$, by the definition of $G_{x'}'$ the vertices $v_{x'}$ can have at most $8k-1$ in-neighbours in $X$. This means that every vertex of $V(T)$ can occur among vertices $v_{x'}$ at most $8k-1$ times, which proves that there are at least $k+1$ pairwise distinct vertices among them. By Observation~\ref{lem:bound-deg}, for at most $k$ of them the edge connecting to $v$ can be already used, which leaves us with a single vertex $x'$, such that edges $(x',v_{x'})$ and $(v_{x'},v)$ exist and are not yet used, whereas $x'$ is still free. Now we can perform the rerouting as follows: we substitute the edge $(x,v)$ in $P$ by the path $x\to c\to a\to x'\to v_{x'}\to v$, where $(c,a)$ is an arbitrary edge from the matching that was still has both endpoints free, which exists due to at most $4k-2$ of them so far. Similarly as before, we use one additional free vertex $x'$ from $B$, one additional edge $(c,a)$ from the matching, whereas usage of $v_{x'}$ can spoil at most one free vertex from $B$ and at most one edge from the matching. After this, we delete all the possible loops created on the paths in order to make Observation~\ref{lem:bound-deg} still hold. This concludes the construction of the vertex $x$.

\medskip
To finish the proof, we need to show that after performing two phases of rerouting obtaining a solution $\eta$, whose paths can access and leave $x$ only from $A$ and into $C$, we can reroute every path so that it does not pass through $x$ at all. Note that so far we have used at most $4k$ vertices from $B$, so we still have at least $k$ vertices unused. 

If $x\in \eta(u)$ for some $u\in V(H)$, then we can simply move the image $u$: we consider $\eta'$ that differs from $\eta$ by substituting $x$ for $x'$, where $x'$ is any free vertex from $B$ that is not in $\eta(V(H))$ ($k\geq |V(H)|$, so such a vertex exists). Note that we can make this move as $x$ is not a root vertex: the triple does not contain any root vertices. In case when $x\notin \eta(V(H))$, we can perform the same rerouting scheme for any $x'$ that is still free in $B$. Now, no path uses $x$, so $\Ii'=((H,u_1,u_2,\ldots,u_t),(T\setminus\{x\},v_1,v_2,\ldots,v_t))$ is a YES instance if $\Ii$ was.
\end{proof}

\section{Dynamic programming routine for rooted immersion}\label{sec:app-DP}

In this subsection we present a dynamic programming routine that for rooted digraph $(H;v_1,v_2,\ldots,v_t)$, rooted semi-complete digraph $(T;w_1,w_2,\ldots,w_t)$, and path decomposition of $T$ computes whether there exists an immersion $\eta$ of $H$ into $T$ such that $\eta(v_i)=w_i$ for all $i\in\{1,2,\ldots,t\}$. First, we explain the terminology used to describe the constructed parts of immersion model. Then, we proceed to the dynamic programming routine itself. But before all of these, let us give some intuition behind the algorithm.

The main idea of our routine is as follows: we will encode the interaction of the model of $H$ with all the already introduced vertices as sequences of paths. Every such path has to end in the separator, but can begin in any forgotten vertex as it may be accessed from a vertex not yet introduced. In general setting the dynamic program would need to remember this very vertex in order to check whether it can be indeed accessed; that would yield an \XP algorithm and, in essence, this is exactly the idea behind the Fradkin and Seymour algorithm. However, if the digraph is semi-complete, then between every not yet introduced vertex and every forgotten vertex there is an edge. Therefore, we do not need to remember the forgotten vertex itself to check accessibility; a marker saying {\emph{forgotten}}, together with information about which markers in fact represent the same vertices, will suffice.

We hope that a reader well-familiar with construction of dynamic programs on various decompositions already has a crude idea about how the computation will be performed. Let us proceed with the details in the next subsections.

\subsection{Terminology}

In the definitions we use two special symbols: $\FF,\UU$; the reader can think of them as an arbitrary element of $A\setminus B$ ({\emph{forgotten}}) and $B\setminus A$ ({\emph{unknown}}), respectively. Let $\iota:V(T)\to (A\cap B)\cup \{\FF,\UU\}$ be defined as follows: $\iota(v)=v$ if $v\in A\cap B$, whereas $\iota(v)=\FF$ for $v\in A\setminus B$ and $\iota(v)=\UU$ for $v\in B\setminus A$.

\begin{definition} Let $P$ be  a path.
A sequence of paths $(P_1,P_2,\ldots,P_h)$ is a {\emph{\ptrace}} of $P$ with respect to $(A,B)$, if $P_i$, $i\in\{1,2,\ldots,h\}$ are all maximal subpaths of $P$ that are fully contained in $A$, while the indices in the sequence reflect their ordering on the path $P$.

Let $(P_1,P_2,\ldots,P_h)$ be the \ptrace{} of $P$ with respect to $(A,B)$. An {\emph{signature}} of $P$ on $(A,B)$ is a sequence of pairs $((b_1,e_1),(b_2,e_2),\ldots,(b_h,e_h))$, where $b_h,e_h\in (A\cap B)\cup\{\FF\}$, such that for every $i\in\{1,2,\ldots,h\}$:
\begin{itemize}
\item $b_i$ is the beginning of path $P_i$ if $b_i\in A\cap B$, and $\FF$ otherwise;
\item $e_i$ is the end of path $P_i$ if $e_i\in A\cap B$, and $\FF$ otherwise.
\end{itemize}
\end{definition}

In other words, $b_i,e_i$ are images of the beginning and the end of the path $P_i$ in the mapping $\iota$. Observe following properties of the introduced notion:
\begin{itemize}
\item Signature of path $P$ on separation $(A,B)$ depends only on its \ptrace; therefore, we can also consider signatures of \ptraces.
\item It can happen that $b_i=e_i\neq \FF$ only if $P_i$ consists of only one vertex $b_i=e_i$.
\item From the definition of separation it follows that only for $i=h$ it can happen that $e_i=\FF$, as there is no arc from $A\setminus B$ to $B\setminus A$.
\item The empty signature
corresponds to $P$ entirely contained in $B\setminus A$.
\item The signature consisting of one term $(\FF,\FF)$ corresponds to $P$ entirely contained in $A\setminus B$.
\end{itemize}

Finally, we are able to encode relevant information about a given model of $H$.

\begin{definition}
Let $\eta$ be an immersion of a digraph $H$ in $T$. An {\emph{immersion signature}} of $\eta$ on $(A,B)$ is a mapping $\rho$ together with equivalence relation $\equiv$ on the set of all the pairs of form $(\FF,e)$ appearing in the image of $\rho$, such that:
\begin{itemize}
\item for every $v\in V(H)$, $\rho(v)=\iota(\eta(v))$;
\item for every $e\in E(H)$, $\rho(e)$ is a signature of $\eta(e)$.
\item $(\FF,e_1)\equiv (\FF,e_1)$ if and only if markers $\FF$ in both pairs correspond to the same forgotten vertex before being mapped by $\iota$.
\end{itemize}
\end{definition}

We remark that the same pair of form $(\FF,e)$ can appear in different signatures; in this case we consider all the appearances as different pairs. The sole purpose of introducing the relation $\equiv$ is forbidding reusage of edges jumping into the forgotten region. We denote the set of possible immersion signatures on separation $(A,B)$ by $\ef{A}{B}$.

We now show that the number of possible immersion signatures is bounded, providing the separation is of small order.

\begin{lemma}\label{lem:etrace-bound}
If $|V(H)|=k, |E(H)|=\ell, |A\cap B|=m$, then the number of possible different immersion signatures on $(A,B)$ is bounded by
$$(m+2)^k \cdot ((m+2)^m\cdot m!\cdot (m+2))^\ell\cdot B_{(m+1)\ell}=2^{O(k\log m+m\ell(\log \ell+\log m))}.$$
Moreover, all of them can be enumerated in $2^{O(k\log m+m\ell(\log \ell+\log m))}$ time.
\end{lemma}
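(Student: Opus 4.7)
The plan is to bound each component of an immersion signature separately, multiply the bounds, and then observe that the enumeration matches the count.

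First I would count the choices for $\rho|_{V(H)}$: since $\rho(v)\in (A\cap B)\cup\{\FF,\UU\}$ for each of the $k$ vertices of $H$, this gives exactly $(m+2)^k$ options. Next, for each edge of $H$ I need to bound the number of possible signatures. The critical structural observation is that the length $h$ of any signature is at most $m+1$. To see this, recall that $\eta(e)$ is a simple directed path; for every $i<h$, the subpath $P_i$ ends at a vertex whose successor on $\eta(e)$ lies in $B\setminus A$, and since the separation forbids arcs from $A\setminus B$ to $B\setminus A$, we must have $e_i\in A\cap B$. By simplicity of $\eta(e)$, the vertices $e_1,\ldots,e_{h-1}$ are pairwise distinct elements of $A\cap B$, hence $h-1\leq m$.

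Once the length is bounded, a direct combinatorial enumeration shows that the number of distinct signatures for a single edge is at most $(m+2)^m\cdot m!\cdot(m+2)$: the $m!$ factor encodes a permutation of the (at most $m$) pairwise distinct $A\cap B$-vertices appearing as internal endpoints, the $(m+2)^m$ factor decorates each such position (e.g.\ whether the adjacent subpath collapses to a single vertex, extends into $A\setminus B$, or is empty), and the extra $(m+2)$ factor handles the initial pair $(b_1,\cdot)$. For the equivalence relation $\equiv$, the total number of $(\FF,e)$-type pairs across all edge-signatures is at most $(m+1)\ell$, since each of the $\ell$ edges contributes a signature of length at most $m+1$; the number of equivalence relations on a set of size $n$ is the Bell number $B_n=2^{O(n\log n)}$. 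Multiplying the three bounds and using $m!\leq m^m$ and $B_{(m+1)\ell}=2^{O(m\ell\log(m\ell))}$ yields the claimed asymptotic $2^{O(k\log m+m\ell(\log\ell+\log m))}$.

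For the enumeration, each of the three components can be listed independently in output-linear time: functions $V(H)\to (A\cap B)\cup\{\FF,\UU\}$ by a straightforward product iterator, per-edge signatures by iterating over length $h\leq m+1$ and all valid pair sequences, and equivalence relations on at most $(m+1)\ell$ elements by any standard set-partition enumeration (e.g., Knuth's Algorithm~7.2.1.5). Taking the Cartesian product of these three enumerators then produces $\ef{A}{B}$ within the claimed time bound. The main obstacle I anticipate is verifying the precise per-edge count $(m+2)^m\cdot m!\cdot(m+2)$: a naïve bound of $(m+1)^{2h}$ signatures of length $h$ summed over $h\leq m+1$ already yields $2^{O(m\log m)}$ per edge, which is asymptotically sufficient, but matching the stated closed form requires a careful bijective encoding of each signature as a permutation of used $A\cap B$-vertices equipped with per-position markers. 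This is bookkeeping rather than a conceptual difficulty.
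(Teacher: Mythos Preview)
Your proposal is correct and follows essentially the same approach as the paper: bound $\rho|_{V(H)}$ by $(m+2)^k$, bound each edge-signature length by $m+1$ via distinctness of the $e_i\in A\cap B$, and use the Bell number for $\equiv$. The only discrepancy is in the interpretation of the per-edge factor $(m+2)^m\cdot m!\cdot(m+2)$: the paper attributes the $(m+2)^m$ to deciding, for each vertex of $A\cap B$, whether it occurs as some $e_i$ and if so what the matching $b_i$ is, the $m!$ to ordering those pairs, and the final $(m+2)$ to the optional terminal pair $(b_h,\FF)$ --- not to the initial pair as you wrote. Your fallback $(m+1)^{2h}$ argument already covers the asymptotics, so this is indeed only bookkeeping.
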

\begin{proof}
The consecutive terms correspond to:
\begin{enumerate}
\item the choice of mapping $\rho$ on $V(H)$;
\item for every edge $e\in E(H)$ the complete information about the signature $\rho(e)$:
\begin{itemize}
\item for every element of $A\cap B$, whether it will be the end of some path in the signature, and in this case, the value of corresponding beginning (a vertex from $A\cap B$ or $\FF$),
\item the ordering of pairs along the signature,
\item and whether to append a pair of form $(b,\FF)$ at the end of the signature $\rho(e)$, and in this case, the value of $b$ (a vertex from $A\cap B$ or $\FF$);
\end{itemize}
\item partition of at most $(m+1)l$ pairs in all the signatures from $\rho(E(H))$ into equivalence classes with respect to $\equiv$.
\end{enumerate}
In the last term we made use of Bell numbers $B_n$, for which a trivial bound $B_n\leq n^n$ applies.

It is easy to check that using all these information one can reconstruct the whole signature. For every object constructed in the manner above we can check in time polynomial in $k,l,m$, whether it corresponds to a possible signature. This yields the enumeration algorithm.
\end{proof}

\subsection{Dynamic programming routine}

We are now ready to prove the main theorem of this section.

\begin{theorem}\label{thm:dpe}
There exists an algorithm that given a distinctly rooted digraph $(H;v_1,v_2,\ldots,v_t)$ with $k$ vertices and $\ell$ edges and a semi-complete distinctly rooted digraph $(T;w_1,w_2,\ldots,w_t)$ together with its path decomposition of width $p$, checks in time $2^{O(k\log p + p\ell(\log \ell+\log p))}|V(T)|$, whether $H$ can be immersed into $T$ while preserving roots.
\end{theorem}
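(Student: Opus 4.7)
The plan is to implement a left-to-right dynamic programming routine along a nice path decomposition of $T$, where the DP states at each bag are exactly the immersion signatures from the preceding subsection. The central insight, already flagged at the start of this appendix, is that semi-completeness of $T$ lets us anonymize forgotten vertices: whenever a partial immersion needs an arc from a current or future vertex into the forgotten region, property (iii) of the path decomposition combined with semi-completeness forces the arc to be oriented into the forgotten region, so it is enough to remember the pattern of coincidences among $\FF$-markers (encoded by $\equiv$) rather than the specific forgotten vertices they stand for.

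First I would convert the given decomposition into a nice one of the same width $p$, with $O(|V(T)|)$ bags $W_1, \ldots, W_r$, each obtained from the previous one by introducing or forgetting a single vertex. For a bag $W_i$ set $A_i = W_1 \cup \cdots \cup W_i$ and $B_i = W_i \cup \cdots \cup W_r$, so that $A_i \cap B_i = W_i$. The DP state at $W_i$ is the subset $\mathcal{T}_i \subseteq \ef{A_i}{B_i}$ of \emph{realizable} signatures, i.e., those induced by some partial mapping $\eta_i \colon V(H) \cup E(H) \to T[A_i]$ that respects the root constraint $\eta_i(v_j) = w_j$ whenever $w_j \in A_i$, whose path fragments inside $A_i$ are pairwise arc-disjoint, and which fully defines $\eta_i(e)$ for every edge $e \in E(H)$ both of whose endpoints already lie in $A_i$. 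By Lemma~\ref{lem:etrace-bound}, $|\mathcal{T}_i| \leq N := 2^{O(k\log p + p\ell(\log \ell + \log p))}$ and the whole set $\ef{A_i}{B_i}$ can be enumerated within time $N$.

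The transitions are standard. At an introduce step $W_i = W_{i-1} \cup \{v\}$, for each $\sigma \in \mathcal{T}_{i-1}$ I enumerate all valid extensions: one first decides whether $v$ becomes the image of some $u \in V(H)$ (forcing $v = w_j$ if $u = v_j$ is a root, and updating $\rho(u)$ from $\UU$ to $v$), and then, for every path signature $\rho(e)$, one enumerates how the arcs of $T$ incident to $v$ and directed into $W_{i-1}$ or into the forgotten region (the latter represented abstractly by $\FF$-markers and $\equiv$-classes) are used either to create new signature pairs or to extend and merge existing ones. At a forget step $W_i = W_{i-1} \setminus \{x\}$, I discard any signature in which some path would leave $A_{i-1}$ at $x$ while the corresponding edge of $H$ has not yet been fully traced --- such an exit would require an arc from $A_i \setminus B_i$ to $B_i \setminus A_i$, forbidden by property (iii) --- and in every surviving signature I replace each occurrence of $x$ in $\rho$ and in each signature pair by $\FF$, declaring all these newly produced markers to form one fresh $\equiv$-class. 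The initial state is $\mathcal{T}_0 = \{(\rho_\UU, \emptyset)\}$ with every vertex of $H$ mapped to $\UU$ and every edge to the empty sequence, and at the final (empty) bag $W_r$ the algorithm returns YES iff $\mathcal{T}_r$ contains a signature in which every $u \in V(H)$ is mapped to $\FF$ and, for every edge $e = (u_1, u_2) \in E(H)$, the signature $\rho(e)$ is a single $(\FF, \FF)$ pair whose two markers are $\equiv$-equivalent to $\rho(u_1)$ and $\rho(u_2)$ respectively.

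The main obstacle is correctness --- showing that every realizable signature can be lifted to a concrete edge-disjoint rooted immersion, and that the transitions preserve realizability. The lifting argument relies on the key property above: in a semi-complete digraph equipped with a path decomposition, every arc between a vertex of $B$ and a vertex of $A \setminus B$ points into $A \setminus B$, so given an abstract signature one may assign a distinct concrete forgotten vertex to each $\equiv$-class and then, for every pair of consecutive signature entries jumping through $B_i \setminus A_i$, choose distinct concrete arcs from the current or future region into the chosen forgotten vertices; simplicity of $T$ then ensures that different choices of target yield edge-disjoint realizations, while arc-disjointness inside $A_i$ is enforced directly by the DP at each introduce step. For the running time, each transition can be carried out in time $\mathrm{poly}(k, \ell, p) \cdot |\mathcal{T}_{i-1}|$ by iterating over input states and their at most $N$ successors, which with $O(|V(T)|)$ bags gives a total of $N^{O(1)} \cdot |V(T)| = 2^{O(k\log p + p\ell(\log \ell + \log p))} |V(T)|$, matching the claimed bound.
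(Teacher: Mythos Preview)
Your approach is essentially the same as the paper's: dynamic programming over a nice path decomposition with states being the immersion signatures from the preceding subsection, handling introduce and forget bags by the same case analysis and arriving at the same $2^{O(k\log p + p\ell(\log\ell+\log p))}|V(T)|$ bound. One small correction: the claim that ``every arc between a vertex of $B$ and a vertex of $A\setminus B$ points into $A\setminus B$'' should read $B\setminus A$ in place of $B$ (separator vertices can have arcs in both directions with forgotten vertices), and the correctness argument is really about showing the transitions preserve realizability of signatures rather than a separate ``lifting'' step, but neither issue affects the validity of the plan.
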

\begin{proof}
Let $W = (W_1, \dots , W_r)$ be a path decomposition of $T$ of width $p$.
We can assume that  $W$ is a \emph{nice} decomposition, i.e.  $W_1=W_r=\emptyset$ and for every $i\in \{2,  \dots, r\}$, $|W_i \setminus W_{i-1}|+|W_{i-1} \setminus W_{i}|=1$. If $W_i=W_{i-1}\cup\{v\}$, we say that $W_i$ {\emph{introduces}} $v$; if $W_{i-1}=W_i\cup\{v\}$, we say that $W_i$ {\emph{forgets}} $v$.
Via standard arguments, it is easy to check that every path decomposition can be adjusted to be nice in linear time. Intuitively, between every two bags of the given decomposition we firstly forget the part that needs to be forgotten, and then introduce the part that needs to be introduced. We also note remark that for every   $i\in \{1,  \dots, r\}$, the sets $A=W_1\cup \cdots \cup W_{i}$ and
 $B=W_i\cup \cdots \cup W_{r}$ form a separation of $T$ with separator $W_i$.

The number of possible signatures on any separation $(A,B)$ from the decomposition is bounded by $2^{O(k\log p + pl(\log \ell+\log p))}$ by Lemma~\ref{lem:etrace-bound}. We show how to compute the values of a binary table $D_{(A,B)}:\ef{A}{B}\to \{\bot,\top\}$ with the following meaning. For $\rho\in \ef{A}{B}$, $D_{(A,B)}[\rho]$ tells, whether there exists a mapping $\overline{\rho}$ with following properties:
\begin{itemize}
\item for every $v\in V(H)$, $\overline{\rho}(v)=\rho(v)$ if $\rho(v)\in (A\cap B)\cup \{\UU\}$ and $\overline{\rho}(v)\in A\setminus B$ if $\rho(v)=\FF$;
\item for every $i=1,2,\ldots,t$, $\overline{\rho}(v_i)=w_i$ if $w_i\in A$ and $\overline{\rho}(v_i)=\UU$ otherwise;
\item for every $e=(v,w)\in E(H)$, $\overline{\rho}(e)$ is a correct path trace with signature $\rho(e)$, beginning in $\overline{\rho}(v)$ if $\overline{\rho}(v)\in A$ and anywhere in $A$ otherwise, ending in $\overline{\rho}(w)$ if $\overline{\rho}(w)\in A$ and anywhere in $A\cap B$ otherwise;
\item path traces $\overline{\rho}(e)$ are edge disjoint.
\end{itemize}
Such mapping $\overline{\rho}$ will be called a {\emph{partial immersion}} of $H$ on $(A,B)$.

For the first separation $(\emptyset,V(T))$ we have exactly one signature with value $\top$, being the signature which maps all the vertices into $\UU$ and all the edges into empty signatures. The result of the whole computation should be the value for the signature for the last separation $(V(T),\emptyset)$, which maps all vertices into $\FF$ and edges into signatures consisting of one pair $(\FF,\FF)$. Therefore, it suffices to show how to fill the values of the table for {\bf{introduce vertex}} step and {\bf{forget vertex}} step.

\medskip
\noindent {\bf{Introduce vertex step.}}

Let us introduce vertex $v\in B\setminus A$ to the separation $(A,B)$, i.e., we consider the new separation $(A\cup\{v\},B)$. Let $\rho\in \ef{A\cup\{v\}}{B}$, we need to show how to compute $D_{(A\cup\{v\},B)}[\rho]$ by careful case study of how the signature $\rho$ interferes with the vertex $v$. If $v=w_i$ for some $i\in \{1,2,\ldots,t\}$, then we consider only such $\rho$ for which $\rho(v_i)=v$; for all the others we fill false values.

\begin{itemize}
\item[{\emph{Case $1$:}}] $v\notin \rho(V(H))$, that is, $v$ is not being mapped on by any vertex of $H$. For every $e\in E(H)$ we consider, how $v$ interferes with $\rho(e)$. We argue that $D_{(A\cup\{v\},B)}[\rho]=\bigvee_{\rho'\in \Gg} D_{(A,B)}[\rho']$ for some set $\Gg$; for every $\rho'\in \Gg$ we require that $\rho'|_{V(H)}=\rho|_{V(H)}$ and for every $e\in E(H)$ we list the possible values of $\rho'(e)$. The set $\Gg$ consists of all the possible signatures on $(A,B)$ that have one of the proper values for every $e\in E(H)$ and also satisfy some additional constraints regarding equivalence relation $\equiv$ that are described further.
\begin{itemize}
\item[{\emph{Case $1.1$:}}] $b_i=v=e_i$ for some pair $(b_i,e_i)\in \rho(e)$. That means, that the signature of $e$ truncated to separation $(A,B)$ must look exactly like $\rho(e)$, but without this subpath of length one. Thus we have one possible value for $\rho'(e)$, being $\rho(e)$ with this pair deleted.
\item[{\emph{Case $1.2$:}}] $b_i=v\neq e_i$ for some pair $(b_i,e_i)\in \rho(e)$ and some $e\in E(H)$. That means that the signature $\rho(e)$ truncated to separation $(A,B)$ has to look the same but for the path corresponding to this very pair, which needs to be truncated by vertex $v$. The new beginning has to be either a vertex in $A\cap B$, or a forgotten vertex from $A\setminus B$. As $T$ is semi-complete and $(A,B)$ is a separation, there is an edge between $v$ and every vertex of $A\setminus B$. Therefore, in $\rho'(e)$ the pair $(b_i,e_i)$ has to be replaced with $(b_i',e_i)$, where $b_i'=\FF$ or $b_i'$ is any vertex of $A\cap B$ such that there is an arc $(v,b_i')$. All the vertices $b_i'\neq \FF$ obtained in this manner have to be pairwise different; moreover, we impose a condition that for all $e\in E(H)$ for which some pair of form $(v,e_i)$ has been truncated to $(\FF,e_i)$, these pairs have to be pairwise non-equivalent with respect to $\equiv$, in order forbid multiple usage of edges going to the forgotten vertices.
\item[{\emph{Case $1.3$:}}] $b_i\neq v=e_i$ for some pair $(b_i,e_i)\in \rho(e)$ and some $e\in E(H)$. Similarly as before, signature $\rho(e)$ truncated to separation $(A,B)$ has to look the same but for the path corresponding to this very pair, which needs to be truncated by vertex $v$. As $(A,B)$ is a separation, the previous vertex on the path has to be in the separator $A\cap B$. Therefore, in $\Gg$ we take into consideration all signatures $\rho'$ such that in $\rho'$ the pair $(b_i,e_i)$ is replaced with $(b_i,e_i')$, where $e_i'$ is any vertex of $A\cap B$ such that there is an arc $(e_i',v)$. Also, all vertices $e_i'$ used in this manner have to be pairwise different.
\item[{\emph{Case $1.4$:}}] $v$ is not contained in any pair in $\rho(e)$. Either $v$ lies in the interior of some subpath from $\overline{\rho}(e)$, or it is not used by $\overline{\rho}(e)$ at all. In the first case the corresponding path in partial immersion on $(A\cup\{v\},B)$ has to be split into two subpaths when truncating the immersion to $(A,B)$. Along this path, the edge that was used to access $v$ had to go from inside $A\cap B$, due to $(A\cup\{v\},B)$ being a separation. However, the edge used to leave $v$ can go to $A\cap B$ as well as to any forgotten vertex from $A\setminus B$, as $(A,B)$ is a separation and $T$ is a semi-complete digraph. In the second case, the signature of the truncated immersion stays the same. Therefore, in $\Gg$ we take into consideration signatures $\rho'$ such that they not differ from $\rho$ on $e$, or in $\rho'(e)$ exactly one pair $(b_i,e_i)$ is replaced with two pairs $(b_i,e_i')$ and $(b_i',e_i)$, where $e_i'\in A\cap B$ with $(e_i',v)\in E(T)$, whereas $b_i'=\FF$ or $b_i'\in A\cap B$ with $(v,b_i')\in E(T)$. Similarly as before, all vertices $b_i'\neq \FF$ used have to be pairwise different and different from those used in case $1.2$, all vertices $e_i'$ used have to be pairwise different and different from those used in case $1.3$, and all the pairs $(\FF,e_i)$ created in this manner have to be pairwise non-equivalent and non-equivalent to those created in case $1.2$.
\end{itemize}
\item[{\emph{Case $2$:}}] $v=\rho(u)$ for some $u\in V(H)$. For every $(u,u')\in E(H)$, $v$ has to be the beginning of the first pair of $\rho((u,u'))$; otherwise, $D_{(A\cup\{v\},B)}=\bot$. Similarly, for every $(u',u)\in E(H)$, $v$ has to be the end of the last pair of $\rho((u',u))$; otherwise, $D_{(A\cup\{v\},B)}=\bot$. Therefore, into $\Gg$ we take all signatures $\rho'$ such that: {\emph{(i)}} $\rho'$ differs on $V(H)$ from $\rho$ only by having $\rho'(u)=v$ where $\rho(u)$ must be equal to $\UU$; {\emph{(ii)}} the first pairs of all $\rho((u,u'))$ are truncated as in the case $1.2$ for all $(u,u')\in E(H)$ (or as in case $1.1$, if the beginning and the end coincide); {\emph{(iii)}} the last pairs of all $\rho((u',u))$ are truncated as in the case $1.3$ for all $(u',u)\in E(H)$ (or as in case $1.1$, if the beginning and the end coincide); {\emph{(iv)}} for all the other edges $e$ we proceed as in the case $1.4$.
\end{itemize}

\medskip\noindent {\bf{Forget vertex step.}}

Let us forget vertex $w\in A\setminus B$ from the separation $(A\cup \{w\},B)$, i.e., we consider the new separation $(A,B)$. Let $\rho\in \ef{A}{B}$; we argue that $D_{(A,B)}[\rho]=\bigvee_{\rho'\in \Gg} D_{(A\cup\{w\},B)}[\rho']$ for some set $\Gg\subseteq \ef{A\cup\{w\}}{B}$, which corresponds to possible extensions of $\rho$ to the previous, bigger separation. We now discuss, which signatures $\rho'$ are needed in $\Gg$ by considering all the signatures $\rho'\in \ef{A\cup\{w\}}{B}$ partitioned with respect to behaviour of the vertex $w$. For every signature $\rho'$ and every $e\in E(H)$ we examine the interaction between $\rho'(e)$ and $w$; in each case we decide, which behaviour of $\rho'(e)$ is correct. In $\Gg$ we take into the account all signatures $\rho'$ behaving properly on all $e\in E(H)$.

\begin{itemize}
\item[{\emph{Case $1$:}}] $w\notin \rho'(V(H))$, that is, $w$ is not an image of a vertex of $H$. In this case we require $\rho'|_{V(H)}=\rho'_{V(H)}$.
\begin{itemize}
\item[{\emph{Case $1.1$:}}] $b_i=w=e_i$ for some pair $(b_i,e_i)\in \rho'(e)$. That means that in the corresponding partial immersions $w$ had to be left to $B\setminus A$; however, in $T$ there is no edge from $w$ to $B\setminus A$ as $(A,B)$ is a separation. Therefore, in $\Gg$ we consider no signatures $\rho'$ behaving in this manner.
\item[{\emph{Case $1.2$:}}] $b_i=w\neq e_i$ for some pair $(b_i,e_i)\in \rho'(e)$. This means that $w$ prolongs some path from the signature $\rho$ in such a manner, that $w$ is its beginning. After forgetting $w$ the beginning of this path belongs to the forgotten vertices; therefore, in $\Gg$ we consider only signatures $\rho'$ in which $\rho'(e)$ differs from $\rho(e)$ on exactly one pair: in $\rho'(e)$ there is $(w,e_i)$ instead of $(\FF,e_i)$ in $\rho$. Moreover, all such pairs $(\FF,e_i)$ that were extended by $w$ have to form one whole equivalence class with respect to $\equiv$ in $\rho$.
\item[{\emph{Case $1.3$:}}] $b_i\neq w=e_i$ for some pair $(b_i,e_i)\in \rho'(e)$. That means that $w$ prolongs some path from the signature $\rho$ in such a manner, that $w$ is its end. As $w\notin\rho'(V(H))$ the path should continue further into $B\setminus A$; however, as $(A,B)$ is a separator, there is no edge between $w$ and $B\setminus A$. We obtain a contradiction; therefore, we take no such signature into account in the set $\Gg$.
\item[{\emph{Case $1.4$:}}] $w$ is not contained in any pair in $\rho'(e)$. In this case $w$ has to be either unused by $\rho'(e)$, or used inside some subpath of $\rho'(e)$. Therefore, we take into consideration in $\Gg$ exactly signatures in which $\rho'(e)=\rho(e)$.
\end{itemize}
\item[{\emph{Case $2$:}}] $w=\rho'(u)$ for some $u\in V(H)$. We consider in $\Gg$ signatures $\rho'$ that differ from $\rho$ in following manner: {\emph{(i)}} $\rho'(u)=w$ where $\rho(u)$ has to be equal to $\FF$; {\emph{(ii)}} for all edges $(u,u')\in E(H)$ the first pair of $\rho'((u,u'))$ is of form $(w,e_1)$, whereas the first pair of $\rho((u,u'))$ is of form $(\FF,e_1)$; {\emph{(iii)}} for all edges $(u',u)\in E(H)$ the last pair of $\rho'((u',u))$ is of form $(b_h,w)$, whereas the last pair of $\rho((u,u'))$ is of form $(b_h,\FF)$; {\emph{(iv)}} for all edges $e\in E(H)$ non-incident with $u$ we follow the same truncation rules as in cases $1.1$-$1.4$. Moreover, all pairs in which $w$ has been replaced with $\FF$ marker have to form one whole equivalence class with respect to $\equiv$.
\end{itemize}

Since updating the table $D_{(A,B)}$ for every separation $(A,B)$ requires at most \linebreak 
$O(|\ef{A}{B}|^2)= 2^{O(k\log p + p\ell(\log \ell+\log p))}$ steps, while the number of separations in the pathwidth decomposition is $O(|V(T)|)$, the theorem follows.
\end{proof}

\section{Counterexample for the irrelevant vertex technique for \\$k$-\textsc{Vertex Disjoint Paths}}\label{sec:app-counter}

We construct a semi-complete digraph $T_n$ with two pairs $(s_1,t_1)$, $(s_2,t_2)$ with the following properties:
\begin{itemize}
\item[(a)] $T_n$ contains an $(n/2-1)$-triple;
\item[(b)] there is exactly one solution to \vertexpath instance $(T_n,\{(s_1,t_1),(s_2,t_2)\})$ in which all the vertices of $V(T_n)$ lie on one of the paths.
\end{itemize}
This example shows that even though a graph can be complicated from the point of view of path decompositions, all the vertices can be relevant.

Let $V(T_n)=\{a_i,b_i: 1\leq i\leq n\}$, where $s_1=a_1$, $t_1=a_n$, $s_2=b_n$ and $t_2=b_1$. Construct following edges:
\begin{itemize}
\item for every $i\in\{1,2,\ldots,n-1\}$ create an edge $(a_i,a_{i+1})$;
\item for every $i,j\in\{1,2,\ldots,n\}, i<j-1$ create an edge $(a_j,a_i)$;
\item for every $i\in\{1,2,\ldots,n-1\}$ create an edge $(b_{i+1},b_i)$;
\item for every $i,j\in\{1,2,\ldots,n\}, i>j+1$ create an edge $(b_j,b_i)$;
\item for every $i\in\{1,2,\ldots,n\}$ create an edge $(a_i,b_i)$;
\item for every $i,j\in\{1,2,\ldots,n\}, i\neq j$ create an edge $(b_j,a_i)$.
\end{itemize}
To see that $T_n$ satisfies (a), observe that 
$$(\{b_1,b_2,\ldots,b_{n/2-1}\},\{a_{n/2+1},a_{n/2+2},\ldots,a_{n}\},\{a_1,a_2,\ldots,a_{n/2-1}\})$$
is a $(n/2-1)$-triple. To prove that (b) is satisfied as well, observe that there is a solution to \vertexpath problem containing two paths $(a_1,a_2,\ldots,a_n)$ and $(b_n,b_{n-1},\ldots,b_1)$ which in total use all the vertices of $T_n$. Assume that there is some other solution and let $k$ be the largest index such that  the path connecting $a_1$ to $a_n$ begins with prefix $(a_1,a_2,\ldots,a_k)$. As the solution is different from the aforementioned one, it follows that $k<n$. Therefore, the next vertex on the path has to be $b_k$, as this is the only unused outneighbor of $a_k$ apart from $a_{k+1}$. But if the path from $a_1$ to $a_n$ already uses $\{a_1,a_2,\ldots,a_k,b_k\}$, we see that there is no edge going from $\{a_{k+1},a_{k+2},\ldots,a_n,b_{k+1},b_{k+2},\ldots,b_n\}$ to $\{b_1,b_2,\ldots,b_{k-1}\}$, so we are already unable to construct a path from $b_n$ to $b_1$. This is a contradiction.

\newpage

\end{document}